\long\def\sidebyside#1#2{%
 \hbox to\textwidth{\vtop{\hsize=.6\textwidth%

 \advance\hsize by -.32\columnsep
\parindent=0pt
\centering

 #1\vskip1sp}\hskip\columnsep\vtop{\hsize=.6\textwidth%
 \advance\hsize by -.32\columnsep
\parindent=0pt
\centering
#2

}\hfill}}
\theoremstyle{definition}
\newtheorem{Thm}{Theorem}
\newtheorem{definition}[Thm]{Definition}
\newtheorem{theorem}[Thm]{Theorem}
\newtheorem{remark}[Thm]{Theorem}
\newcommand {\Ab}{\mathbf{A}}  
\newcommand {\Bb}{\mathbf{B}}
\newcommand {\Cb}{\mathbb{C}}
\newcommand {\Cbf}{\mathbf{C}}
\newcommand {\Db}{\mathbb{D}}
\newcommand {\op}{\mathbf{\oplus}} 
\newcommand {\om}{\mathbf{\ominus}} 
\newcommand {\od}{\mathbf{\otimes}}   
\newcommand {\sqp}{\boxplus}
\newcommand {\sqm}{\boxminus}
\newcommand {\asubM}{\|\ab\|_{\lower.1ex \hbox {\scriptsize {M}}}}
\newcommand {\hsubM}{\|\hb\|_{\lower.1ex \hbox {\scriptsize {M}}}}
\newcommand {\hsubMs}{\|\hb\|_{\lower.1ex \hbox {\scriptsize {M}}}^2}
\newcommand {\hasubM}{\|\hb_{\ab}\|_{\lower.1ex \hbox {\scriptsize {M}}}}
\newcommand {\hbsubM}{\|\hb_{\bb}\|_{\lower.1ex \hbox {\scriptsize {M}}}}
\newcommand {\hcsubM}{\|\hb_{\cb}\|_{\lower.1ex \hbox {\scriptsize {M}}}}
\newcommand {\AsubM}{\|\Ab\|_{\lower.1ex \hbox {\scriptsize {M}}}}
\newcommand {\AsubMs}{\|\Ab\|_{\lower.1ex \hbox {\scriptsize {M}}}^2}
\newcommand {\AosubM}{\|A_1\|_{\lower.1ex \hbox {\scriptsize {M}}}}
\newcommand {\AtsubM}{\|A_2\|_{\lower.1ex \hbox {\scriptsize {M}}}}
\newcommand {\BsubM}{\|\Bb\|_{\lower.1ex \hbox {\scriptsize {M}}}}
\newcommand {\CsubM}{\|\Cb\|_{\lower.1ex \hbox {\scriptsize {M}}}}
\newcommand {\CsubMf}{\|\Cbf\|_{\lower.1ex \hbox {\scriptsize {M}}}}
\newcommand {\CsubMs}{\|\Cb\|_{\lower.1ex \hbox {\scriptsize {M}}}^2}
\newcommand {\asupM}{\|\ab\|^{\lower.1ex \hbox {\scriptsize {M}}}}
\newcommand {\aspM}{\ab^{\lower.1ex \hbox {\scriptsize {M}}}}
\newcommand {\asbM}{\ab_{\lower.1ex \hbox {\scriptsize {M}}}}
\newcommand {\AsupM}{\|\Ab\|^{\lower.1ex \hbox {\scriptsize {M}}}}
\newcommand {\BsupM}{\|\Bb\|^{\lower.1ex \hbox {\scriptsize {M}}}}
\newcommand {\CsupM}{\|\Cb\|^{\lower.0ex \hbox {\scriptsize {M}}}}
\newcommand {\CsupMf}{\|\Cbf\|^{\lower.0ex \hbox {\scriptsize {M}}}}
\newcommand {\asubP}{\|\ab\|_{\lower.1ex \hbox {\scriptsize {P}}}}
\newcommand {\asbP}{\ab_{\lower.1ex \hbox {\scriptsize {P}}}}
\newcommand {\AsubP}{\|\Ab\|_{\lower.1ex \hbox {\scriptsize {P}}}}
\newcommand {\BsubP}{\|\Bb\|_{\lower.1ex \hbox {\scriptsize {P}}}}
\newcommand {\CsubP}{\|\Cb\|_{\lower.1ex \hbox {\scriptsize {P}}}}
\newcommand {\phue}{\lower.3ex \hbox {\scriptsize {UE}}}
\newcommand {\pheu}{\lower.3ex \hbox {\scriptsize {EU}}}
\newcommand {\phum}{\lower.3ex \hbox {\scriptsize {UM}}}
\newcommand {\phmu}{\lower.3ex \hbox {\scriptsize {MU}}}
\newcommand {\phme}{\lower.3ex \hbox {\scriptsize {ME}}}
\newcommand {\phem}{\lower.3ex \hbox {\scriptsize {EM}}}
\newcommand {\lowerkluma}{\lower1.5ex \hbox {\phantom{K}}}
\newcommand {\lowerklumb}{\lower3.5ex \hbox {\phantom{K}}}
\newcommand {\lowerklumc}{\lower7.0ex \hbox {\phantom{K}}}
\newcommand {\pp}{\lower.6ex \hbox {\footnotesize {$P_{1} P_{2}$}}}
 \newcommand {\lowAA}{\lower.3ex \hbox {\scriptsize {$\Ab$}}}
 \newcommand {\lowBB}{\lower.3ex \hbox {\scriptsize {$\Bb$}}}
 \newcommand {\lowCC}{\lower.3ex \hbox {\scriptsize {$\Cb$}}}
 \newcommand {\lowA}{\lower.3ex \hbox {\scriptsize {$A$}}}
 \newcommand {\lowB}{\lower.3ex \hbox {\scriptsize {$B$}}}
 \newcommand {\lowC}{\lower.3ex \hbox {\scriptsize {$C$}}}
 \newcommand {\lowE}{\lower.3ex \hbox {\tiny       {$\rm E$}}}
 \newcommand {\lowM}{\lower.3ex \hbox {\tiny       {$\rm M$}}}
 \newcommand {\lowtM}{\lower.01ex \hbox {\tiny       {$\rm M$}}}
 \newcommand {\lowtE}{\lower.01ex \hbox {\tiny       {$\rm E$}}}
 \newcommand {\lowtU}{\lower.01ex \hbox {\tiny       {$\rm U$}}}
 \newcommand {\lowEC}{\lower.3ex \hbox {\scriptsize {$EC$}}}
 \newcommand {\lowER}{\lower.3ex \hbox {\scriptsize {$ER$}}}
 \newcommand {\lowU}{\lower.3ex \hbox {\tiny       {\rm U}}}
 \newcommand {\lowDU}{\lower.3ex \hbox {\scriptsize {\rm DU}}}
 \newcommand {\lowCU}{\lower.3ex \hbox {\scriptsize {\rm CU}}}
 \newcommand {\lowDM}{\lower.3ex \hbox {\scriptsize {\rm DM}}}
 \newcommand {\lowCM}{\lower.3ex \hbox {\scriptsize {\rm CM}}}
 \newcommand {\lowo}{\lower.3ex \hbox {\scriptsize {\rm 0}}}
 \newcommand {\lowf}{\lower.3ex \hbox {\scriptsize {\rm f}}}
\newcommand {\lowmbpa}{\lower.6ex \hbox {\footnotesize {$\ome\bb\ope\ab$}}}
\newcommand {\uvc}{\displaystyle\frac{\lower.6ex \hbox {$\ub\ccdot\vb$}}{c^2}}
\newcommand {\uvs}{\displaystyle\frac{\lower.6ex \hbox {$\ub\ccdot\vb$}}{s^2}}
\newcommand {\unpuvc}{ \lower.6ex \hbox {$1 + \uvc$} }
\newcommand {\unpuvs}{ \lower.6ex \hbox {$1 + \uvs$} }
\newcommand {\uvcbar}{\displaystyle\frac{\lower.6ex \hbox
            {$\ubar\ccdot\vb$}}{c^2}}
\newcommand {\unpuvcbar}{ \lower.6ex \hbox {$1 + \uvcbar$} }
\newcommand {\vwc}{\displaystyle\frac{\lower.6ex\hbox{$\vb\ccdot\wb$}}{c^2}}
\newcommand {\unpvwc}{ \lower.6ex \hbox {$1 + \vwc$} }
\newcommand {\subE}{\!\lower.1ex \hbox {\tiny E}}
\newcommand {\subG}{\!\lower.1ex \hbox {\tiny G}}
\newcommand {\subH}{\!\lower.1ex \hbox {\tiny H}}
\newcommand {\subEs}{\!\lower.1ex \hbox {\tiny {E,S}}}
\newcommand {\subEt}{\!\lower.1ex \hbox {\tiny {E,2}}}
\newcommand {\subEC}{\!\lower.1ex \hbox {\tiny EC}}
\newcommand {\subU}{\!\lower.1ex \hbox {\tiny U}}
\newcommand {\subM}{\!\lower.1ex \hbox {\tiny M}}
\newcommand {\subC}{\!\lower.1ex \hbox {\tiny C}}
\newcommand {\subbE}{\!\lower.01ex \hbox {\tiny E}}
\newcommand {\subbU}{\!\lower.01ex \hbox {\tiny U}}
\newcommand {\subbC}{\!\lower.01ex \hbox {\tiny C}}
\newcommand {\subbM}{\!\lower.01ex \hbox {\tiny M}}
\newcommand {\subbG}{\!\lower.01ex \hbox {\tiny G}}
\newcommand {\subbH}{\!\lower.01ex \hbox {\tiny H}}
\newcommand {\ope}{\op_{_{\subE}}\!\,}
\newcommand {\ome}{\om_{_{\subE}}\!\,} 
\newcommand {\ode}{\od_{_{\subE}}\!\,} 
\newcommand {\sqpe}{\sqp_{_{\,\subbE}}\!\,}
\newcommand {\opm}{\op_{_{\subM}}\!\,} 
\newcommand {\omm}{\om_{_{\subM}}\!\,} 
\newcommand {\odm}{\od_{_{\subM}}\!\,} 
\newcommand {\sqpm}{\sqp_{_{\,\subbM}}\!\,} 
\newcommand {\ccdot}{\mathbf{\cdot }} 
\newcommand {\ab}{\mathbf{a}}
\newcommand {\bb}{\mathbf{b}}
\newcommand {\cb}{\mathbf{c}}
\newcommand {\db}{\mathbf{d}}
\newcommand {\hb}{\mathbf{h}}
\newcommand {\ub}{\mathbf{u}}
\newcommand {\vb}{\mathbf{v}}
\newcommand {\wb}{\mathbf{w}}
\newcommand {\bz}{\mathbf{0}}
\newcommand {\xb}{\mathbf{x}}
\newcommand {\zerb}{\mathbf{0}}
\newcommand {\ubar}{\bar{\ub}}
\newcommand {\CC}{\mathbb{C}}
\newcommand {\DD}{\mathbb{D}}
\newcommand {\NN}{\mathbb{N}}
\newcommand {\Rb}{\mathbb{R}}
\newcommand {\Rn}{\Rb^n}
\newcommand {\Rctwou}{{\Rb}_{c=1}^2}
\newcommand {\Rstwo}{{\Rb}_{s}^2}
\newcommand {\Rcn}{{\Rb}_{c}^{n}}
\newcommand {\Rsn}{{\Rb}_{s}^{n}}
\newcommand {\Rct}{{\Rb}_{c}^{3}}
\newcommand {\Rctwo}{{\Rb}_{c}^{2}}
\newcommand {\Rt}{\Rb^3}
\newcommand {\Rtwo}{\Rb^2}
\newcommand {\gub}{\gamma_{\ub}^{\phantom{1}}}
\newcommand {\gvb}{\gamma_{\vb}^{\phantom{1}}}
\newcommand {\gvbi}{\gamma_{\vb_i}^{\phantom{O}}}
\newcommand {\gvbie}{\gamma_{\vb_{i,e}}^{\phantom{O}}}
\newcommand {\gwb}{\gamma_{\wb}^{\phantom{1}}}
\newcommand {\gubs}{\gamma_{\ub}^2}
\newcommand {\gvbs}{\gamma_{\vb}^2}
\newcommand {\gupvb}{\gamma_{\ub\op\vb}^{\phantom{1}}}
\newcommand {\gupvbm}{\gamma_{\ub\opm\vb}^{\phantom{1}}}
\newcommand {\gyr}{{\rm gyr}}
\newcommand {\Aut}{{\rm Aut}}
\newcommand {\gyrab}{\gyr[a,b]}
\newcommand {\gammaab}{\gamma_{12}^{\phantom{O}}}
\newcommand {\gyrba}{\gyr[b,a]}
\newcommand {\gyruvb}{\gyr[\ub,\vb]}
\newcommand {\gyrvub}{\gyr[\vb,\ub]}
\newcommand {\vi}{\mathbb{V}}
\newcommand {\vc}{\mathbb{V}_{\!c}}
\newcommand {\vs}{\mathbb{V}_s}
\newcommand {\timess}{\!\times\!}
\newcommand {\ro}{r_{_1}}
\newcommand {\rt}{r_{_2}}
\newcommand {\half}{\textstyle\frac{1}{2}}
\newcommand {\son}{\textstyle{SO}(n)}
\newcommand {\inn}{\hspace{-0.1cm}\in\hspace{-0.1cm}}
 \newcommand {\LAB}{L_{^{AB}}^{\phantom{o}}}
\newcommand {\subEA}{\!\lower.1ex \hbox {\tiny EA}}
 \newcommand {\MAB}{M_{^{AB}}^{\phantom{o}}}
 \newcommand {\MAD}{M_{^{AD}}^{\phantom{o}}}
 \newcommand {\MBC}{M_{^{BC}}^{\phantom{o}}}
 \newcommand {\PAB}{P_{^{AB}}^{\phantom{o}}}
 \newcommand {\MABDC}{M_{^{ABDC}}^{\phantom{o}}}
 \newcommand {\gmA}{\gamma_{_A}}
 \newcommand {\gmB}{\gamma_{_B}}
 \newcommand {\gmC}{\gamma_{_C}}
 \newcommand {\gmD}{\gamma_{_D}}
\begin{document}
\begin{center}
\huge{
M\"obiubs Transformation and Einsten Velocity Addition
in the Hyperbolic Geometry of Bolyai and Lobachevsky
     }
\end{center}
\begin{center}
Abraham A. Ungar\\
Department of Mathematics\\
North Dakota State University\\
Fargo, ND 58105, USA\\
Email: abraham.ungar@ndsu.edu\\
\end{center}

\begin{quotation}
{\bf Abstract}
In this chapter,
dedicated to the 60th  Anniversary of Themistocles M.~Rassias,
M\"obius transformation and Einstein velocity addition meet
in the hyperbolic geometry of Bolyai and Lobachevsky.
It turns out that  M\"obius addition that is extracted from M\"obius transformation
of the complex disc
and Einstein addition from his special theory of relativity
are isomorphic in the sense of gyrovector spaces.
\end{quotation}

\section{Introduction} \label{secc1}

Einstein addition law of relativistically admissible velocities is isomorphic
to M\"obius addition that is extracted from the common M\"obius transformation
of the complex open unit disc. Accordingly, both Einstein addition and
M\"obius addition in the open unit ball of the
Euclidean $n$-space possess the structure of a gyrovector space that
forms a natural powerful generalization of the common vector space structure.
Einstein and M\"obius gyrovector spaces
continue to attract research interest as novel algebraic settings for
hyperbolic geometry, giving rise to the incorporation of
Cartesian coordinates and vector algebra into the study of
the hyperbolic geometry of Bolyai and Lobachevsky \cite{mybook03,mybook05}.
Outstanding novel results and elegant compatibility with well-known results in
hyperbolic geometry make the novel
gyrovector space approach to analytic hyperbolic geometry \cite{mybook04}
an obvious contender for augmenting the traditional way of studying hyperbolic geometry
synthetically.

Professor Themistocles M.~Rassias'
special predilection and contribution to the study of M\"obius transformations
is revealed in his work in the areas of
M\"obius transformations, including
\cite{harukirassias94,harukirassias96,harukirassias98,harukirassias00}
and
\cite{quasi91,service,ahlfors}, along with essential
mathematical developments found, for instance, in
\cite{
rassiaslehto98,
rassiascraio01,
rassiaspras03,
rassiascarat91,
rassiaspras92,
rassiaspras97,
rassiassrivas93}. The latter
contain essential research on geometric
transformations including M\"obius transformations.

The initial purpose of this article,
dedicated to the 60th  Anniversary of Themistocles Rassias,
is to extract M\"obius addition in the ball $\Rcn$
of the Euclidean $n$-space $\Rn$, $n\in\NN$,
from the M\"obius transformation
of the complex open unit disc, and to
demonstrate the hyperbolic geometric isomorphism between the resulting
M\"obius addition and the famous Einstein velocity addition of special relativity
theory.
We will then see that
\begin{enumerate}
\item\label{kamot1}
M\"obius addition in the ball $\Rcn$ forms the algebraic setting for the
Cartesian-Poincar\'e ball model of hyperbolic geometry,
and
\item\label{kamot2}
Einstein addition in the ball $\Rcn$ forms the algebraic setting for the
Cartesian-Beltrami-Klein ball model of hyperbolic geometry,
just as the common
\item\label{kamot3}
vector addition in the space $\Rn$ forms the algebraic setting for the standard
Cartesian model of Euclidean geometry.
\end{enumerate}
Remarkably, Items (1)--(3) enable M\"obius addition in $\Rcn$,
Einstein addition in $\Rcn$, and the standard vector addition in $\Rn$
to be studied comparatively, as in \cite{mybook06}.

Counterintuitively, Einstein velocity addition law of relativistically admissible
velocities is neither commutative nor associative.
The breakdown of commutativity in Einstein addition seemed undesirable to
\'Emile Borel in 1909.
According to the historian of relativity physics Scott Walter,
\cite[Sec.~10]{walter99b}, the famous mathematician
and a former doctoral student of Poincar\'e,
\'Emile Borel (1871-1956) was renowned for his work on the theory of functions,
in which a chair was created for him at the Sorbonne in 1909.
In the years following his appointment he took up the study of relativity theory.
Borel ``fixed'' the seemingly ``defective'' result that Einstein velocity addition
law is noncommutative.
According to Walter, Borel's version of
commutativized relativistic velocity addition involves
a significant modification of Einstein's relativistic velocity composition law.

Contrasting Borel, in this article we commutativize the
Einstein velocity addition law by composing Einstein addition with an appropriate
Thomas precession in a natural way suggested by analogies with the
classical parallelogram addition law
and supported experimentally by cosmological observations of stellar aberration.

Historically, the link between Einstein's special theory of relativity and
the non-Euclidean style was developed during the period 1908-1912 by
Vari{\v c}ak, Robb, Wilson and Lewis, and Borel \cite{walter99b}.
The subsequent development that followed 1912 appeared about 80 years later,
in 2001, as the renowned historian Scott Walter describes
in \cite{walterrev2002}:

\begin{quotation}
Over the years, there have been a handful of attempts to promote the
non-Euclidean style for use in problem solving in relativity and
electrodynamics, the failure of which to attract any substantial
following, compounded by the absence of any positive results must give
pause to anyone considering a similar undertaking.
Until recently, no one was in a position to offer an improvement on the
tools available since 1912. In his [2001] book,
Ungar furnishes the crucial missing
element from the panoply of the non-Euclidean style:
an elegant nonassociative algebraic formalism that fully exploits the
structure of Einstein's law of velocity composition.
The formalism relies on what the author calls the ``missing link''
between Einstein's velocity addition formula and ordinary vector addition:
Thomas precession \ldots
\begin{flushright}
\vspace{-0.4cm}
Scott Walter, 2002 \cite{walterrev2002}
\end{flushright}
\end{quotation}

Indeed, the special relativistic effect known as {\it Thomas precession} is
mathematically abstracted into an operator called a {\it gyrator}, denoted ``gyr''.
The latter, in turn, justifies the prefix ``gyro'' that we extensively use
in {\it gyrolanguage}, where we prefix a gyro to any term that describes a
concept in Euclidean geometry and in associative algebra
to mean the analogous concept in hyperbolic geometry and in nonassociative algebra.
Thus, for instance,
Einstein's velocity addition is neither commutative nor associative,
but it turns out to be both {\it gyrocommutative} and {\it gyroassociative},
giving rise to the algebraic structures known as
gyrogroups and gyrovector spaces.
Remarkably, the mere introduction of the gyrator turns
Euclidean geometry, the geometry of classical mechanics, into
hyperbolic geometry, the geometry of relativistic mechanics.

The breakdown of commutativity in Einstein velocity addition law seemed
undesirable to the famous mathematician \'Emile Borel.
Borel's resulting attempt to ``repair'' the seemingly ``defective'' Einstein velocity
addition in the years following 1912 is described by Walter in
\cite[p.~117]{walter99b}.
Here, however, we see that there is no need to repair Einstein velocity
addition law for being noncommutative since it suggestively
gives rise to the gyroparallelogram law of gyrovector addition, which
turns out to be commutative.
The compatibility of the gyroparallelogram addition law of Einsteinian velocities
with cosmological observations of stellar aberration is explained in
\cite[Chap.~13]{mybook03} and mentioned in \cite[Sec.~10.2]{mybook05}.
The extension of the gyroparallelogram addition law of $k=2$ summands in $\Rcn$ to a
corresponding $k$-dimensional gyroparallelepiped (gyroparallelotope)
addition law of $k>2$ summands is presented in this article
and, with proof, in \cite[Theorem 10.6]{mybook03}.

\section{M\"obius Addition} \label{sece2}


The most general M\"{o}bius transformation of the complex open unit disc
\begin{equation} \label{eqrdfkj01}
\DD= \{z\in \CC :\, |z|<1 \}
\end{equation}
in the complex plane $\CC$
is given by the polar decomposition
\cite{ahlfors73,krantz90},
\begin{equation} \label{eq011}
z \mapsto e^{i\theta} \frac{a+z}{1+\overline{a}z}=e^{i\theta} (a\opm z)
\end{equation}

M\"{o}bius addition $\opm$ in the disc is extracted from \eqref{eq011}, allowing the
generic M\"{o}bius transformation of the disc to be viewed as a
M\"{o}bius left {\it gyrotranslation}
\begin{equation} \label{eq012}
z\mapsto a\opm z=\frac{a+z}{1+\overline{a}z}
\end{equation}
followed by a rotation.
Here $\theta\inn\Rb$ is a real number, $a,z\in\DD$, $\overline{a}$
is the complex conjugate of $a$, and $\opm$ represents M\"obius addition
in the disc.

M\"{o}bius addition $a\opm z$ and subtraction
$a\omm z=a\opm (-z)$ are found useful in the geometric viewpoint of
complex analysis; see, for instance,
\cite{disc99,unification04},\cite[pp.~52--53, 56--57, 60]{krantz90},
and the Schwarz-Pick Lemma in \cite[Theorem 1.4, p.~64]{goebel84}.
However, prior to the appearance of \cite{mybook01} in 2001
these were not considered `addition' and `subtraction'
since it has gone unnoticed that,
being {\it gyrocommutative} and {\it gyroassociative},
they share analogies with the common vector addition and subtraction,
as we will see in the sequel.

M\"{o}bius addition $\opm$ is neither commutative nor associative.
The breakdown of commutativity in
M\"{o}bius addition is "repaired" by the introduction of
a {\it gyrator}
\begin{equation} \label{eq1edjg01}
\gyr:\, \DD\timess\DD \to \Aut (\DD,\opm)
\end{equation}
that generates gyroautomorphisms according to the equation
\begin{equation} \label{eq014}
\gyrab=\frac{a\opm b}{b\opm a}=\frac{1+a\overline{b}}{1+\overline{a}b}
\in \Aut(\DD,\opm)
\end{equation}
where $\Aut (\DD,\opm)$ is the automorphism group of the M\"obius groupoid
$(\DD,\opm)$.
Here a groupoid is a nonempty set with a binary operation,
and an automorphism of the groupoid $(\DD,\opm)$
is a bijective self-map $f:\DD\rightarrow\DD$ of the set $\DD$
that respects its binary operation $\opm$, that is,
$f(a\opm b)=f(a)\opm f(b)$ for all $a,b\in\DD$.
Being gyrations, the automorphisms $\gyrab$ are also called {\it gyroautomorphisms}.

The inverse of the automorphism $\gyrab$ is clearly $\gyrba$,
\begin{equation} \label{eq1edjg02}
\gyr^{-1}[a,b]=\gyrba
\end{equation}

The gyration definition in \eqref{eq014} suggests the following
{\it gyrocommutative law} of M\"obius addition in the disc,
\begin{equation} \label{eq015}
a\opm b=\gyrab(b\opm a)
\end{equation}
The resulting gyrocommutative law \eqref{eq015}
is not terribly surprising since it is generated by definition,
but we are not finished.

Coincidentally, the gyroautomorphism $\gyrab$ that repairs
in \eqref{eq015} the breakdown of commutativity,
repairs the breakdown of associativity in
$\opm$ as well, giving rise to the following
{\it left and right gyroassociative law} of M\"obius addition
 \begin{equation} \label{eq016}
 \begin{split}
a\opm(b\opm z)&=(a\opm b)\opm\gyrab z \\
(a\opm b)\opm z&=a\opm (b\opm\gyrba z)
 \end{split}
 \end{equation}
for all $a,b,z\in\DD$.
Moreover, M\"obius gyroautomorphisms possess their own rich structure
obeying, for instance, the two elegant identities
 \begin{equation} \label{eq016loop}
 \begin{split}
\gyr[a\opm b,b] &= \gyrab \\
\gyr[a,b\opm a] &= \gyrab \\
 \end{split}
 \end{equation}
called the left and the right loop property.

In order to extend M\"obius addition from the disc to the ball,
we identify complex numbers of the complex plane $\CC$ with
vectors of the Euclidean plane $\Rtwo$ in the usual way,
\begin{equation} \label{eq6100}
\CC \ni u = u_1+iu_2 = (u_1,u_2) = \ub \in \Rtwo
\end{equation}
Then
\begin{equation} \label{eq6101}
\begin{array}{c}
\bar{u}v+u\bar{v} = 2\ub\ccdot\vb
\\[8pt]
|u| = \|\ub\|
\end{array}
\end{equation}
give the inner product and the norm in $\Rtwo$,
so that M\"obius addition in the disc $\DD$ of the complex plane $\CC$
becomes M\"obius addition in the disc
\begin{equation} \label{trksh}
\Rctwou=\{\vb\inn\Rtwo:\|\vb\|<s=1\}
\end{equation}
of the Euclidean plane $\Rtwo$. Indeed,
\begin{equation} \label{eq6102}
\begin{split}
\DD \ni
u\op v &:= \frac{u+v}{1+\bar{u}v}                      \\
&= \frac{(1+u\bar{v})(u+v)} {(1+\bar{u}v)(1+u\bar{v})}\\
&= \frac{(1+\bar{u}v+u\bar{v}+|v|^2)u+(1-|u|^2)v}
        {1+\bar{u}v+u\bar{v}+|u|^2|v|^2}              \\
&= \frac{(1+2\ub\ccdot\vb+\|\vb\|^2)\ub+(1-\|\ub\|^2)\vb}
         {1+2\ub\ccdot\vb+\|\ub\|^2\|\vb\|^2}         \\
&=: \ub\op\vb \in \Rctwou
\end{split}
\end{equation}
for all $u,v\in\Db$ and all $\ub,\vb\in\Rctwou$.
The last equation in \eqref{eq6102} is a vector equation, so that
its restriction to the ball of the Euclidean two-dimensional space
is a mere artifact. Suggestively, we thus arrive at the following
definition of M\"obius addition in the ball of any real inner
product space.

\begin{definition}\label{defmobiusadd} 
{\bf (M\"obius Addition in the Ball).}
{\it
Let $\vi = (\vi,+,\ccdot)$
be a real inner product space with a binary operation $+$ and
a positive definite inner product $\ccdot$
(\cite[p.~21]{marsden74}; following \cite{kowalski77}, also known
as Euclidean space) and let
$\vs$ be the $s$-ball of $\vi$,
\begin{equation} \label{eqsball}
\vs = \{\vb\in\vi : \|\vb\|<s\}
\end{equation}
for any fixed $s>0$.
M\"obius addition $\opm$ is a binary operation in $\vs$ given by
the equation
 \begin{equation} \label{eq696}
\ub\opm\vb
 = \frac{(1+\frac{2}{s^2}\ub\ccdot\vb+\frac{1}{s^2}\|\vb\|^2 )\ub
        +(1-\frac{1}{s^2}\|\ub\|^2)\vb}
        {1+\frac{2}{s^2}\ub\ccdot\vb+\frac{1}{s^4}\|\ub\|^2\|\vb\|^2}
 \end{equation}
where $\ccdot$ and $\|\ccdot\|$ are the inner product and norm that the
ball $\vs$ inherits from its space $\vi$.
}
\end{definition}

In the limit of large $s$, $s\rightarrow\infty$, the ball $\vs$ in
Def.~\ref{defmobiusadd} expands to the
whole of its space $\vi$, and M\"obius addition in $\vs$
reduces to the vector addition, +,
in $\vi$. Accordingly, the right hand side of \eqref{eq696}
is known as a M\"obius translation \cite[p.~129]{ratcliffe94}.
An earlier study of M\"obius translation in several dimensions, using the
notation $-\ub\opm\vb=:T_\ub\vb$, is found in \cite{ahlfors81} and
in \cite{ahlfors84}, where it is attributed to Poincar\'e.
Both Ahlfors \cite{ahlfors81} and Ratcliffe \cite{ratcliffe94}, who
studied the M\"obius translation in several dimensions, did not call
it a M\"obius addition since it has gone unnoticed at the time that
M\"obius translation is regulated by algebraic laws analogous to those that
regulate vector addition.

M\"obius addition $\opm$ in the open unit ball $\vs$
of any real inner product space $\vi$ is thus a most natural
extension of M\"obius addition in the open complex unit disc.
Like the M\"obius disc groupoid $(\DD,\opm)$, the M\"obius ball groupoid $(\vs,\opm)$
turns out to be a gyrocommutative gyrogroup,
defined in Defs.~\ref{defroupx}\,--\,\ref{defgyrocomm} in Sec.~\ref{secc3},
as one can check straightforwardly by computer algebra.
Interestingly, the gyrocommutative law of M\"obius addition was already known
to Ahlfors \cite[Eq.~39]{ahlfors81}.
The accompanied gyroassociative law of M\"obius addition, however,
had gone unnoticed.

M\"obius addition
satisfies the
gamma identity
\begin{equation} \label{eq6gupv72}
\gupvbm = \gub\gvb\sqrt
{1+\frac{2}{s^2}\ub\ccdot\vb+\frac{1}{s^4}\|\ub\|^2\|\vb\|^2}
\end{equation}
for all $\ub,\vb\inn\vs$,
where $\gub$ is the gamma factor
\begin{equation} \label{eqeq6gupv72gs}
\gub = \frac{1}{\sqrt{1-\displaystyle\frac{\|\ub\|^2}{s^2}}}
\end{equation}
in the $s$-ball $\vs$.

The gamma factor appears also in Einstein velocity addition of
relativistically admissible velocities, and it is known
in special relativity theory as the Lorentz gamma factor.
The gamma factor $\gvb$ is real if and only if $\vb\inn\vs$.
Hence, the gamma identity \eqref{eq6gupv72} demonstrates that
$\ub,\vb\inn\vs\Rightarrow\ub\opm\vb\inn\vs$
so that, indeed, M\"obius addition $\opm$ is a binary operation
in the ball $\vs$. 

\section{Einstein Velocity Addition} \label{secc2}

Let $c$ be any positive constant, let $(\Rcn,+,\ccdot)$ be the Euclidean $n$-space,
and let
\begin{equation} \label{eqcball}
\Rcn  = \{\vb\in\Rcn: \|\vb\| < c \}
\end{equation}
be the $c$-ball of all relativistically admissible velocities of material
particles. It is the open ball of radius $c$, centered at the
origin of $\Rn$, consisting of all vectors $\vb$
in $\Rn$ with magnitude $\|\vb\|$ smaller than $c$.

Einstein velocity addition in the $c$-ball of all relativistically admissible velocities
is given by the equation
\cite{fock},
\cite[p.~55]{moller52},
\cite[Eq.~2.9.2]{urbantkebookeng},
\cite{mybook01},
\begin{equation} \label{eq01}
{\ub}\op{\vb}=\frac{1}{\unpuvc}
\left\{ {\ub}+ \frac{1}{\gub}\vb+\frac{1}{c^{2}}\frac{\gamma _{{\ub}}}{%
1+\gamma _{{\ub}}}( {\ub}\ccdot{\vb}) {\ub} \right\}
\end{equation}
satisfying the {\it gamma identity}
\begin{equation} \label{eqgupv00}
\gupvb = \gub\gvb\left(1+\frac{\ub\ccdot\vb}{c^2}\right)
\end{equation}
for all $\ub,\vb\in\Rcn $,
where $\gub$ is the gamma factor \eqref{eqeq6gupv72gs},
\begin{equation} \label{v72gs}
\gub = \frac{1}{\sqrt{1-\displaystyle\frac{\|\ub\|^2}{c^2}}}
\end{equation}
in the $c$-ball $\Rcn $.

In physical applications, $\Rn=\Rt$ is the Euclidean 3-space,
which is the space of all classical, Newtonian velocities, and
$\Rcn=\Rct\subset\Rt$ is the $c$-ball of $\Rt$ of all relativistically
admissible, Einsteinian velocities.
Furthermore, the constant $c$ represents in physical applications the
vacuum speed of light.

Einstein addition \eqref{eq01} of relativistically
admissible velocities was introduced by Einstein in his 1905 paper
\cite{einstein05} \cite[p.~141]{einsteinfive}
that founded the special theory of relativity.
We may note here that the Euclidean 3-vector algebra was not so
widely known in 1905 and, consequently, was not used by Einstein.
Einstein calculated in \cite{einstein05} the behavior
of the velocity components parallel and orthogonal to the relative
velocity between inertial systems, which is as close as one can get
without vectors to the vectorial version \eqref{eq01}.

In full analogy with vector addition and subtraction,
we use the abbreviation
$\ub\om\vb=\ub\op(-\vb)$ for Einstein subtraction, so that,
for instance, $\vb\om\vb = \zerb$,
$\om\vb = \zerb\om\vb=-\vb$ and, in particular,
\begin{equation} \label{eq01a}
\om(\ub\op\vb) = \om\ub\om\vb
\end{equation}
and
\begin{equation} \label{eq01b}
\om\ub\op(\ub\op\vb) = \vb
\end{equation}
for all $\ub,\vb$ in the ball. Identity
\eqref{eq01a} is called the {\it automorphic inverse property},
and Identity
\eqref{eq01b} is called the {\it left cancellation law} of Einstein
addition \cite{mybook02,mybook03,mybook04}.
Einstein addition does not obey the immediate right counterpart of the
left cancellation law \eqref{eq01b} since, in general,
\begin{equation} \label{eq01c}
(\ub\op\vb)\om\vb \ne \ub
\end{equation}
However, this seemingly lack of a right cancellation law will be repaired
in \eqref{eq01bb}, following the emergence of a second
gyrogroup binary operation in Def.~\ref{defdual} below,
which we introduce in order to capture analogies with classical results.

In the Newtonian limit of large $c$, $c\rightarrow\infty$, the ball $\Rcn $
expands to the whole of its space $\Rn$, as we see from \eqref{eqcball},
and Einstein addition $\op$ in $\Rcn $
reduces to the common vector addition $+$ in $\Rn$,
as we see from \eqref{eq01} and \eqref{v72gs}.

Einstein addition is noncommutative.
Indeed, $\|\ub\op\vb\|=\|\vb\op\ub\|$, but, in general,
\begin{equation} \label{eqyt01}
\ub\op\vb\ne\vb\op\ub
\end{equation}
$\ub,\vb\in\Rcn $. Moreover, Einstein addition is also nonassociative
since, in general,
\begin{equation} \label{eqyt02}
(\ub\op\vb)\op\wb\ne\ub\op(\vb\op\wb)
\end{equation}
$\ub,\vb,\wb\in\Rcn $.

It seems that following the breakdown of
commutativity and associativity in Einstein addition some mathematical
regularity has been lost in the transition from
Newton velocity addition in $\Rn$ to
Einstein velocity addition \eqref{eq01} in $\Rcn $. This is, however, not the
case since, as we will see in Sec.~\ref{secc3},
the gyrator comes to the rescue
\cite{rassiasrev2008,rassiasrev2010,mybook01,mybook02,mybook03,mybook04,walterrev2002}.
Indeed, we will find in Sec.~\ref{secc3} that the mere introduction of gyrations endows
the Einstein groupoid $(\Rcn,\op)$ with a grouplike rich structure \cite{grouplike}
that we call a {\it gyrocommutative gyrogroup}.
Furthermore, we will find in Sec.~\ref{secc4} that Einstein gyrogroups admit
scalar multiplication that turns them into Einstein gyrovector spaces.
The latter, in turn, form the algebraic setting for the
Cartesian-Beltrami-Klein ball model of hyperbolic geometry, just as
Euclidean vector spaces $\Rn$ form the algebraic setting for the
standard Cartesian model of Euclidean geometry.

When the nonzero vectors $\ub,\vb\in\Rcn\subset\Rn$ are parallel
in $\Rn$, $\ub \| \vb$, that is, $\ub=\lambda\vb$
for some $0\ne\lambda\in\Rb$, Einstein addition reduces to the Einstein
addition of parallel velocities \cite[p.~50]{whittaker49},
\begin{equation} \label{eq1pfck03}
\ub\op\vb = \frac{\ub+\vb}{1+\frac{1}{c^2}\|\ub\|\|\vb\|}, \qquad
\ub \| \vb
\end{equation}
which was confirmed experimentally by Fizeau's 1851 experiment \cite{miller81}.
Owing to its simplicity, some books on special relativity present
Einstein velocity addition in its restricted form \eqref{eq1pfck03}
rather than its general form \eqref{eq01}.

The restricted Einstein addition \eqref{eq1pfck03}
is both commutative and associative.
Accordingly, the restricted Einstein addition is a group operation,
as Einstein noted in \cite{einstein05}; see \cite[p.~142]{einsteinfive}.
In contrast, Einstein made no remark about group properties of his addition law
of velocities that need not be parallel. Indeed, the general Einstein
addition \eqref{eq01} is not a group operation but, rather, a gyrocommutative
gyrogroup operation, a structure that was discovered more than
80 years later, in 1988 \cite{parametrization}, and is presented in
Defs.~\ref{defroupx}\,--\,\ref{defgyrocomm} in Sec.~\ref{secc3}.

\section{Einstein Gyrogroups and Gyrations}
\label{secc3}

A description of the 3-space rotation, which since 1926 \cite{thomas26} is named
after Thomas,
is found in Silberstein's 1914 book \cite{silberstein14}.
In 1914 Thomas precession did not have a name, and
Silberstein called it in his 1914 book a ``certain space-rotation''
\cite[p.~169]{silberstein14}.
An early study of Thomas precession, made by the famous
mathematician \'Emile Borel in 1913, is described in his 1914 book
\cite{borel14} and, more recently, in \cite{stachel95}.
According to Belloni and Reina \cite{belloni86},
Sommerfeld's
route to Thomas precession dates back to 1909.
However, prior to Thomas' discovery the relativistic peculiar
3-space rotation had a most uncertain physical status
\cite[p.~119]{walter99b}.
The only knowledge Thomas had in 1925 about the peculiar relativistic
gyroscopic precession \cite{jonson07}
came from De Sitter's formula describing the
relativistic corrections for the motion of the moon, found in
Eddington's book \cite{eddington24}, which was just published at that time
\cite[Sec.~1, Chap.~1]{mybook01}.

The physical significance of the peculiar rotation in
special relativity emerged in 1925 when Thomas relativistically re-computed
the precessional frequency of the doublet separation in the fine structure
of the atom, and thus rectified a missing factor of 1/2. This correction has
come to be known as the {\it Thomas half} \cite{chrysos06}.
Thomas' discovery of the relativistic precession of the
electron spin on Christmas 1925 thus led to the understanding of the
significance of the relativistic
effect that became known as {\it Thomas precession}.
Llewellyn Hilleth Thomas died in
Raleigh, NC, on April 20, 1992.
A paper \cite{bloch02} dedicated to the centenary of the birth of
Llewellyn H. Thomas (1902\,--\,1992) describes the
Bloch gyrovector of quantum information and computation.

For any $\ub,\vb\inn\Rcn $, let $\gyruvb : \Rcn \rightarrow\Rcn $ be the self-map of $\Rcn $
given in terms of Einstein addition $\op$, \eqref{eq01},
by the equation \cite{parametrization}
\begin{equation} \label{eq004}
\gyruvb \wb = \om(\ub\op\vb)\op\{\ub\op(\vb\op\wb)\}
\end{equation}
for all $\wb\inn\Rcn $.
The self-map $\gyruvb$ of $\Rcn$, which takes $\wb\inn\Rcn $ into
$\om(\ub\op\vb)\op\{\ub\op(\vb\op\wb)\}\inn\Rcn $,
is the gyration generated by $\ub$ and $\vb$.
Being the mathematical abstraction of the relativistic Thomas precession,
The gyration  has an interpretation in hyperbolic geometry
\cite{vermeer05}
as the negative hyperbolic triangle defect \cite[Theorem 8.55]{mybook03}.

In the Newtonian limit, $c\rightarrow\infty$, Einstein addition $\op$ in $\Rcn$
reduces to the common vector addition + in $\Rn$, which is associative.
Accordingly, in this limit the gyration $\gyruvb$ in \eqref{eq004} reduces to the
identity map of $\Rn$, called the trivial map.
Hence, as expected, Thomas gyrations
$\gyruvb$, $\ub,\vb\inn\Rcn $,
vanish (that is, they become {\it trivial}) in the Newtonian limit.

It follows from the gyration equation \eqref{eq004} that gyrations
measure the extent to which Einstein addition
deviates from associativity, where associativity
corresponds to trivial gyrations.

The gyration equation \eqref{eq004} can be manipulated
(with the help of computer algebra) into the equation
\begin{equation} \label{hdge1ein}
\gyruvb\wb = \wb + \frac{A\ub+B\vb}{D}
\end{equation}
where
\begin{equation} \label{hdgej2ein}
\begin{split}
 A &=-\frac{1}{c^2}\frac{\gubs}{(\gub+1)} (\gvb-1) (\ub\ccdot\wb)
 +
 \frac{1}{c^2}\gub\gvb (\vb\ccdot\wb)
\\[8pt] & \phantom{=} ~+
 \frac{2}{c^4} \frac{\gubs\gvbs}{(\gub+1)(\gvb+1)} (\ub\ccdot\vb) (\vb\ccdot\wb)
\\[8pt]
B &=- \frac{1}{c^2}
\frac{\gvb}{\gvb+1}
\{\gub(\gvb+1)(\ub\ccdot\wb) + (\gub-1)\gvb(\vb\ccdot\wb) \}
\\[8pt]
D &= \gub\gvb(1+ \frac{\ub\ccdot\vb}{c^2}) +1 = \gamma_{\ub\op\vb}^{\phantom{O}} + 1 > 1
\end{split}
\end{equation}
for all $\ub,\vb,\wb\in\Rcn $.

Allowing $\wb\in\Rn\supset\Rcn $ in \eqref{hdge1ein}\,--\,\eqref{hdgej2ein},
that is, extending the domain of $\wb$ from $\Rcn$ to $\Rn$,
gyrations $\gyr[\ub,\vb]$
are expendable to linear maps of $\Rn$ for all $\ub,\vb\in\Rcn $.

In each of the three special cases when
(i) $\ub=\zerb$, or
(ii) $\vb=\zerb$, or
(iii) $\ub$ and $\vb$ are parallel
in $\Rcn\subset\Rn$, $\ub\|\vb$, we have
$A\ub+B\vb=\zerb$ so that $\gyr[\ub,\vb]$ is trivial,
\begin{equation} \label{sprdhein}
\begin{split}
\gyr[\zerb,\vb]\wb &= \wb  \\
\gyr[\ub,\zerb]\wb &= \wb \\
\gyr[\ub,\vb]\wb &= \wb , \hspace{1.2cm} \ub\|\vb
\end{split}
\end{equation}
for all $\ub,\vb\in\Rcn$, and all $\wb\in\Rn$.

It follows from \eqref{hdge1ein} that
\begin{equation} \label{eq1ffmznein}
\gyr[\vb,\ub](\gyr[\ub,\vb]\wb) = \wb
\end{equation}
for all $\ub,\vb\in\Rcn $, $\wb\in\Rn$, so that gyrations are invertible
linear maps of $\Rn$,
the inverse of $\gyr[\ub,\vb]$ being $\gyr[\vb,\ub]$
for all $\ub,\vb\in\Rcn $.

Gyrations keep the inner product of
elements of the ball $\Rcn $ invariant, that is,
\begin{equation} \label{eq005}
\gyruvb\ab\ccdot\gyruvb\bb = \ab\ccdot\bb
\end{equation}
for all $\ab,\bb,\ub,\vb\inn\Rcn $. Hence, $\gyruvb$ is an
{\it isometry} of $\Rcn $,
keeping the norm of elements of the ball $\Rcn $ invariant,
\begin{equation} \label{eq005a}
\|\gyruvb \wb\| = \|\wb\|
\end{equation}
Accordingly, $\gyruvb$ represents a rotation of the ball $\Rcn $ about its origin for
any $\ub,\vb\inn\Rcn $.

The invertible self-map $\gyruvb$ of $\Rcn $ respects Einstein addition in $\Rcn $,
\begin{equation} \label{eq005b}
\gyruvb (\ab \op \bb) = \gyruvb\ab \op \gyruvb\bb
\end{equation}
for all $\ab,\bb,\ub,\vb\inn\Rcn $,
so that $\gyruvb$ is an automorphism of the Einstein groupoid $(\Rcn ,\op)$.
We recall that an automorphism of a groupoid $(\Rcn ,\op)$
is a bijective self-map of the groupoid $\Rcn $
that respects its binary operation, that is, it satisfies \eqref{eq005b}.
Under bijection composition the automorphisms
of a groupoid $(\Rcn ,\op)$
form a group known as the automorphism group, and denoted
$\Aut(\Rcn ,\op)$.
Being special automorphisms, gyrations
$\gyruvb \inn \Aut(\Rcn ,\op)$, $\ub,\vb\inn\Rcn $,
are also called {\it gyroautomorphisms}, $\gyr$ being the gyroautomorphism
generator called the {\it gyrator}.

The gyroautomorphisms $\gyruvb$ regulate Einstein addition in the ball $\Rcn $,
giving rise to the following nonassociative algebraic laws that ``repair''
the breakdown of commutativity and associativity in Einstein addition:
 \begin{alignat}{2}\label{laws00}
 \notag
  \ub\op\vb & \!=\! \gyruvb(\vb\op\ub) &&\hspace{0.8cm}\text{Gyrocommutativity}\\
 \notag
  \ub\op(\vb\op\wb)& \!=\! (\ub\op\vb)\op\gyruvb\wb&&\hspace{0.8cm}
  \text{Left Gyroassociativity} \\
 \notag
  (\ub\op\vb)\op\wb& \!=\! \ub\op(\vb \op\gyrvub\wb) &&\hspace{0.8cm}
  \text{Right Gyroassociativity} \\
 \end{alignat}
for all $\ub,\vb,\wb\inn\Rcn $.
It is clear from the identities in \eqref{laws00} that
the gyroautomorphisms $\gyruvb$ measure of the failure of commutativity and associativity
in Einstein addition.

Owing to the gyrocommutative law in \eqref{laws00},
the gyrator is recognized as the familiar Thomas precession of special relativity theory.
The gyrocommutative law was already known to Silberstein in 1914
\cite{silberstein14} in the following sense.
The Thomas precession generated by $\ub,\vb\inn\Rct$
is the unique rotation
that takes $\vb\op\ub$ into $\ub\op\vb$ about an axis perpendicular to the
plane of $\ub$ and $\vb$ through an angle $< \pi$ in $\Rcn$,
thus giving rise to the gyrocommutative law.
Obviously, Silberstein did not use the terms
``Thomas precession'' and ``gyrocommutative law'' since
these terms have been coined
later, respectively, following Thomas' 1926 paper \cite{thomas26},
and by the author in 1991 \cite{grouplike,gaxioms} following the discovery
of the gyrocommutative and the gyroassociative laws of
Einstein addition in \cite{parametrization}.
Thus, contrasting the discovery before 1914 of what we presently call
the gyrocommutative law of Einstein addition,
the gyroassociative laws of Einstein addition, left and right,
were discovered by the author about 75 years later, in 1988 \cite{parametrization}.

Thomas precession has purely kinematical origin,
as emphasized in \cite{ungarthomas06}, so that the presence of
Thomas precession is not connected with the action of any force.

A most important and useful property of gyrations is the so called
{\it loop property} (left and right),
\begin{equation} \label{loops00}
\begin{array}{cl}
{\gyr}[\ub\op\vb,\vb]=\gyruvb
& \hbox{ \ \ \ \ \ \ Left Loop Property} \\[3pt]
{\gyr}[\ub,\vb\op\ub]=\gyruvb
& \hbox{ \ \ \ \ \ \ Right Loop Property} \\
\end{array}
\end{equation}
for all $\ub,\vb\inn\Rcn $.
The left loop property will prove useful in \eqref{eqtghj04} below
in solving a basic gyrogroup equation.

Identities \eqref{laws00}\,--\,\eqref{loops00} are the basic identities
of the gyroalgebra of Einstein addition. They can be verified straightforwardly
by computer algebra, as explained in \cite[Sec.~8]{mybook01}.

The grouplike groupoid $(\Rcn ,\op)$ that regulates Einstein addition,
$\op$, in the ball $\Rcn $ of the Euclidean $n$-space $\Rn$ is a
{\it gyrocommutative gyrogroup} called an {\it Einstein gyrogroup}.
Einstein gyrogroups and gyrovector spaces are studied in
\cite{mybook01,mybook02,mybook03,mybook04}.
Gyrogroups are not peculiar to Einstein addition \cite{mbtogyp08}.
Rather, they are abound in the theory of groups \cite{tuvalungar01,tuvalungar02,feder03},
loops \cite{issa99},
quasigroup \cite{issa2001,kuznetsov03},
and Lie groups \cite{kasparian04,kikkawa75,kikkawa99}.

Thus, the type of structure arising in the study of Einstein velocity addition (and
M\"obius addition) is of rather frequent occurrence
and hence merits an axiomatic approach.
Taking the key features of Einstein velocity addition law as axioms,
and guided by analogies
with groups, we are led to the following formal definition of gyrogroups.

\begin{definition}\label{defroupx}
{\bf (Gyrogroups).}
{\it
A groupoid is a non-empty set with a binary operation.
A groupoid $(G , \op )$
is a gyrogroup if its binary operation satisfies the following axioms.
In $G$ there is at least one element, $0$, called a left identity, satisfying

\noindent
(G1) \hspace{1.2cm} $0 \op a=a$

\noindent
for all $a \inn G$. There is an element $0 \inn G$ satisfying axiom $(G1)$ such
that for each $a\inn G$ there is an element $\om a\inn G$, called a
left inverse of $a$, satisfying

\noindent
(G2) \hspace{1.2cm} $\om a \op a=0\,.$

\noindent
Moreover, for any $a,b,c\inn G$ there exists a unique element $\gyr[a,b]c \inn G$
such that the binary operation obeys the left gyroassociative law

\noindent
(G3) \hspace{1.2cm} $a\op(b\op c)=(a\op b)\op\gyrab c\,.$

\noindent
The map $\gyr[a,b]:G\to G$ given by $c\mapsto \gyr[a,b]c$
is an automorphism of the groupoid $(G,\op)$, that is,

\noindent
(G4) \hspace{1.2cm} $\gyrab\inn\Aut (G,\op) \,,$

\noindent
and the automorphism $\gyr[a,b]$ of $G$ is called
the gyroautomorphism, or the gyration, of $G$ generated by $a,b \inn G$.
The operator $\gyr : G\times G\rightarrow\Aut (G,\op)$ is called the
gyrator of $G$.
Finally, the gyroautomorphism $\gyr[a,b]$ generated by any $a,b \inn G$
possesses the left loop property

\noindent
(G5) \hspace{1.2cm} $\gyrab=\gyr [a\op b,b] \,.$
}
\end{definition}

The first pair of the gyrogroup axioms are like the group axioms.
The last pair present the gyrator axioms and the middle axiom links the two pairs.

As in group theory, we use the notation
$a \om b = a \op (\om b)$
in gyrogroup theory as well.

In full analogy with groups, some gyrogroups are gyrocommutative according to the
following definition.

\begin{definition}\label{defgyrocomm}
{\bf (Gyrocommutative Gyrogroups).}
{\it
A gyrogroup $(G, \oplus )$ is gyrocommutative if
its binary operation obeys the gyrocommutative law

\noindent
(G6) \hspace{1.2cm} $a\oplus b=\gyrab(b\oplus a)$

\noindent
for all $a,b\inn G$.
}
\end{definition}

First gyrogroup properties are studied in \cite[Chap.~1]{mybook05},
and more gyrogroup theorems are studied in \cite{mybook01,mybook02,mybook03}.
Thus, for instance, as in group theory, any gyrogroup possesses a unique
identity element which is both left and right, and any element of a gyrogroup
possesses a unique inverse.

In order to illustrate the power and elegance of the gyrogroup structure,
we solve below the two basic gyrogroup equations
\eqref{unk1} and \eqref{eqtghj03}.

Let us consider the gyrogroup equation
\begin{equation} \label{unk1}
\ab\op\xb=\bb
\end{equation}
in a gyrogroup $(G,\op)$ for the unknown $\xb$. If $\xb$ exists, then
by the right gyroassociative law \eqref{laws00}
and by \eqref{sprdhein}, we have
 \begin{equation} \label{solve1}
 \begin{split}
\xb &= \bz\op \xb \\
&=(\om \ab\op\ab)\op \xb \\
&= \om \ab\op (\ab\op \gyr[\ab,\om \ab]\xb) \\
&= \om \ab\op (\ab\op               \xb) \\
&= \om \ab\op\bb
 \end{split}
 \end{equation}
noting that $\gyr[\ab,\om\ab]$ is trivial by \eqref{sprdhein}.

Thus, if a solution to \eqref{unk1} exists, it must be given uniquely by 
\begin{equation} \label{sln1}
\xb=\om \ab\op\bb
\end{equation}

Conversely, if $\xb=\om \ab\op\bb$, then $\xb$ is indeed a solution
to \eqref{unk1} since
by the left gyroassociative law and \eqref{sprdhein} we have
 \begin{equation} \label{eq1ddf94}
 \begin{split}
\ab\op\xb&= \ab\op(\om \ab\op\bb)  \\
&= (\ab\op(\om \ab))\op{\gyr}[\ab,\om \ab] \bb  \\
&= \bz\op\bb \\
&= \bb
 \end{split}
 \end{equation}

Substituting the solution \eqref{sln1} in its equation \eqref{unk1}
and replacing $\ab$ by $\om \ab$
we recover the left cancellation law \eqref{eq01b} for Einstein addition
\begin{equation} \label{leftcanc}
\om \ab\op(\ab\op\bb) = \bb
\end{equation}

The gyrogroup operation (or, addition) of any gyrogroup
has an associated dual operation, called the {\it gyrogroup cooperation}
(or, {\it coaddition}), which is defined below.

\begin{definition}\label{defdual}
{\bf (The Gyrogroup Cooperation (Coaddition)).}
{\it
Let $(G,\op)$ be a gyrogroup with gyrogroup operation (or, addition) $\op$.
The gyrogroup cooperation (or, coaddition) $\sqp$ is a second
binary operation in $G$ given by the equation
\begin{equation} \label{eqdfhn01}
\ab\sqp \bb = \ab \op\gyr[\ab,\om \bb]\bb
\end{equation}
for all $\ab,\bb\in G$.
}
\end{definition}

Replacing $b$ by $\om b$ in \eqref{eqdfhn01}
we have the {\it cosubtraction} identity
\begin{equation} \label{eqdfhn01b}
\ab\sqm \bb := \ab\sqp(\om \bb) = \ab \om\gyr[\ab,\bb]\bb
\end{equation}
for all $\ab,\bb\in G$.

To motivate the introduction of the gyrogroup cooperation and
to illustrate the use of the left loop property $(G5)$, we solve
the equation
\begin{equation} \label{eqtghj03}
\xb\op \ab = \bb
\end{equation}
for the unknown $\xb$ in a gyrogroup $(G,\op)$.

Assuming that a solution
$\xb$ to \eqref{eqtghj03} exists, we have the following chain of equations
 \begin{equation} \label{eqtghj04}
 \begin{split}
 \xb &= \xb \op \bz \\
&= \xb \op( \ab \om  \ab )\\
&=( \xb \op  \ab )\op\gyr [ \xb , \ab ](\om  \ab )\\
&=( \xb \op  \ab )\om\gyr [ \xb , \ab ] \ab \\
&=( \xb \op  \ab )\om\gyr [ \xb \op  \ab , \ab ] \ab \\
&=\bb\om\gyr[\bb,\ab]\ab \\
&=\bb\sqm \ab
 \end{split}
 \end{equation}
where the gyrogroup cosubtraction, \eqref{eqdfhn01b},
which captures here an obvious analogy, comes into play.
Hence, if a solution $\xb$ to the gyrogroup equation \eqref{eqtghj03}
exists, it must be given uniquely by \eqref{eqtghj04}. One can show
that the latter is indeed a solution to \eqref{eqtghj03} \cite[Sec.~2.4]{mybook03}.

The gyrogroup cooperation is introduced into gyrogroups in order
to  capture useful analogies between gyrogroups and groups, and
to uncover duality symmetries
with the gyrogroup operation.
Thus, for instance, the gyrogroup cooperation uncovers the seemingly
missing right counterpart of the left cancellation law \eqref{eq01b},
giving rise to the right cancellation law,
\begin{equation} \label{eq01bb}
(\bb\sqm\ab) \op \ab = \bb
\end{equation}
for all $\ab,\ab$ in $G$, which is obtained by substituting the result of
\eqref{eqtghj04} into \eqref{eqtghj03}.

Remarkably, the right cancellation law \eqref{eq01bb} can be dualized,
giving rise to the dual right cancellation law
\begin{equation} \label{eq01bbb}
(\bb\om\ab) \sqp \ab = \bb
\end{equation}

As an example, and for later reference, we note that it follows from
the right cancellation law \eqref{eq01bb} that
\begin{equation} \label{hufns}
\db = (\bb\sqp \cb)\om\ab  \hspace{0.6cm} \Longleftrightarrow \hspace{0.6cm}
\bb\sqp \cb = \db\sqp\ab 
\end{equation}
for $\ab,\bb,\cb,\db$ in any gyrocommutative gyrogroup $(G,\op)$.

An elegant gyrocommutative gyrogroup identity that involves the
gyrogroup cooperation, verified in \cite[Theorem 3.12]{mybook03}, is
\begin{equation} \label{fromu1}
\ab\op(\bb\op\ab) = \ab\sqp(\ab\op\bb)
\end{equation}

A gyrogroup cooperation is commutative if and only if the gyrogroup is
gyrocommutative
\cite[Theorem 3.4]{mybook02}
\cite[Theorem 3.4]{mybook03}.
Hence, in particular, Einstein coaddition is commutative.
Indeed, Einstein coaddition, $\sqp$, in an Einstein gyrogroup $(\Rcn,\op)$,
defined in \eqref{eqdfhn01}, can be written as
\cite[Eq.~3.195]{mybook03}
\begin{equation} \label{eincoadd}
\begin{split}
\ub \sqp \vb &=
\frac{\gub+\gvb}{\gubs+\gvbs+\gub\gvb(1+\frac{\ub\ccdot\vb}{s^2})-1}
(\gub\ub + \gvb\vb)
\\[6pt] &=
\frac{\gub+\gvb}{(\gub+\gvb)^2-(\gamma_{\ub\om \vb}^{\phantom{O}} + 1)}
(\gub\ub + \gvb\vb)
\\[6pt] &=
2 \od \frac{\gub\ub+\gvb\vb}{\gub+\gvb}
\\[6pt] &=
2 \od \frac{\gub\ub+\gvb\vb}{2+(\gub-1)+(\gvb-1)}
\end{split}
\end{equation}
$\ub,\vb\in G$,
demonstrating that it is commutative, as expected.
The symbol $\od$ in \eqref{eincoadd} represents {\it scalar multiplication} so that,
for instance, $2\od\vb=\vb\op\vb$, for all $\vb$ in a gyrogroup $(G,\op)$,
as explained in Sec.~\ref{secc4} below.
It turns out that Einstein coaddition $\sqp$ is more than just a
commutative binary operation in the ball. Remarkably, it forms the
(hyperbolic) {\it gyroparallelogram addition law} in the ball,
illustrated in Fig.~\ref{fig190k4m}.

The extreme sides of \eqref{eincoadd} suggest that the application of
Einstein coaddition to three summands is given by the following
{\it gyroparallelepiped addition law}
\begin{equation} \label{eincoagf}
\ub \sqp_3 \vb \sqp_3 \wb
:=2 \od \frac{\gub\ub+\gvb\vb+\gwb\wb}{2+(\gub-1)+(\gvb-1)+(\gwb-1)}
\end{equation}
$\ub,\vb,\wb\in G$, the ternary operation $\sqp_3$ being
{\it Einstein coaddition of order three}.

Einstein coaddition \eqref{eincoagf} of three summands is
commutative and associative in the generalized sense that it is a
symmetric function of the summands.
The gyroparallelepiped that results from the
gyroparallelepiped law \eqref{eincoagf} is studied in detail
in \cite[Secs.~10.9--10.12]{mybook03}.

We may note that by \eqref{eincoadd}\,--\,\eqref{eincoagf} we have
$\ub \sqp_3 \vb \sqp_3 \zerb = \ub \sqp \vb$, as expected.
However, unexpectedly we have
$\ub \sqp_3 \vb \sqp_3 (\om\vb) \ne \ub$, in general.

The extension of \eqref{eincoagf}
to the Einstein coaddition of
$k$ summands, $k>3$ , is now straightforward, giving rise to the
higher dimensional {\it gyroparallelotope law} in $\Rcn$,
\begin{equation} \label{ejsnk06}
\vb_1 \sqp_k \vb_2 \sqp_k \, \ldots \, \sqp_k \vb_k
:= 2\od \frac{\sum_{i=1}^{k}\gvbi\vb_i}{2+\sum_{i=1}^{k}(\gvbi -1)}
\end{equation}
$\vb_k\in G$, $k\in\NN$, where $\sqp_k$ is a $k$-ary operation
called {\it Einstein coaddition of order} $k$.
An interesting study of parallelotopes in Euclidean geometry is found in
\cite{coxeterpolytopes}.

In the Euclidean limit $c\rightarrow\infty$,
(i) gamma factors tend to 1, and
(ii) the hyperbolic scalar multiplication, $\od$, of a
gyrovector (see Sec.~\ref{secs5}) by 2
tends to the common scalar multiplication of a vector by 2.
Hence, in the Euclidean limit, the right-hand side of \eqref{ejsnk06}
tends to the vector sum $\sum_{i=1}^{k}\vb_i$ in $\Rn$, as expected.

\section{Einstein Gyrovector Spaces}\label{secc4}

Let $k\od\vb$ be the Einstein addition of $k$ copies of $\vb\in \Rcn$,
that is
$k\od\vb=\vb\op\vb\ldots\op\vb$ ($k$ terms). Then,
it follows from Einstein addition \eqref{eq01} and straightforward algebra that
\cite{abslor92}
\begin{equation} \label{duhnd1}
k\od \vb = c \frac
 {\left(1+\displaystyle\frac{\| \vb \|}{c}\right)^k
- \left(1-\displaystyle\frac{\| \vb \|}{c}\right)^k} 
 {\left(1+\displaystyle\frac{\| \vb \|}{c}\right)^k
+ \left(1-\displaystyle\frac{\| \vb \|}{c}\right)^k}
\frac{\vb}{\| \vb\|}
\end{equation}

The definition of scalar multiplication in an Einstein gyrovector space
requires analytically continuing $k$ off the positive integers, thus
obtaining the following definition \cite{gaxioms}:

\begin{definition}\label{defgvspace}
An Einstein gyrovector space $(\Rcn,\op,\od)$
is an Einstein gyrogroup $(\Rcn,\op)$, $\Rcn\subset\Rcn$, with scalar multiplication
$\od$ given by the equation
 \begin{equation} \label{eqmlt03}
 \begin{split}
r\od\vb&= c \frac
 {\left(1+\displaystyle\frac{\|\vb\|}{c}\right)^r
- \left(1-\displaystyle\frac{\|\vb\|}{c}\right)^r}
 {\left(1+\displaystyle\frac{\|\vb\|}{c}\right)^r
+ \left(1-\displaystyle\frac{\|\vb\|}{c}\right)^r}
\frac{\vb}{\|\vb\|}\\[8pt]
&= c \tanh( r\,\tanh^{-1}\frac{\|\vb\|}{c})\frac{\vb}{\|\vb\|}
 \end{split}
 \end{equation}
where $r$ is any real number, $r\in\Rb$,
$\vb\in\Rcn$, $\vb\ne\bz$, and $r\od\bz=\bz$, and with
which we use the notation $\vb\od  r=r\od\vb$.
\end{definition}

Einstein gyrovector spaces are studied in \cite[Sec.~6.18]{mybook03}
and \cite{mybook04}.
Einstein scalar multiplication does not distribute over Einstein addition,
but it possesses other properties of vector spaces. For any
positive integer $n$, and for all real numbers $r,\ro,\rt\in\Rb$, and
$\vb\in\Rcn$, we have
\begin{alignat}{2}\label{scalarprp}
\notag
 n\od\vb&=\vb\op\dots\op\vb &&\qquad\text{$n$ terms}\\[3pt]
\notag
 (\ro+\rt)\od\vb&=\ro\od\vb\op\rt\od\vb
 &&\qquad\text{Scalar Distributive Law}\\[3pt]
\notag
(\ro\rt)\od\vb&=\ro\od(\rt\od\vb)
 &&\qquad\text{Scalar Associative Law} \\[3pt]
 \notag
r\od(\ro\od\vb\op\rt\od\vb)&=r\od(\ro\od\vb)\op r\od(\rt\od\vb)
 &&\qquad\text{Monodistributive Law}
 \notag
\end{alignat}
in any Einstein gyrovector space $(\Rcn,\op,\od)$.

Any Einstein gyrovector space $(\Rcn,\op,\od)$ inherits the common inner product
and the norm from its vector space $\Rn$. These turn out to be invariant under gyrations,
that is,
\begin{equation} \label{eqwiuj01}
\begin{split}
\gyr[\ab,\bb]\ub \ccdot \gyr[\ab,\bb]\vb &= \ub \ccdot \vb \\[3pt]
\| \gyr[\ab,\bb]\vb \| &= \| \vb \|
\end{split}
\end{equation}
for all $\ab,\bb,\ub,\vb\inn\Rcn$.

Unlike vector spaces, Einstein gyrovector spaces $(\Rcn,\op,\od)$ do not
possess the distributive law since, in general,
\begin{equation} \label{eqwdkj01}
r\od(\ub\op\vb) \ne r\od\ub\op r\od\vb
\end{equation}
for $r\inn\Rb$ and $\ub,\vb\inn\Rcn$.
One might suppose that there is a price to pay in mathematical regularity
when replacing ordinary vector addition with Einstein addition,
but  this is not the case as demonstrated in \cite{mybook01,mybook02,mybook03},
and as noted by S.~Walter in \cite{walterrev2002}.

Owing to the break down of the distributive law in gyrovector spaces, the
following gyrovector space identity, called the {\it Two-Sum Identity}
\cite[Theorem 6.7]{mybook03}, proves useful:
\begin{equation} \label{hyets1}
2\od(\ub\op\vb)=\ub\op(2\od\vb\op\ub)
\end{equation}

In full analogy with the common Euclidean distance function,
Einstein addition admits the {\it gyrodistance} function
\begin{equation} \label{tkcflsen}
d_\op(A,B) = \|\om A \op B\|
\end{equation}
that obeys the gyrotriangle inequality
\cite[Theorem 6.9]{mybook03}
\begin{equation} \label{rif01}
d_\op( A , B ) \le d_\op( A , P ) \op d_\op( P , B )
\end{equation}
for any points $A,B,P\in\Rcn$ in an Einstein gyrovector space $(\Rcn,\op,\od)$.
The gyrodistance function is invariant under the group of motions of its
Einstein gyrovector space, that is, under {\it left gyrotranslations} and
rotations of the space \cite[Sec.~4]{mybook03}.
The gyrotriangle inequality \eqref{rif01} reduces to a corresponding
gyrotriangle equality,
\begin{equation} \label{rif01s}
d_\op( A , B ) = d_\op( A , P ) \op d_\op( P , B )
\end{equation}
if and only if point $ P $ lies between points $ A $ and $ B $, that is,
point $ P $ lies on the gyrosegment $ A  B $, as shown in
Fig.~\ref{fig163b3enm}.
Accordingly, the gyrodistance function is gyroadditive on gyrolines, as
demonstrated in \eqref{rif01s} and
illustrated graphically in Fig.~\ref{fig163b3enm}.

Furthermore, the Einstein gyrodistance function \eqref{tkcflsen} in any
$n$-dimensional Einstein gyrovector space $(\Rcn,\op,\od)$
possesses a familiar Riemannian line element. It
gives rise to the Riemannian line element $ds_e^2$ of the Einstein gyrovector space
with its {\it gyrometric} \eqref{tkcflsen},
\begin{equation} \label{eqdelta01fen}
\begin{split}
ds_e^2 &=  \|(\xb+d\xb) \om \xb\|^2 \\[4pt]
&= \frac{c^2} {c^2-\xb^2}d\xb^2 +
  \frac{c^2} {(c^2-\xb^2)^2}(\xb\ccdot d\xb)^2
\end{split}
\end{equation}
$\xb\in\Rcn$, where $d\xb^2 = d\xb\ccdot d\xb$,
as shown in \cite[Theorem 7.6]{mybook03}.

Remarkably, the Riemannian line element $ds_e^2$ in \eqref{eqdelta01fen} turns out
to be the well-known line element that the
Italian mathematician Eugenio Beltrami introduced in 1868
in order to study hyperbolic geometry by a Euclidean disc model,
now known as the Beltrami-Klein disc
\cite[p.~220]{mccleary94}\cite{barrett00}.
An English translation of his historically significant 1868 essay
on the interpretation of non-Euclidean geometry is found in
\cite{stillwell96}. The significance of Beltrami's 1868 essay
rests on the generally known fact that
it was the first to offer a concrete interpretation of hyperbolic geometry
by interpreting ``straight lines'' as geodesics on a surface of a constant
negative curvature.
Beltrami, thus, constructed a Euclidean disc model of the
hyperbolic plane \cite{mccleary94} \cite{stillwell96}, which now bears his name
along with the name of Klein.

We have thus found that the Beltrami-Klein
ball model of hyperbolic geometry is regulated algebraically by Einstein gyrovector spaces
with their gyrodistance function \eqref{tkcflsen}
and Riemannian line element \eqref{eqdelta01fen},
just as the standard model of Euclidean geometry is regulated algebraically by vector spaces
with their Euclidean distance function
and the Riemannian line element $ds^2 = d\xb^2$.

\begin{figure}[t]  
 \sidebyside {       
 \psfrag{pa}  {$A,\hspace{0.1cm}t=0$}
 \psfrag{pb}  {$B,\hspace{0.1cm}t=1$}
\psfrag{formula01}[]{${\rm The~Einstein~Gyroline} L_{^{AB}}$}
\psfrag{formula02}[]{${\rm through~the~points}~ A~{\rm and}~B$}
\psfrag{formula03}[]{$\boxed{A\op (\om  A\op  B)\od  t}$}
\psfrag{formula04}[]{$ -\infty < t < \infty $}
\psfrag{formula05}[]{$\op=\ope$}
 \includegraphics[width=0.45\textwidth]{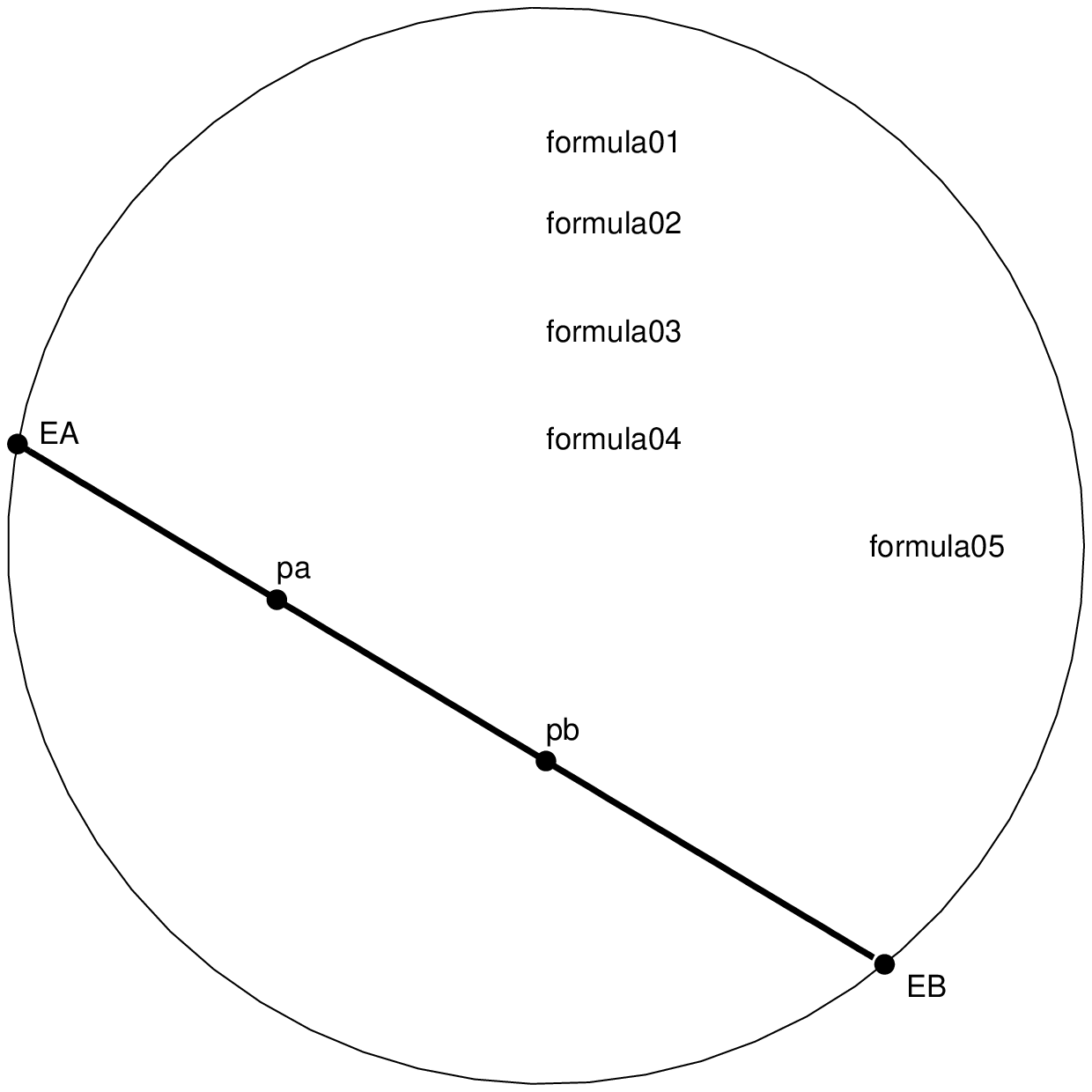}
\caption[The Einstein Gyroline]{
The unique gyroline $L_{^{AB}}$ in an Einstain gyrovector space $(\Rcn,\op ,\od )$
through two given points $A$ and $B$.
The case of the Einstain gyrovector plane, when $\Rcn=\Rctwou$ is the
real open unit disc, is shown.
\label{fig174b3enm}}}
  {
\psfrag{a}[]{$A$}
\psfrag{b}[]{$\!\! B$}
\psfrag{mab}{$M_{^{AB}}^{\phantom{O}} \!=\! \half\od (A\sqp \! B)$}
\psfrag{pab}{$P$}
\psfrag{formula00}[]{$d_{\op }(A,P) \op  d_{\op }(P,B)=d_{\op }(A,B)$}
\psfrag{formula01}[]{$\boxed{A\op (\om  A\op  B)\od  t}$}
\psfrag{formula02}[]{$0\le t \le1$}
\psfrag{formula05}[]{$\op=\ope$}

 \includegraphics[width=0.45\textwidth]{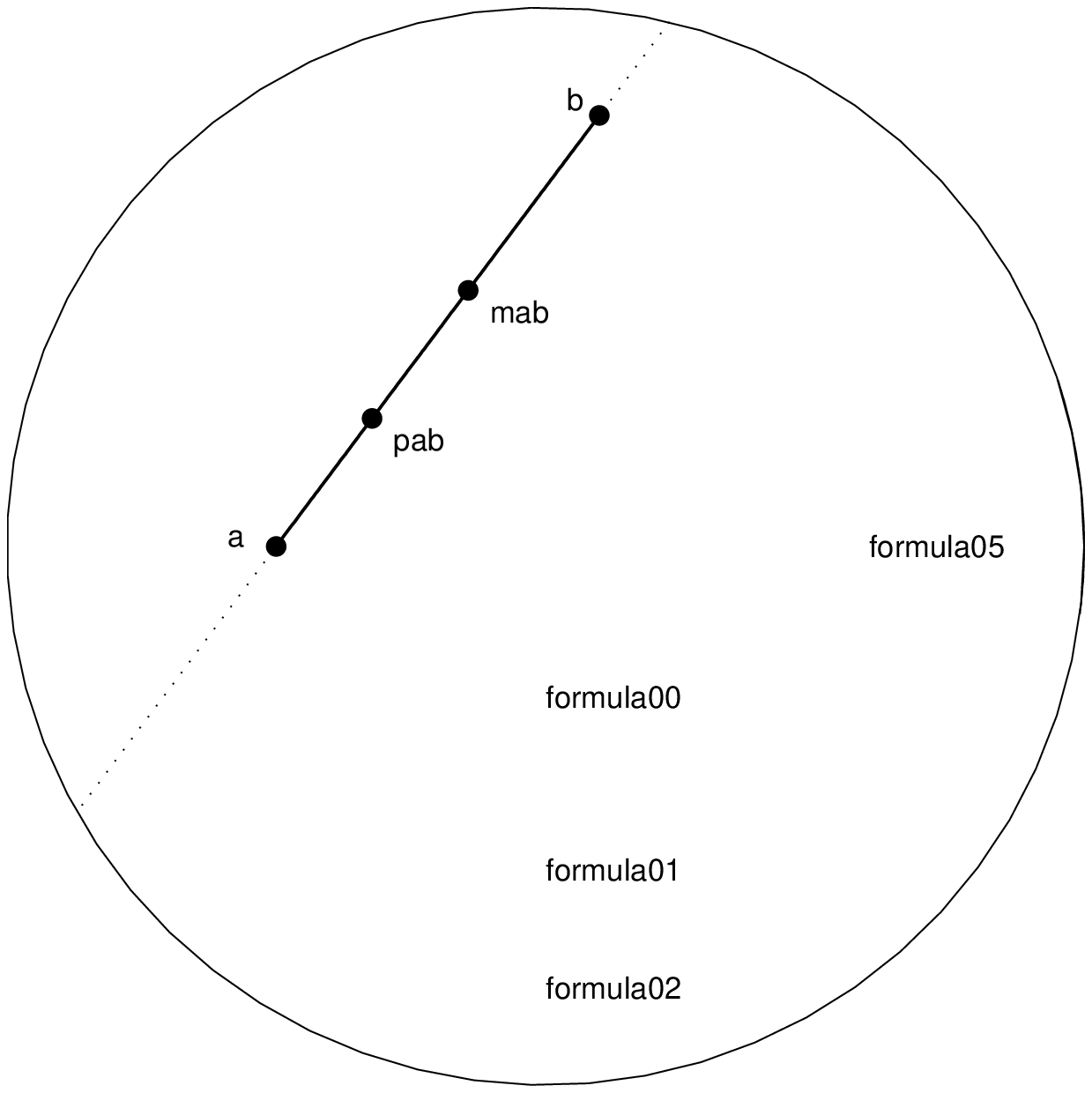}
\caption{
The gyrosegment $AB$ that links
the points $A$ and $B$ in $(\Rcn,\op ,\od )$, with one of its generic points
$P$ and its gyromidpoint $M_{^{AB}}$. The point $P$ lies between $A$ and $B$
and, hence, obeys the gyrotriangle equality, \eqref{rif01s}.
\label{fig163b3enm}} }
\end{figure}


In full analogy with Euclidean geometry,
the unique Einstein gyroline $\LAB$, Fig.~\ref{fig174b3enm},
that passes through two given
points $A$ and $B$ in an Einstein gyrovector space $\Rcn=(\Rcn,\op ,\od)$
is given by the parametric equation
 \begin{equation} \label{eqrfgh01en}
\LAB(t) = A\op (\om A \op  B)\od  t
 \end{equation}
with the parameter $t\in\Rb$.
The gyroline $\LAB$ passes through the point $A$ when $t=0$ and,
owing to the left cancellation law \eqref{eq01b}, it
passes through the point $B$ when $t=1$.

Einstein gyrolines in the ball $\Rcn$ are chords of the ball,
as shown in Fig.~\ref{fig174b3enm}. These chords of the ball turn out to be the
familiar geodesics of the Beltrami-Klein ball model of hyperbolic geometry
\cite{mccleary94}.
Accordingly, Einstein gyrosegments are Euclidean segments, as shown in
Fig.~\ref{fig163b3enm}.
The result that Einstein gyrosegments are Euclidean segments is well exploited
in \cite{mybook06,mybook05} in the use of hyperbolic barycentric coordinates
for the determination of various hyperbolic triangle centers.
It enables one to determine points of intersection of gyrolines
by common methods of linear algebra.

The gyromidpoint $\MAB$ of gyrosegment $AB$, shown in Fig.~\ref{fig163b3enm}, is the
unique point of the gyrosegment that satisfies the equation
$d_\op(\MAB,A) = d_\op(\MAB,B)$. It is given by each of the following equations
\cite[Theorem 3.33]{mybook04},
\begin{equation} \label{eqgyromid}
\MAB = A\op (\om A \op  B)\od \half
     = \frac{\gamma_{_A}A+\gamma_{_B}B}{\gamma_{_A}+\gamma_{_B}} =
\half \od (A\sqp B)
\end{equation}
in full analogy with Euclidean midpoints, shown in Fig.~\ref{fig190k4eucm}.
One may note that the extreme right equation in \eqref{eqgyromid}
appears in \eqref{eincoadd} in an equivalent form.

The endpoints of a gyroline in an Einstein gyrovector space $(\Rcn,\op,\od)$
are the points where the gyroline approaches the boundary of
the ball $\Rcn$.
Following \eqref{eqrfgh01en}, the endpoints $E_{^A}$ and $E_{^B}$
of the gyroline $\LAB(t)$ in Fig.~\ref{fig174b3enm} are
\begin{equation} \label{kulib}
\begin{split}
E_{^A} &= \lim_{t\rightarrow -\infty}
\{A\op (\om A \op  B)\od  t\}
\\
E_{^B} &= \lim_{t\rightarrow \phantom{-}\infty}
\{A\op (\om A \op  B)\od  t\}
\end{split}
\end{equation}
Explicit expressions for the gyroline endpoints in Einstein gyrovector spaces
are presented in \eqref{grut06}, p.~\pageref{grut06}.

\section{Vectors and Gyrovectors}\label{secs5}


\begin{figure}[t]  
 \sidebyside {       
\psfrag{A}[]{$P$}
\psfrag{B}[]{$Q$}
\psfrag{Ap}[]{$R$}
\psfrag{Bp}[]{$S$}
\psfrag{formula00}[]{$-P+Q=-R+S$}
\psfrag{formula01}[]{$d(P,Q)=\|-P+Q\|$}
\psfrag{---chets1}[]{\lower-1.2ex \hbox {\footnotesize{$\blacktriangleright$}}}
\psfrag{---chets2}[]{\lower-1.2ex \hbox {\footnotesize{$\blacktriangleright$}}}
 \includegraphics[width=0.5\textwidth]{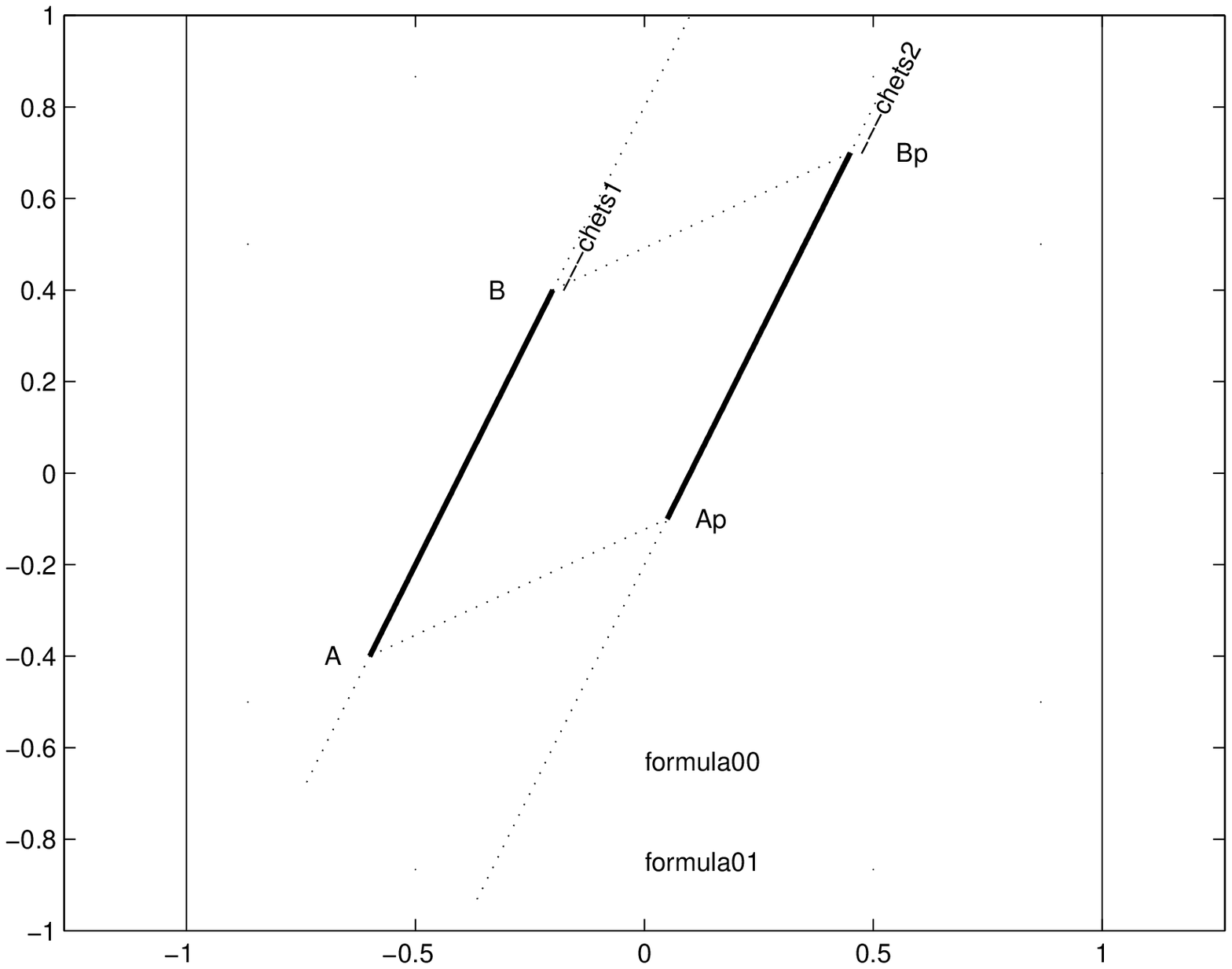}
\caption{
Two equivalent vectors in a Euclidean vector plane $(\Rtwo,+,\ccdot)$.
The two vectors are parallel and have equal values and, hence, equal lengths.
\label{fig221a6m}}}
  {
\psfrag{A}[]{$P$}
\psfrag{B}[]{$Q$}
\psfrag{Ap}[]{$R$}
\psfrag{Bp}[]{$S$}
\psfrag{formula00}[]{$\om P\op Q=\om R\op S$}
\psfrag{formula01}[]{$d(P,Q)=\|\om P\op Q\|$}
\psfrag{---chets1}[]{\lower-1.2ex \hbox {\footnotesize{$\blacktriangleright$}}}
 \psfrag{---chets2}[]{\lower-1.0ex \hbox {\footnotesize{$\blacktriangleright$}}}
 \includegraphics[width=0.5\textwidth]{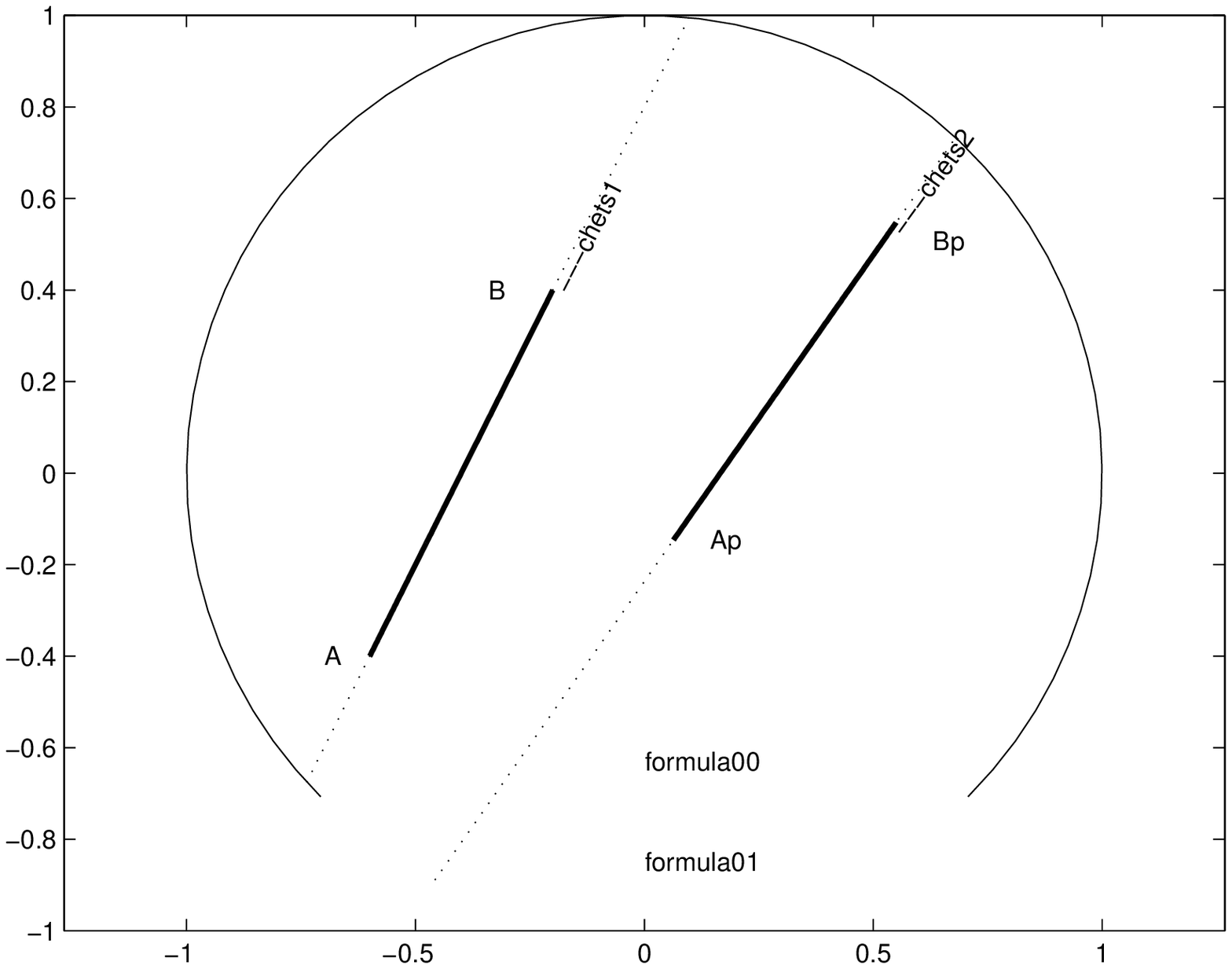}
\caption[]{
Two equivalent gyrovectors in an Einstein gyrovector plane $(\Rctwo,\op,\od)$.
The two gyrovectors have equal values and, hence, equal gyrolengths.
\label{fig221b6m}}}
\end{figure}

Elements of a real inner product space $\vi=(\vi,+,\ccdot)$, called
points and denoted by capital italic letters, $A,B,P,Q,$ etc,
give rise to vectors in $\vi$, denoted by bold roman lowercase letters $\ub,\vb,$ etc.
Any two ordered points $P,Q\inn\vi$ give rise to a unique rooted vector
$\vb\inn\vi$, rooted at the point $P$. It has a tail at the point $P$
and a head at the point $Q$, and it has the value $-P+Q$,
\begin{equation} \label{eq2rhkd01}
\vb =-P+Q
\end{equation}
The length of the rooted vector $\vb = -P+Q$ is the distance between the points
$P$ and $Q$, given by the equation
\begin{equation} \label{eq2rhkd02}
\|\vb\| = \|-P+Q\|
\end{equation}

Two rooted vectors $-P+Q$ and $-R+S$ are equivalent if they have the same
value, that is,
\begin{equation} \label{fksnc}
-P+Q ~~\thicksim~~-R+S \hspace{1.0cm} {\rm if~and~only~if} \hspace{1.0cm} -P+Q=-R+S
\end{equation}
The relation $\thicksim$ in \eqref{fksnc} between rooted vectors is
reflexive, symmetric and transitive, so that it is an equivalence relations that gives
rise to equivalence classes of rooted vectors.

Two equivalent rooted vectors in a Euclidean vector plane are shown in Fig.~\ref{fig221a6m}.
Being equivalent in Euclidean geometry, the two vectors in Fig.~\ref{fig221a6m}
are parallel and they possess equal lengths.

To liberate rooted vectors from
their roots we define
a {\it vector} to be an equivalence class of rooted vectors.
The vector $-P+Q$ is thus a representative of all rooted vectors with value $-P+Q$.
Accordingly, the two vectors in Fig.~\ref{fig221a6m} are equal.

A point $P\inn\vi$ is identified with the vector $-O+P$,
$O$ being the arbitrarily selected origin of the space $\vi$.
Hence, the algebra of vectors can be applied to points as well.
Naturally, geometric and physical properties regulated by a vector space are
independent of the choice of the origin.

Let $A,B,C\in\vi$ be three non-collinear points, and let
\begin{equation} \label{ksfr1}
\begin{split}
\ub &= -A+B \\
\vb &= -A+C
\end{split}
\end{equation}
be two vectors in $\vi$ that possess the same tail, $A$.
Furthermore, let $D$ be a point of $\vi$ given by the
parallelogram condition
\begin{equation} \label{ksfr2}
D = B+C-A
\end{equation}

  
\begin{figure}[t]  
 \centering         
\psfrag{----chets}[]{\lower-1.0ex \hbox {\footnotesize{$\blacktriangleright$}}}
\psfrag{mab}[]{$\hspace{0.66cm}\MABDC$}
\psfrag{O}[]{$\phantom{O}$}
\psfrag{a}[]{$A$}
\psfrag{b}[]{$B$}
\psfrag{c}[]{$C$}
\psfrag{d}[]{$D$}
\psfrag{formula00}{${\rm The~Parallelogram}$} 
\psfrag{formula00a}{$\hspace{-.1cm}{\rm Condition:}~D = B+C-A$} 
\psfrag{formula01}{$\MAD = \half(A+D)$}
\psfrag{formula02}{$\MBC = \half(A+C)$}
\psfrag{formula03}{$\MABDC = \frac{A + B + C + D}{4}$}
 \psfrag{formula03a}{$\hspace{-.66cm}\MABDC=\MAD=\MBC$}
 \psfrag{formula04}{$-C+D = -A+B$}
\psfrag{formula05}{$-B+D =-A+C$}
\psfrag{fig190j}[]{$\boxed{(-A+B)+(-A+C)=-A+D}$}
\psfrag{formula06}[]{$\boxed{\ub+\vb=\wb}$}
\psfrag{text1}[]{$\hspace{0.20cm}\ub=-A+B$}
\psfrag{text2}[]{$\vb=-A+C$}
\psfrag{text3}[]{$\wb=-A+D$}
 \includegraphics[width=9cm]{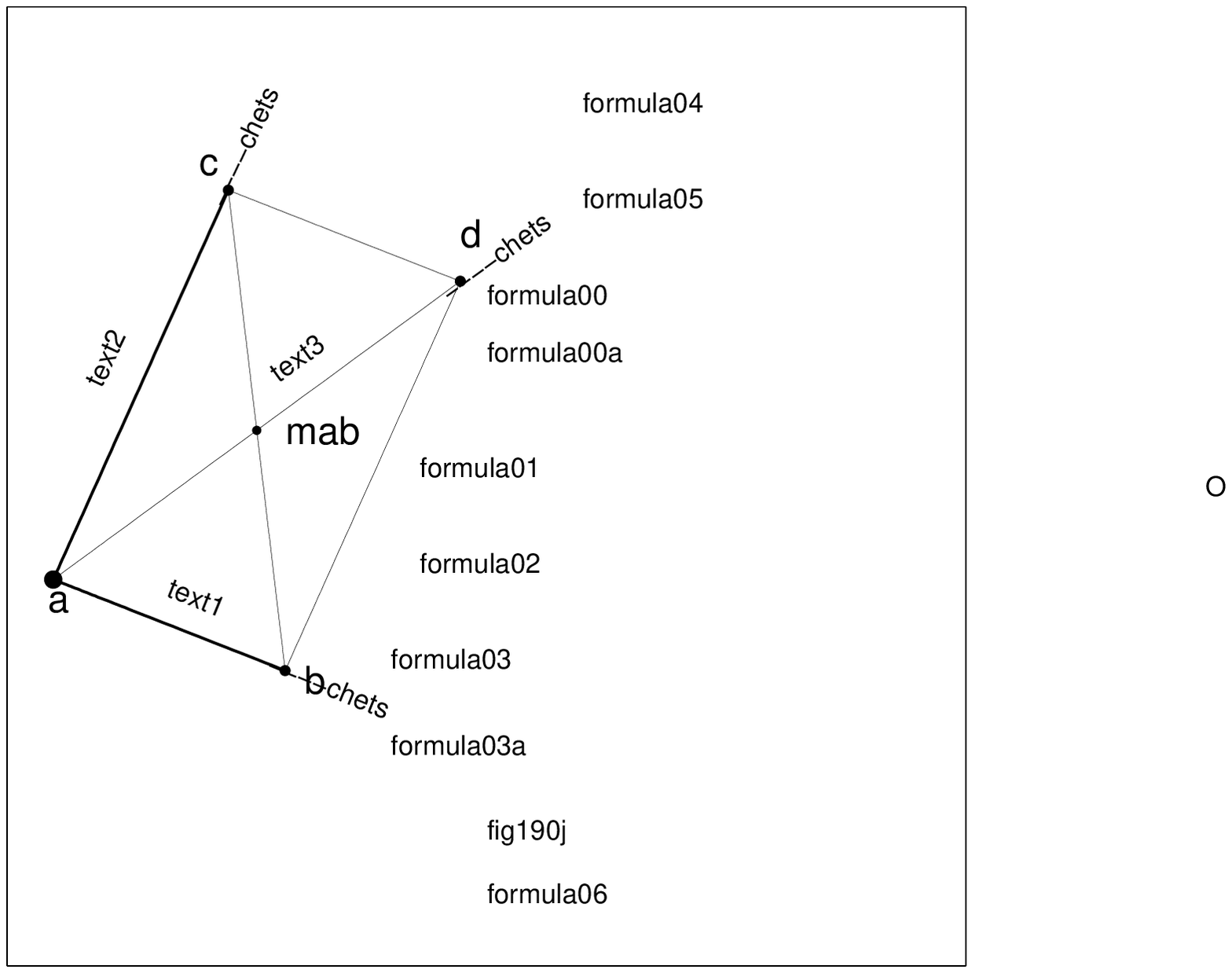}
\caption{
The Euclidean parallelogram and its addition law
in a Euclidean vector plane $(\Rtwo,+,\ccdot)$.
The diagonals $AD$ and $BC$ of parallelogram $ABDC$
intersect each other at their midpoints.
The midpoints of the diagonals $AD$ and $BC$ are, respectively,
$M_{^{AD}}$ and $M_{^{BC}}$, each of which coincides with the
parallelogram center $M_{^{ABDC}}$.
\label{fig190k4eucm}}
\end{figure}


The quadrangle (also known as a quadrilateral; see \cite[p.~52]{coxetergreitzer67})
$ABDC$ turns out to be a parallelogram in Euclidean geometry,
Fig.~\ref{fig190k4eucm}, since
its two diagonals, $AD$ and $BC$, intersect at their midpoints, that is,
\begin{equation} \label{ksfr3}
\half(A+D) = \half(B+C)
\end{equation}
Clearly, the midpoint equality \eqref{ksfr3} is equivalent to the
parallelogram condition \eqref{ksfr2}.

The vector addition of the vectors $\ub$ and $\vb$
that generate the parallelogram $ABDC$, according to \eqref{ksfr1},
gives the vector $\wb$ by the parallelogram addition law, Fig.~\ref{fig190k4eucm},
\begin{equation} \label{ksfr4}
\wb := -A+D = (-A+B) + (-A+C) = \ub+\vb
\end{equation}
Here, by definition,
$\wb$ is the vector formed by the diagonal $AD$ of the parallelogram $ABDC$,
as shown in Fig.~\ref{fig190k4eucm}.

Vectors in the space $\vi$
are, thus, equivalence classes of ordered pairs of points, Fig.~\ref{fig221a6m},
which add according to the parallelogram law, Fig.~\ref{fig190k4eucm}.

Gyrovectors emerge in an Einstein gyrovector space $(\vc,\op,\od)$ in a way fully analogous
to the way vectors emerge in the space $\vi$,
where $\vc$ is the $c$-ball of the space $\vi$, \eqref{eqsball}.

Elements of $\vc$, called
points and denoted by capital italic letters, $A,B,P,Q,$ etc,
give rise to gyrovectors in $\vc$, denoted by bold roman lowercase letters $\ub,\vb,$ etc.
Any two ordered points $P,Q\inn\vc$ give rise to a unique rooted gyrovector
$\vb\inn\vc$, rooted at the point $P$. It has a tail at the point $P$
and a head at the point $Q$, and it has the value $\om P\op Q$,
\begin{equation} \label{eq2rhkd01g}
\vb =\om P\op Q
\end{equation}
The gyrolength of the rooted gyrovector $\vb = \om P\op Q$ is the gyrodistance between the points
$P$ and $Q$, given by the equation
\begin{equation} \label{eq2rhkd02g}
\|\vb\| = \|\om P\op Q\|
\end{equation}

Two rooted gyrovectors $\om P\op Q$ and $\om R\op S$ are equivalent
if they have the same value, that is,
\begin{equation} \label{fksnd}
\om P\op Q ~~\thicksim~~ \om R\op S \hspace{1.0cm} {\rm if~and~only~if} \hspace{1.0cm}
\om P\op Q= \om R \op S
\end{equation}
The relation $\thicksim$ in \eqref{fksnd} between rooted gyrovectors is
reflexive, symmetric and transitive, so that it is an equivalence relation that gives
rise to equivalence classes of rooted gyrovectors.

Two equivalent rooted gyrovectors in an Einstein gyrovector plane are shown in
Fig.~\ref{fig221b6m}.
Being equivalent in hyperbolic geometry, the two gyrovectors in Fig.~\ref{fig221b6m}
possess equal gyrolengths.

To liberate rooted gyrovectors from
their roots we define
a {\it gyrovector} to be an equivalence class of rooted gyrovectors.
The gyrovector $\om P\op Q$ is thus a representative of all rooted gyrovectors
with value $\om P\op Q$.
Accordingly, the two gyrovectors in Fig.~\ref{fig221b6m} are equal.

A point $P$ of a gyrovector space $(\vc,\op,\od)$ is identified with the gyrovector $\om O\op P$,
$O$ being the arbitrarily selected origin of the space $\vc$.
Hence, the algebra of gyrovectors can be applied to points as well.
Naturally, geometric and physical properties regulated by a gyrovector space are
independent of the choice of the origin.

Let $A,B,C\in\vc$ be three non-gyrocollinear points of an Einstein gyrovector space
$(\vc,\op,\od)$, and let
\begin{equation} \label{ktfr1}
\begin{split}
\ub &= \om A\op B \\
\vb &= \om A\op C
\end{split}
\end{equation}
be two gyrovectors in $\vi$ that possess the same tail, $A$.
Furthermore, let $D$ be a point of $\vc$ given by the
gyroparallelogram condition
\begin{equation} \label{ktfr2}
D = (B\sqp C)\om A
\end{equation}
Then, the gyroquadrangle $ABDC$ is a gyroparallelogram in the
Beltrami-Klein ball model of hyperbolic geometry in the sense
that its two gyrodiagonals, $AD$ and $BC$, intersect at their gyromidpoints, that is,
\begin{equation} \label{ktfr3}
\half\od(A\sqp D) = \half\od(B\sqp C)
\end{equation}
as illustrated in Fig.~\ref{fig190k4m}.
Clearly by \eqref{hufns},
the gyromidpoint equality \eqref{ktfr3} is equivalent to the
gyroparallelogram condition \eqref{ktfr2}.

The gyrovector addition of the gyrovectors $\ub$ and $\vb$
that generate the gyroparallelogram $ABDC$ gives the gyrovector $\wb$
by the gyroparallelogram addition law, Fig.~\ref{fig190k4m},
\begin{equation} \label{ktfr4}
\wb := \om A \op D = (\om A \op B) \sqp (\om A \op C) =: \ub \sqp \vb
\end{equation}
Here, by definition, $\wb$ is the gyrovector formed by the gyrodiagonal $AD$
of the gyroparallelogram $ABDC$.
The gyrovector identity in \eqref{ktfr4} is explained in
\eqref{ryuf1} below.

Gyrovectors in the ball $\vc$
are, thus, equivalence classes of ordered pairs of points, Fig.~\ref{fig221b6m},
which add according to the gyroparallelogram law, Fig.~\ref{fig190k4m}.

\section{Gyroparallelogram -- The Hyperbolic Parallelogram} \label{secgpar}

  
\begin{figure}[t]  
 \centering         
\psfrag{----chets}[]{\lower-1.0ex \hbox {\footnotesize{$\blacktriangleright$}}}
\psfrag{mab}[]{$\hspace{0.66cm}\MABDC$}
\psfrag{O}[]{$\phantom{O}$}
\psfrag{a}[]{$A$}
\psfrag{b}[]{$B$}
\psfrag{c}[]{$C$}
\psfrag{d}[]{$D$}
\psfrag{formula00}{${\rm The~Gyroparallelogram}$} 
\psfrag{formula00a}{${\rm Condition:}~~~~D = (B\sqp C)\om A$} 
\psfrag{formula01}{$\MAD = \frac{\gamma_{_A}A+\gamma_{_D}D}{\gamma_{_A}+\gamma_{_D}}
 = \half\od(A\sqp D)$}
\psfrag{formula02}{$\MBC = \frac{\gamma_{_B}B+\gamma_{_C}C}{\gamma_{_B}+\gamma_{_C}}
 = \half\od(A\sqp C)$}
\psfrag{formula03}{$\MABDC = \frac{\gmA A +\gmB B +\gmC C +\gmD D}{\gmA+\gmB+\gmC+\gmD}$}
 \psfrag{formula03a}{$\hspace{-.66cm}\MABDC=\MAD=\MBC$}
 \psfrag{formula04}[][][0.8]{$\hspace{5.4cm}\om C \op D =\gyr[C,\om B]\gyr[B,\om A](\om A\op B)$}
 \psfrag{formula05}[][][0.8]{$\hspace{5.4cm}\om B \op D =\gyr[B,\om C]\gyr[C,\om A](\om A\op C)$}
\psfrag{fig190j}[]{$\boxed{(\om A \op B )\sqp(\om A \op C )=\om A \op D }$}
\psfrag{formula06}[]{$\boxed{\ub\sqp\vb=\wb}$}
\psfrag{text1}[]{$\hspace{0.20cm}\ub=\om A\op B$}
\psfrag{text2}[]{$\vb=\om A\op C$}
\psfrag{text3}[]{$\wb=\om A\op D$}
 \includegraphics[width=9cm]{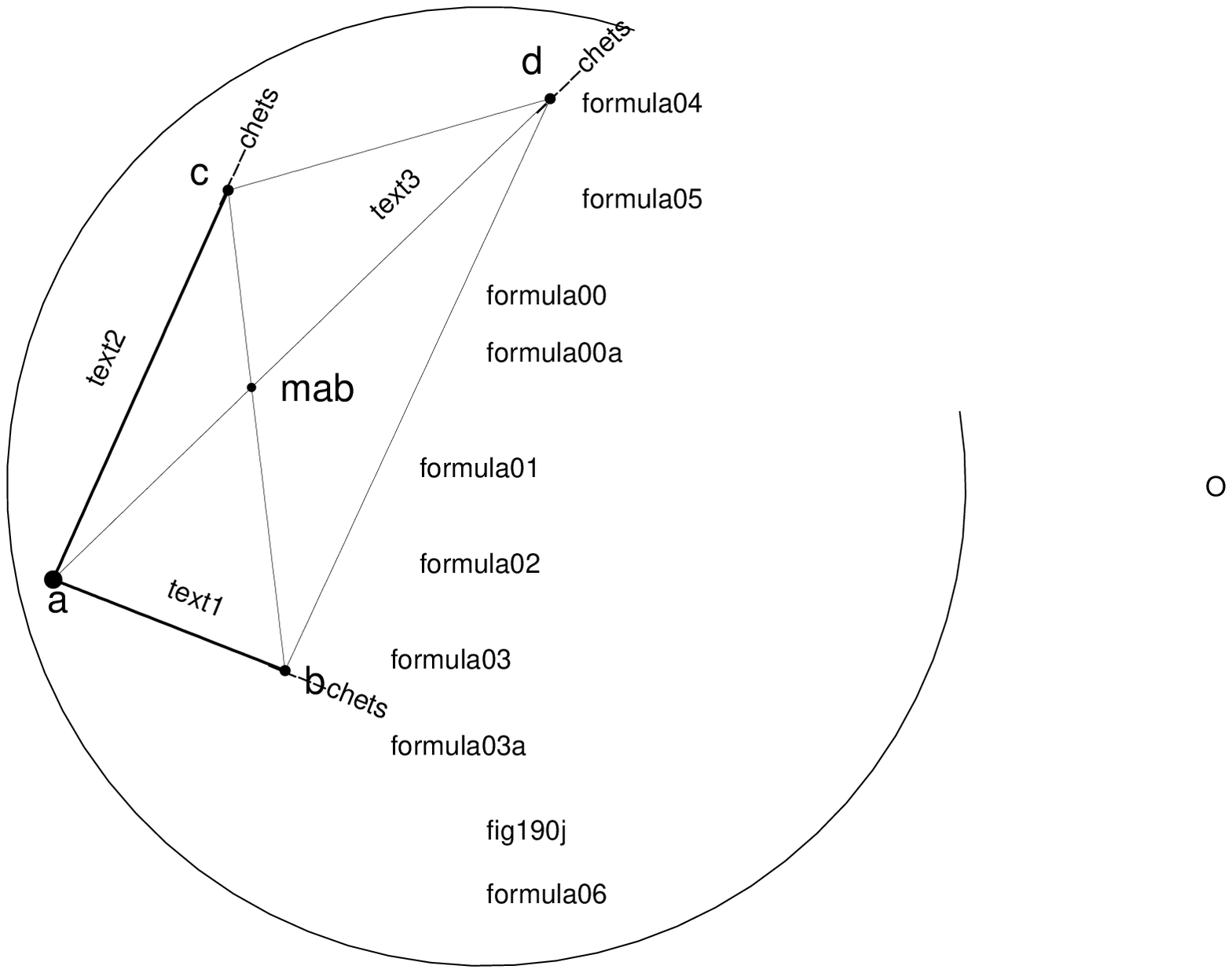}
\caption{
The Einstein gyroparallelogram and its addition law
in an Einstein gyrovector plane $(\Rctwo,\op,\od)$.
The gyrodiagonals $AD$ and $BC$ of gyroparallelogram $ABDC$
intersect each other at their gyromidpoints.
Detailed studies of the gyroparallelogram and its extension to higher dimensional
gyroparallelepipeds are presented in \cite{mybook02,mybook03}.
The gyroparallelogram addition law plays an important role in the
gyrovector space approach to hyperbolic geometry, studied in \cite{mybook03,mybook04}.
The gyromidpoints of the gyrodiagonals $AD$ and $BC$ are, respectively,
$M_{^{AD}}$ and $M_{^{BC}}$, each of which coincides with the
gyroparallelogram gyrocenter $M_{^{ABDC}}$.
The analogies that this figure shares with Fig.~\ref{fig190k4eucm} are obvious.
Along these analogies there is a remarkable disanalogy.
(i) Newton velocity addition, +, and the parallelogram addition, +,
in Fig.~\ref{fig190k4eucm} are identically the same binary operations in $\Rn$.
In contrast
(ii) Einstein velocity addition, $\op$, and its resulting gyroparallelogram addition, $\sqp$,
in this figure are two different binary operations in the ball $\Rcn$.
This disanalogy raises the question as to whether uniform relativistic velocities
in the Universe
are added according to the noncommutative Einstein velocity addition, \eqref{eq01},
or according to the commutative Einstein gyroparallelogram addition,
$\sqp$ in \eqref{eqdfhn01}.
\label{fig190k4m}}
\end{figure}


In Euclidean geometry a parallelogram is a
quadrangle the two diagonals of which intersect at their midpoints.
In full analogy, in hyperbolic geometry a gyroparallelogram is a gyroquadrangle
the two gyrodiagonals of which intersect at their gyromidpoints,
as shown in Fig.~\ref{fig190k4m}.
Accordingly, if $A$, $B$ and $C$ are any three non-gyrocollinear points
(that is, they do not lie on a gyroline) in an Einstein gyrovector space,
and if a fourth point $D$ is given by the {\it gyroparallelogram condition}
\begin{equation} \label{hkmdnf1}
D = (B\sqp C)\om A
\end{equation}
then the gyroquadrangle $ABDC$ is a gyroparallelogram, as shown in
Fig.~\ref{fig190k4m}.

Indeed, the two gyrodiagonals of gyroquadrangle $ABDC$ are the gyrosegments
$AD$ and $BC$, shown in Fig.~\ref{fig190k4m},
the gyromidpoints of which coincide, that is,
\begin{equation} \label{hkmdnf2}
\half\od(A\sqp D) =  \half\od(B\sqp C)
\end{equation}
where, by \eqref{hufns},
the result in \eqref{hkmdnf2} is equivalent to
the gyroparallelogram condition \eqref{hkmdnf1}.

Let $ABC$ be a gyrotriangle in an Einstein gyrovector space $(\Rcn,\op,\od)$
and let $D$ be the point that augments gyrotriangle $ABC$ into the
gyroparallelogram $ABDC$, as shown in Fig.~\ref{fig190k4m}.
Then, $D$ is determined uniquely by
the gyroparallelogram condition \eqref{hkmdnf1}, obeying
the gyroparallelogram addition law \cite[Theorem 5.5]{mybook05}
\begin{equation} \label{ryuf1}
(\om A\op B) \sqp (\om A\op C) = (\om A\op D)
\end{equation}
shown in Fig.~\ref{fig190k4m}.
In full analogy with the parallelogram addition law of vectors
in Euclidean geometry, \eqref{ksfr4},
the gyroparallelogram addition law \eqref{ryuf1} of gyrovectors in
hyperbolic geometry can be written as
\begin{equation} \label{ryuf2}
\ub \sqp \vb = \wb
\end{equation}
where $\ub,\vb$ and $\wb$ are the {\it gyrovectors}
\begin{equation} \label{ryuf3}
\begin{split}
\ub &= \om A \op B \\
\vb &= \om A \op C \\
\wb &= \om A \op D
\end{split}
\end{equation}
which emanate from the point $A$ \cite[Chap.~5]{mybook03}.

In his 1905 paper that founded the special theory of relativity \cite{einstein05},
Einstein noted that his velocity addition does not satisfy the
Euclidean parallelogram law:
\begin{quotation}
``Das Gesetz vom Parallelogramm der Geschwindigkeiten gilt also nach
unserer Theorie nur in erster Ann\"aherung.''
\begin{flushright}
A.~Einstein \cite{einstein05}
\end{flushright}
\end{quotation}
[English translation: Thus the law of velocity parallelogram is valid according to our
theory only to a first approximation.]

Indeed, Einstein velocity addition, $\op$, is noncommutative and does not give rise
to an exact ``velocity parallelogram'' in Euclidean geometry.
However, as we see in Fig.~\ref{fig190k4m}, Einstein velocity {\it coaddition}, $\sqp$,
which is commutative, does give rise to an exact
``velocity gyroparallelogram'' in hyperbolic geometry.

The breakdown of commutativity in Einstein velocity addition law seemed
undesirable to the famous mathematician \'Emile Borel.
Borel's resulting attempt to ``repair'' the seemingly ``defective'' Einstein velocity
addition in the years following 1912 is described by Walter in
\cite[p.~117]{walter99b}.
Here, however, we see that there is no need to repair Einstein velocity
addition law for being noncommutative since, despite of being noncommutative,
it gives rise to the gyroparallelogram law of gyrovector addition, which
turns out to be commutative.
The compatibility of the gyroparallelogram addition law of Einsteinian velocities
with cosmological observations of stellar aberration is studied in
\cite[Chap.~13]{mybook03}
and
\cite[Sec.~10.2]{mybook05}.
The extension of the gyroparallelogram addition law of $k=2$ summands into
a higher dimensional gyroparallelotope addition law of $k>2$ summands
is mentioned in \eqref{eincoadd}\,--\,\eqref{ejsnk06}
and studied in \cite[Theorem 10.6]{mybook03}.

\section{The Isomorphism Between M\"obius and Einstein Addition} \label{secf2}

\begin{figure}[t]  
 \sidebyside {       
\psfrag{pa}  {$\hspace{0.1cm}A,\hspace{0.1cm}t=0$}
\psfrag{pb}  {$\hspace{0.1cm}B,\hspace{0.1cm}t=1$}
\psfrag{formula01}[]{${\rm The~M\ddot obius~Gyroline}$}
\psfrag{formula02}[]{${\rm through~the~points}~X~{\rm and}~B$}
\psfrag{formula03}[]{$\boxed{A\op (\om  A \op  B)\od  t}$}
\psfrag{fig174m04}[]{$ -\infty < t < \infty $}
\psfrag{formula05}[]{$\op=\opm$}
 \includegraphics[width=0.45\textwidth]{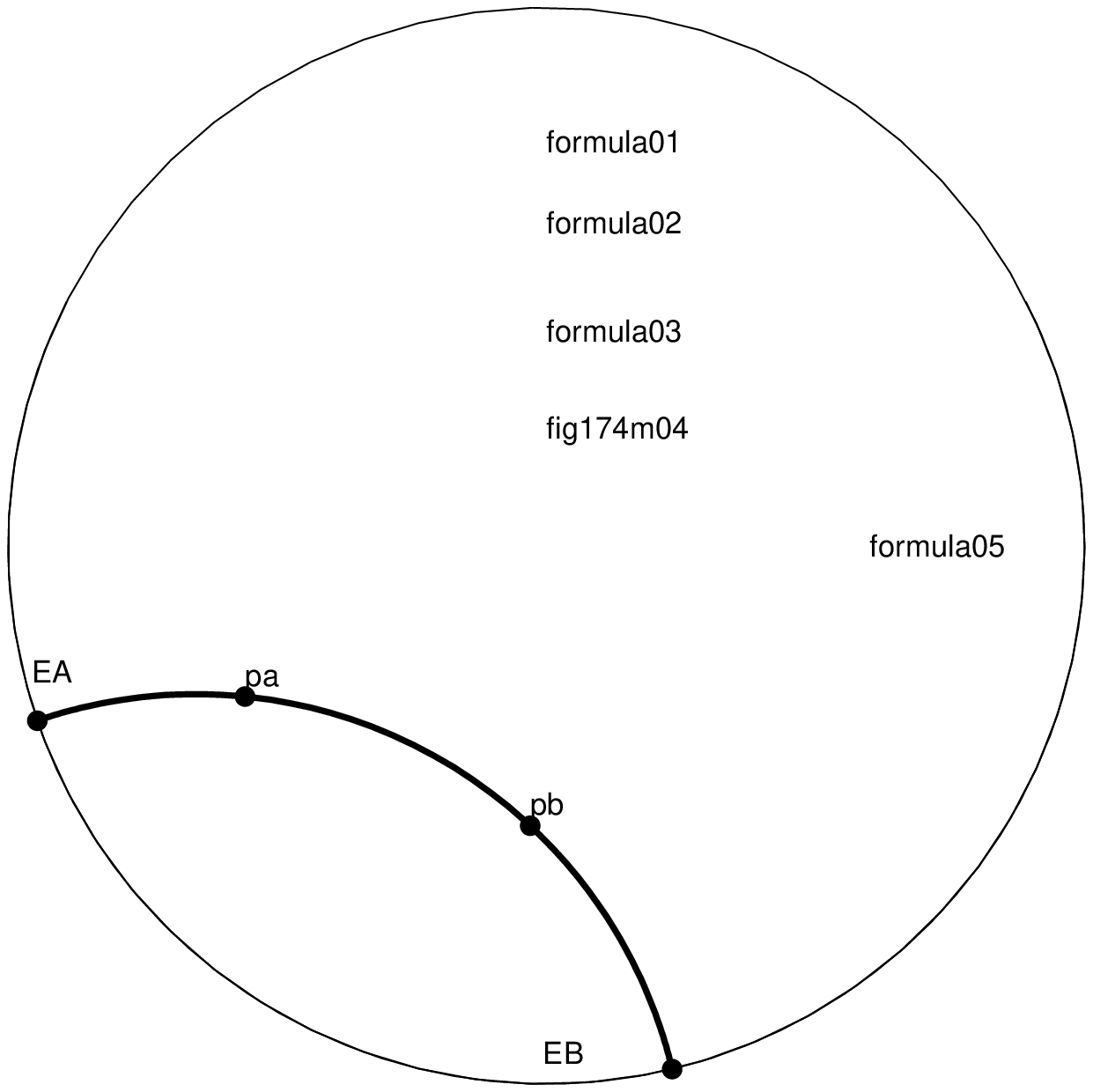}
\caption[The M\"obius Gyroline]{
The unique gyroline $L_{^{AB}}$ in a M\"obius gyrovector space $(\Rcn,\op ,\od )$
through two given points $A$ and $B$.
The case of the M\"obius gyrovector plane, when $\vc=\Rctwou$ is the
real open unit disc, is shown.
\label{fig174a1m}}}
  {
\psfrag{a}[]{$A$}
\psfrag{b}[]{$\!\! B$}
\psfrag{mab}[][][0.8]{$\hspace{2.8cm}M_{^{AB}}=\half\od(A\sqp B)$}
\psfrag{pab}{$P$}
\psfrag{formula00}[]{$d_{\op}(A,P) \op d_{\op}(P,B)=d_{\op}(A,B)$}
\psfrag{formula01}[]{$\boxed{A\op(\om A \op B)\od t}$}
\psfrag{formula02}[]{$0\le t \le1$}
\psfrag{formula05}[]{$\op=\opm$}
 \includegraphics[width=0.45\textwidth]{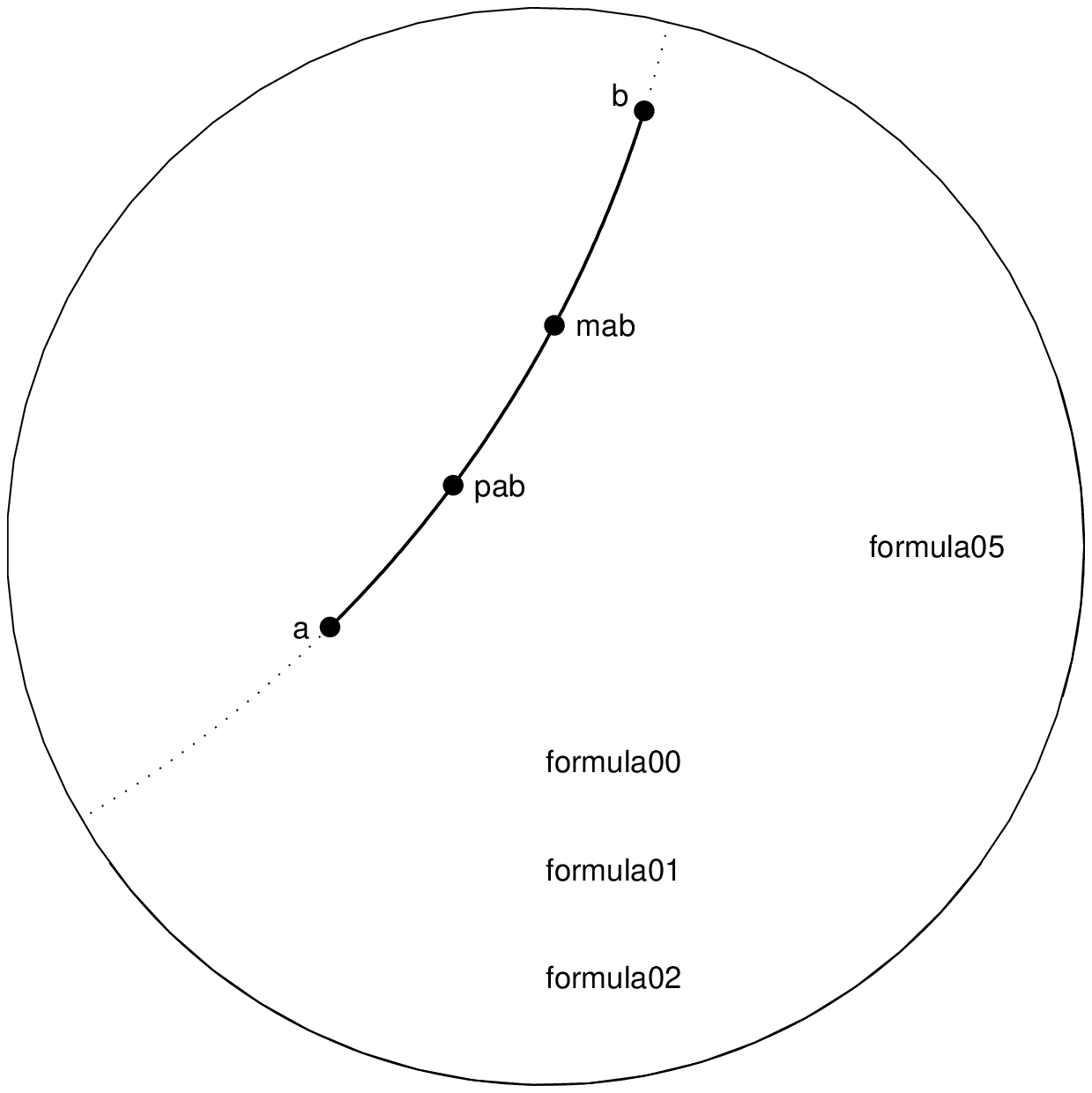}
\caption{
The gyrosegment $AB$ that links
the points $A$ and $B$ in $(\Rcn,\op ,\od )$, with one of its generic points
$P$ and its gyromidpoint $M_{^{AB}}$. The point $P$ lies between $A$ and $B$
and, hence, obeys the gyrotriangle equality, \eqref{rif01s}.
\label{fig163a1m}} }
\end{figure}


Einstein addition, $\op=\ope$, and M\"obius addition, $\opm$, admit the same
scalar multiplication \eqref{eqmlt03}, $\od=\ode=\odm$.
The isomorphism between $\ope$ and $\opm$ is given by the identities
 \begin{equation} \label{eq3ejksa}
 \begin{split}
 A \ope B  &= 2\od(\half\od A \opm\half\od B ),
\hspace{1.2cm} A,B \inn (\Rcn.\ope,\ode)
 \\[6pt]
 A \opm B  &= \half\od(2\od A \ope 2\od B ),
\hspace{1.18cm} A,B \inn (\Rcn.\opm,\odm)
 \end{split}
 \end{equation}
for all $A,B\in\Rcn$.

The isomorphism in \eqref{eq3ejksa} is not trivial owing to the result that
scalar multiplication, $\od$, is non-distributive, that is,
it does not distribute over gyrovector addition, $\op$.

As examples of the use of the isomorphism \eqref{eq3ejksa}
let
$A_e \inn (\Rcn,\ope,\od)$ and $A_m \inn (\Rcn,\opm,\od)$
be points of an Einstein and a M\"obius gyrovector space that are isomorphic
to each other under the isomorphism \eqref{eq3ejksa}. Then,
\begin{equation} \label{hdkeb1}
\begin{split}
A_e &= 2\od A_m \\[4pt]
A_m &= \half\od A_e
\end{split}
\end{equation}
It follows from \eqref{hdkeb1} that
\begin{equation} \label{hdkeb2}
\begin{split}
\gamma_{_{A_e}}^{\phantom{1}} &= \gamma_{_{2\od A_m}}^{\phantom{1}}
= 2 \gamma_{_{A_m}}^{2} - 1
\\[6pt]
\gamma_{_{A_e}}^{\phantom{1}} A_e &=
\gamma_{_{_{2\od A_m}}}^{\phantom{1}} (2\od A_m)
= 2 \gamma_{_{A_m}}^{2} A_m
\end{split}
\end{equation}

More generally, for points $A_{i,e}, A_{j,e} \inn (\Rsn,\ope,\od)$ in an
Einstein gyrovector space and their isomorphic image
$A_{i,m}, A_{j,m} \inn (\Rsn,\ope,\od)$ in a corresponding
M\"obius gyrovector space,
we have, \cite[Eq.~(2.278)]{mybook06},
\begin{equation} \label{hdkeb3}
\gamma_{ij,e}^{\phantom{1}} :=
\gamma_{_{\ome A_{i,e} \ope A_{j,e}}}^{\phantom{1}}
=
2\gamma_{_{\omm A_{i,m} \opm A_{j,m}}}^{2} -1
=:
2\gamma_{ij,m}^{2} - 1
\end{equation}
and, \cite[Eq.~(2.280)]{mybook06},
\begin{equation} \label{hdkeb3s}
\sqrt{ \gamma_{ij,e}^2 -1 }
=
2\gamma_{ij,m}^{\phantom{1}}
\sqrt{ \gamma_{ij,m}^2 -1 }
\end{equation}

Interestingly, in the following equation we see an elegant expression
that remains invariant under the isomorphism \eqref{eq3ejksa} between
Einstein and M\"obius gyrovector spaces:
\begin{equation} \label{gtres}
\frac{\gamma_{ij,e} \ab_{ij,e}}{\sqrt{ \gamma_{ij,e}^2 -1 }}
=
\frac{\gamma_{ij,m} \ab_{ij,m}}{\sqrt{ \gamma_{ij,m}^2 -1 }}
\end{equation}
as one can readily check,
where we use the notation
\begin{equation} \label{gtres1}
\begin{split}
\ab_{ij,e} &= \ome A_{i,e} \ope  A_{j,e}
\\
\ab_{ij,m} &= \omm A_{i,m} \opm  A_{j,m}
\\
\gamma_{ij,e}^{\phantom{O}} &= \gamma_{\ab_{ij,e}}^{\phantom{O}}
\\
\gamma_{ij,m}^{\phantom{O}} &= \gamma_{\ab_{ij,m}}^{\phantom{O}}
\end{split}
\end{equation}

A study in detail of
the isomorphism between Einstein and M\"obius gyrovector spaces
is found in
\cite[Sec.~6.21]{mybook03} and \cite[Sec.~2.29]{mybook06}.

Owing to the isomorphism between Einstein and M\"obius addition in $\Rcn$,
the triples $(\Rcn,\opm,\od)$ form M\"obius gyrovector spaces
just as
the triples $(\Rcn,\ope,\od)$ form Einstein gyrovector spaces.
We will now show in \eqref{ruhd1}--\eqref{ruhd3} below that the
isomorphic image of an Einstein gyroline $P_e(t)$ in an
Einstein gyrovector space $(\Rcn,\ope,\od)$ is
a M\"obius gyroline $P_m(t)$ in a corresponding
M\"obius gyrovector space $(\Rcn,\opm,\od)$.

Let
\begin{equation} \label{ruhd1}
P_e(t) = A_e\ope(\ome A_e\ope B_e)\od t
\end{equation}
$t\in\Rb$, be the gyroline that passes through the distinct points
$A_e,B_e\in\Rcn$ in an Einstein gyrovector space $(\Rcn,\ope,\od)$,
shown in Fig.~\ref{fig174b3enm}, p.~\pageref{fig174b3enm} for $n=2$.
Furthermore, let $A_m,B_m,P_m\in\Rcn$ be the respective isomorphic images
of the points $A_e,B_e,P_e\in\Rcn$ in \eqref{ruhd1} under the isomorphism
expressed in \eqref{eq3ejksa}--\eqref{hdkeb1}.
In the following chain of equations, which are numbered for subsequent explanation,
we determine the isomorphic image of the Einstein gyroline \eqref{ruhd1}
in the corresponding M\"obius gyrovector space $(\Rcn,\opm,\od)$.
\begin{equation} \label{ruhd2}
\begin{split}
2\od P_m(t)
&
\overbrace{=\!\!=\!\!=}^{(1)} \hspace{0.2cm}
2\od A_m\ope(\ome 2\od A_m\ope 2\od B_m)\od t
\\[4pt]&
\overbrace{=\!\!=\!\!=}^{(2)} \hspace{0.2cm}
2\od A_m\ope(2\od(-A_m) \ope 2\od B_m)\od t
\\[4pt]&
\overbrace{=\!\!=\!\!=}^{(3)} \hspace{0.2cm}
2\od A_m\ope2\od\{ (-A_m\opm B_m)\}\od t
\\[4pt]&
\overbrace{=\!\!=\!\!=}^{(4)} \hspace{0.2cm}
2\od A_m\ope2\od\{ (-A_m\opm B_m)\od t\}
\\[4pt]&
\overbrace{=\!\!=\!\!=}^{(5)} \hspace{0.2cm}
2\od\{ A_m \opm(-A_m\opm B_m)\od t\}
\\[4pt]&
\overbrace{=\!\!=\!\!=}^{(6)} \hspace{0.2cm}
2\od\{ A_m \opm(\omm A_m\opm B_m)\od t\}
\end{split}
\end{equation}
so that, finally, the two extreme sides of \eqref{ruhd2} give the equation
\begin{equation} \label{ruhd3}
P_m(t) = A_m \opm(\omm A_m\opm B_m)\od t
\end{equation}

Derivation of the numbered equalities in \eqref{ruhd2} follows:
\begin{enumerate}
\item\label{kapit1}
This equation follows from \eqref{ruhd1} and \eqref{hdkeb1}, where the equations
$P_e=2\od P_m$, $A_e=2\od A_m$ and $B_e=2\od B_m$
that result from \eqref{hdkeb1} are substituted into \eqref{ruhd1}.
\item\label{kapit2}
Follows from Item (1) since the unary operations $\ome$ and -- are identically
the same in Einstein gyrovector spaces, and since
$-2\od A_m=2\od(-A_m)$.
\item\label{kapit3}
Follows from Item (2) by the first identity in \eqref{eq3ejksa} applied to the
second binary operation $\ope$ in Item (2).
\item\label{kapit4}
Follows from Item (3) by the scalar associative law of gyrovector spaces.
\item\label{kapit5}
Follows from Item (4) by the first identity in \eqref{eq3ejksa} applied to the
remaining binary operation $\ope$ in Item (4).
\item\label{kapit6}
Follows from Item (5) since the unary operations $\omm$ and -- are identically
the same in M\"obius gyrovector spaces.
\end{enumerate}

A M\"obius gyroline in a M\"obius gyrovector plane $(\Rctwo,\op,\od)$ is
shown in Fig.~\ref{fig174a1m}.
Interestingly, a M\"obius gyroline that does not pass through the
center of the disc $\Rctwo$ is a circular arc that approaches the boundary
of the disc orthogonally. This feature of the M\"obius gyroline indicates
that M\"obius gyrovector spaces form the algebraic setting for the
Poincar\'e ball model of hyperbolic geometry.
The link between Einstein and M\"obius gyrovector spaces and
differential geometry is presented in \cite{ungardiff05}.

As in \eqref{tkcflsen}--\eqref{rif01}, but now with $\op=\opm$,
M\"obius addition $\op$ admits the gyrodistance function
\begin{equation} \label{tkcflsmb}
d_\op( A , B ) = \|\om A \op B \|
\end{equation}
that obeys the gyrotriangle inequality
\cite[Theorem 6.9]{mybook03}
\begin{equation} \label{rif01t}
d_\op( A , B ) \le d_\op( A , P ) \op d_\op( P , B )
\end{equation}
for any $ A , B , P \in\Rcn$ in a M\"obius gyrovector space $(\Rcn,\op,\od)$.
M\"obius gyrodistance function is invariant under the group of motions of its
M\"obius gyrovector space, that is, under left gyrotranslations and
rotations of the space \cite[Sec.~4]{mybook03}. The gyrotriangle inequality \eqref{rif01t}
reduces to a corresponding gyrotriangle equality,
\begin{equation} \label{rif01ts}
d_\op( A , B ) = d_\op( A , P ) \op d_\op( P , B )
\end{equation}
if and only if point $ P $ lies between points $ A $ and $ B $, that is,
point $ P $ lies on the gyrosegment $ A  B $, as shown in
Fig.~\ref{fig163a1m}.
Accordingly, the gyrodistance function is gyroadditive on gyrolines, as
demonstrated in \eqref{rif01ts} and
illustrated graphically in Fig.~\ref{fig163a1m}.

The one-to-one relationship between M\"obius gyrodistance function
\eqref{tkcflsmb} and the standard Poincar\'e distance function
in the Poincar\'e ball model of hyperbolic geometry is presented in
\cite[Sec.~6.17]{mybook03}.

Einstein coaddition, $\sqp=\sqpe$, in the ball, defined in \eqref{eqdfhn01},
is commutative as shown in \eqref{eincoadd}. Its importance stems from
analogies with classical results that it captures.
In particular, it proves useful in solving the gyrogroup equation
\eqref{eqtghj03}, in the determination of gyromidpoints in
\eqref{eqgyromid}, and in the formulation of the
gyroparallelogram addition law in \eqref{ryuf1} and in Fig.~\ref{fig190k4m}.

\section{M\"obius Coaddition}\label{secc4s}

We now wish to determine M\"obius coaddition in the ball $\Rcn$
by means of the isomorphism between M\"obius and Einstein gyrovector spaces.
Let $\ub_e,\vb_e,\wb_e\in(\Rcn,\ope,\od)$ be three elements of
an Einstein gyrovector space such that
\begin{equation} \label{rkdt}
\wb_e=\ub_e\sqpe\vb_e
\end{equation}
and
let $\ub_m,\vb_m,\wb_m\in(\Rcn,\opm,\od)$ be the corresponding elements of
of the corresponding M\"obius gyrovector space.
Then,
\begin{equation} \label{rkds}
\wb_m=\ub_m\sqpm\vb_m
\end{equation}
where M\"obius coaddition $\sqpm$ in $(\Rcn,\opm,\od)$ is determined from
Einstein coaddition $\sqpe$ in
the following chain of equations, which are numbered for
subsequent explanation.
\begin{equation} \label{grrdkw}
\begin{split}
\ub_m \sqpm \vb_m
&
\overbrace{=\!\!=\!\!=}^{(1)} \hspace{0.2cm}
\wb_m
\\[4pt]&
\overbrace{=\!\!=\!\!=}^{(2)} \hspace{0.2cm}
\half\od\wb_e
\\[4pt]&
\overbrace{=\!\!=\!\!=}^{(3)} \hspace{0.2cm}
\half\od(\ub_e\sqpe\vb_e)
\\[4pt]&
\overbrace{=\!\!=\!\!=}^{(4)} \hspace{0.2cm}
\half\od\left\{
2\od\frac{
\gamma_{_{\ub_e}}^{\phantom{1}} \ub_e
+
\gamma_{_{\vb_e}}^{\phantom{1}} \vb_e
}{
\gamma_{_{\ub_e}}^{\phantom{1}}
+
\gamma_{_{\vb_e}}^{\phantom{1}}
}
\right\}
\\[4pt]&
\overbrace{=\!\!=\!\!=}^{(5)} \hspace{0.2cm}
\frac{
\gamma_{_{\ub_e}}^{\phantom{1}} \ub_e
+
\gamma_{_{\vb_e}}^{\phantom{1}} \vb_e
}{
\gamma_{_{\ub_e}}^{\phantom{1}}
+
\gamma_{_{\vb_e}}^{\phantom{1}}
}
\\[4pt]&
\overbrace{=\!\!=\!\!=}^{(6)} \hspace{0.2cm}
\frac{
2 \gamma_{_{\ub_m}}^{2} \ub_m
+
2 \gamma_{_{\vb_m}}^{2} \ub_m
}{
2 \gamma_{_{\ub_m}}^{2} - 1
+
2 \gamma_{_{\vb_m}}^{2} - 1
}
\\[4pt]&
\overbrace{=\!\!=\!\!=}^{(7)} \hspace{0.2cm}
\frac{
\gamma_{_{\ub_m}}^{2} \ub_m
+
\gamma_{_{\vb_m}}^{2} \ub_m
}{
\gamma_{_{\ub_m}}^{2}
+
\gamma_{_{\vb_m}}^{2} - 1
}
\end{split}
\end{equation}

Derivation of the numbered equalities in \eqref{grrdkw} follows:
\begin{enumerate}
\item\label{kanut1}
The equation in Item (1) is \eqref{rkds}.
\item\label{kanut2}
The equation in Item (2) follows from the isomorphism \eqref{hdkeb1}
between $\wb_m$ in a M\"obius gyrovector space $(\Rcn,\opm,\od)$
and its isomorphic image $\wb_e$ in the isomorphic
Einstein gyrovector space $(\Rcn,\ope,\od)$.
\item\label{kanut3}
Follows from (2) by assumption \eqref{rkdt}.
\item\label{kanut4}
Follows from (3) by \eqref{eincoadd}.
\item\label{kanut5}
Follows from (4) by the scalar associative law of gyrovector spaces,
Sec.~\ref{secc4}.
\item\label{kanut6}
Follows from (5) by \eqref{hdkeb2}.
\end{enumerate}

Hence, by \eqref{grrdkw}, M\"obius coaddition $\sqpm$ in a M\"obius
gyrovector space $(\Rcn,\opm,\od)$ is given by the equation
\begin{equation} \label{kesin}
\ub\sqpm\vb = \frac{
\gamma_{_{\ub}}^{2} \ub
+
\gamma_{_{\vb}}^{2} \ub
}{
\gamma_{_{\ub}}^{2}
+
\gamma_{_{\vb}}^{2} - 1
}
\end{equation}
for all $\ub,\vb\in\Rcn$.

In order to extend \eqref{grrdkw} from M\"obius coaddition of order two
to any order $k$, $k>2$, we rewrite \eqref{ejsnk06} in the form
\begin{equation} \label{mishne01}
\wb_e :=
\vb_{1,e} \sqp_{E,k} \vb_{2,e} \sqp_{E,k} \, \ldots \, \sqp_{E,k} \vb_{k,e}
= 2\od \frac{\sum_{i=1}^{k}\gvbie\vb_{i,e}}{2+\sum_{i=1}^{k}(\gvbie -1)}
\end{equation}
where $\vb_{i,e} \in (\Rcn,\ope,\od)$, $i=1,\ldots,k$, are $k$ elements
of an Einstein gyrovector space and where $\wb_e\in (\Rcn,\ope,\od)$
is their {\it cosum}, $\sqp_{E,k}$ being the Einstein $k$-ary cooperation,
that is, the Einstein cooperation of order $k$.

Let $\vb_{i,m}$, $i=1,\ldots,k$, and $\wb_m$ be the respective
isomorphic images of $\vb_{i,e}$, and $\wb_e$
in the corresponding M\"obius gyrovector space $(\Rcn,\opm,\od)$,
under isomorphism \eqref{hdkeb1}. Then,
\begin{equation} \label{mishne02}
\wb_m =
\vb_{1,m} \sqp_{M,k} \vb_{2,m} \sqp_{M,k} \, \ldots \, \sqp_{M,k} \vb_{k,m}
\end{equation}
where M\"obius coaddition of order $k$, $\sqp_{M,k}$,
is to be determined in the chain of equations below,
which are numbered for subsequent interpretation:
\begin{equation} \label{mishne03}
\begin{split}
\vb_{1,m} \sqp_{M,k} & \vb_{2,m} \sqp_{M,k} \, \ldots \, \sqp_{M,k} \vb_{k,m}
\overbrace{=\!\!=\!\!=}^{(1)} \hspace{0.2cm}
\wb_m
\\[4pt]&
\overbrace{=\!\!=\!\!=}^{(2)} \hspace{0.2cm}
\half\od\wb_e
\\[4pt]&
\overbrace{=\!\!=\!\!=}^{(3)} \hspace{0.2cm}
\half\od(\vb_{1,e} \sqp_{E,k} \vb_{2,e} \sqp_{E,k} \, \ldots \, \sqp_{E,k} \vb_{k,e})
\\[4pt]&
\overbrace{=\!\!=\!\!=}^{(4)} \hspace{0.2cm}
\half\od\left\{
2\od \frac{\sum_{i=1}^{k}\gvbie\vb_{i,e}}{2+\sum_{i=1}^{k}(\gvbie -1)}
\right\}
\\[4pt]&
\overbrace{=\!\!=\!\!=}^{(5)} \hspace{0.2cm}
\frac{\sum_{i=1}^{k}\gvbie\vb_{i,e}}{2+\sum_{i=1}^{k}(\gvbie -1)}
\\[4pt]&
\overbrace{=\!\!=\!\!=}^{(6)} \hspace{0.2cm}
\frac{2\sum_{i=1}^{k} \gamma_{\vb_i,m}^2 \vb_{i,m}
}{
2+\sum_{i=1}^{k}(2\gamma_{\vb_i,m}^2-2)
}
\\[4pt]&
          {=\!\!=\!\!=}^{~~~} \hspace{0.2cm}
\frac{ \sum_{i=1}^{k} \gamma_{\vb_i,m}^2 \vb_{i,m}
}{
1+\sum_{i=1}^{k}(\gamma_{\vb_i,m}^2-1)
}
\end{split}
\end{equation}

Derivation of the numbered equalities in \eqref{grrdkw} follows:
\begin{enumerate}
\item\label{komot1}
The equation in Item (1) is \eqref{mishne02}.
\item\label{komot2}
The equation in Item (2) follows from the isomorphism \eqref{hdkeb1}
between $\wb_m$ in a M\"obius gyrovector space $(\Rcn,\opm,\od)$
and its isomorphic image $\wb_e$ in the isomorphic
Einstein gyrovector space $(\Rcn,\ope,\od)$.
\item\label{komot3}
Follows from (2) by the assumption in \eqref{mishne01}.
\item\label{komot4}
Follows from (3) by the equation in \eqref{mishne01}.
\item\label{komot5}
Follows from (4) by the scalar associative law of gyrovector spaces,
Sec.~\ref{secc4}.
\item\label{komot6}
Follows from (5) by \eqref{hdkeb2}.
\end{enumerate}

Hence, by \eqref{mishne03},
M\"obius coaddition of order $k$, $\sqp_{M,k}$ in a M\"obius
gyrovector space $(\Rcn,\opm,\od)$ is given by the equation
\begin{equation} \label{mishne04}
\vb_{1,m} \sqp_{M,k} \vb_{2,m} \sqp_{M,k} \, \ldots \, \sqp_{M,k} \vb_{k,m}
=
\frac{ \sum_{i=1}^{k} \gamma_{\vb_i,m}^2 \vb_{i,m}
}{
1+\sum_{i=1}^{k}(\gamma_{\vb_i,m}^2-1)
}
\end{equation}
for all $\vb_{i,m}\in(\Rcn,\opm,\od)$, $i=1,\ldots,k$.

\section{M\"obius double-gyroline} \label{secg2a}

\begin{theorem}\label{mudyn}
Let $A,B\in\Rcn$ be any two distinct points of a
M\"obius gyrovector space $(\Rcn,\op,\od)$, and let
\begin{equation} \label{kromd01}
\LAB(t) = A\op(\om A\op B)\od t
\end{equation}
$t\in\Rb$, be the gyroline that passes through these points.
Then,
\begin{equation} \label{kromd02}
2\od\LAB(t) = A\sqp\LAB(2t)
\end{equation}
\end{theorem}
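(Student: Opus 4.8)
The plan is to reduce \eqref{kromd02} to three facts already on the table: the Two-Sum Identity \eqref{hyets1}, the gyrocommutative-gyrogroup identity \eqref{fromu1}, and the scalar associative law of gyrovector spaces. First I would abbreviate the gyrovector that generates the gyroline by setting $\cb := \om A\op B$, so that \eqref{kromd01} becomes $\LAB(t) = A\op\cb\od t$ and the claim \eqref{kromd02} reads $2\od(A\op\cb\od t) = A\sqp\bigl(A\op\cb\od(2t)\bigr)$. Everything is then a matter of pushing the scalar $2$ through the right-hand side of the gyroline equation.

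Next I would apply the Two-Sum Identity \eqref{hyets1}, which is a gyrovector-space identity and hence holds in the M\"obius gyrovector space $(\Rcn,\op,\od)$, with $\ub$ replaced by $A$ and $\vb$ replaced by $\cb\od t$:
\[
2\od(A\op\cb\od t) = A\op\bigl(2\od(\cb\od t)\op A\bigr).
\]
By the scalar associative law $(r_1 r_2)\od\vb = r_1\od(r_2\od\vb)$ together with the convention $\vb\od r = r\od\vb$, we have $2\od(\cb\od t) = \cb\od(2t)$, so the right-hand side becomes $A\op\bigl(\cb\od(2t)\op A\bigr)$. Now I would invoke \eqref{fromu1}, $\ab\op(\bb\op\ab) = \ab\sqp(\ab\op\bb)$, valid in any gyrocommutative gyrogroup and therefore in $(\Rcn,\op,\od)$, with $\ab = A$ and $\bb = \cb\od(2t)$; this turns $A\op\bigl(\cb\od(2t)\op A\bigr)$ into $A\sqp\bigl(A\op\cb\od(2t)\bigr)$. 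Finally, recognizing that $A\op\cb\od(2t) = A\op(\om A\op B)\od(2t) = \LAB(2t)$ by \eqref{kromd01} with parameter $2t$ yields exactly \eqref{kromd02}.

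Since each step is a literal substitution into a previously established identity, I do not expect a genuine obstacle; the only real content is spotting that the Two-Sum Identity followed by \eqref{fromu1} is precisely the mechanism that converts a point doubled on a gyroline into a coaddition with the point at twice the parameter. The one point worth stating carefully in the write-up is that \eqref{hyets1} and \eqref{fromu1}, although displayed above within the Einstein discussion, are genuine gyrovector-space (respectively gyrocommutative-gyrogroup) identities, so they transfer verbatim to the M\"obius gyrovector space by virtue of the isomorphism of Section~\ref{secf2} and the fact that $(\Rcn,\op,\od)$ is itself a gyrocommutative gyrogroup with the same scalar multiplication $\od$.
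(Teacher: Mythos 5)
Your proof is correct and follows essentially the same route as the paper's: the paper likewise abbreviates the generating gyrovector (as $F_1(t)=(\om A\op B)\od t$, with $F_2(t)=2\od F_1(t)=F_1(2t)$ by the scalar associative law) and then applies the Two-Sum Identity \eqref{hyets1} followed by the identity \eqref{fromu1} to obtain $2\od(A\op F_1)=A\op(F_2\op A)=A\sqp(A\op F_2)$. Your closing remark that these identities hold verbatim in the M\"obius gyrovector space is a fine clarification, though the paper simply uses them directly since they are general gyrovector-space/gyrocommutative-gyrogroup identities.
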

\begin{proof}
Let
\begin{equation} \label{kromd03}
\begin{split}
F_1(t) &= (\om A\op B)\od t \\
F_2(t) &= 2\od F_1(t)
\end{split}
\end{equation}
so that we have, by the scalar associative law of gyrovector spaces,
\begin{equation} \label{kromd04}
\begin{split}
F_2(t) &= 2\od F_1(t) \\
&= 2\od(\om A\op B)\od t \\
&= (\om A\op B)\od(2t) \\
&= F_1(2t)
\end{split}
\end{equation}

Hence, by \eqref{kromd03}\,--\,\eqref{kromd04},
\eqref{kromd02} can be written equivalently as
\begin{equation} \label{kromd05}
2\od(A\op F_1(t)) = A\sqp(A\op F_1(2t)) = A\sqp(A\op F_2(t))
\end{equation}
so that instead of verifying \eqref{kromd02} we can, equivalently,
verify \eqref{kromd05}.

The proof of  \eqref{kromd05} is given by
the following chain of equations, which are numbered for subsequent
derivation.

\begin{equation} \label{kromd06}
\begin{split}
2\od(A\op F_1)
&
\overbrace{=\!\!=\!\!=}^{(1)} \hspace{0.2cm}
A\op(2\od F_1\op A)
\\[2pt]&
\overbrace{=\!\!=\!\!=}^{(2)} \hspace{0.2cm}
A\op(F_2\op A)
\\[2pt]&
\overbrace{=\!\!=\!\!=}^{(3)} \hspace{0.2cm}
A\sqp(A\op F_2)
\end{split}
\end{equation}
as desired.

Derivation of the numbered equalities in \eqref{kromd06} follows:
\begin{enumerate}
\item\label{katon1}
Follows from the Two-Sum Identity \eqref{hyets1}.
\item\label{katon2}
Follows from \eqref{kromd03}.
\item\label{katon3}
Follows from \eqref{fromu1}.
\end{enumerate}
 
\end{proof}

  
\begin{figure}[t]  
 \centering         
\psfrag{A}{$A$}
\psfrag{B}{$B$}
\psfrag{A2}{$2\od A$}
\psfrag{B2}{$2\od B$}
\psfrag{ApLAB2t}[]{$A\sqp\LAB(2t)$}
\psfrag{TLABt}[]{$2\od\LAB(t)$}
\psfrag{pah}[]{$\half\od A$}
\psfrag{pbh}[]{$B\om \half\od A$}
\psfrag{LABt}[]{$\LAB(t)$}
\psfrag{EA}[]{$E_{^A}$}
\psfrag{EB}[]{$E_{^B}$}
\psfrag{formula01}[]{$L_{^{\half\od A,B\om\half\od A}}(t)$}
\psfrag{formula02}[]{$-\infty<t<\infty$}
\psfrag{formula01}[]{$\boxed{2\od\LAB(t)=A\sqp\LAB(2t)}$}
 \includegraphics[width=9cm]{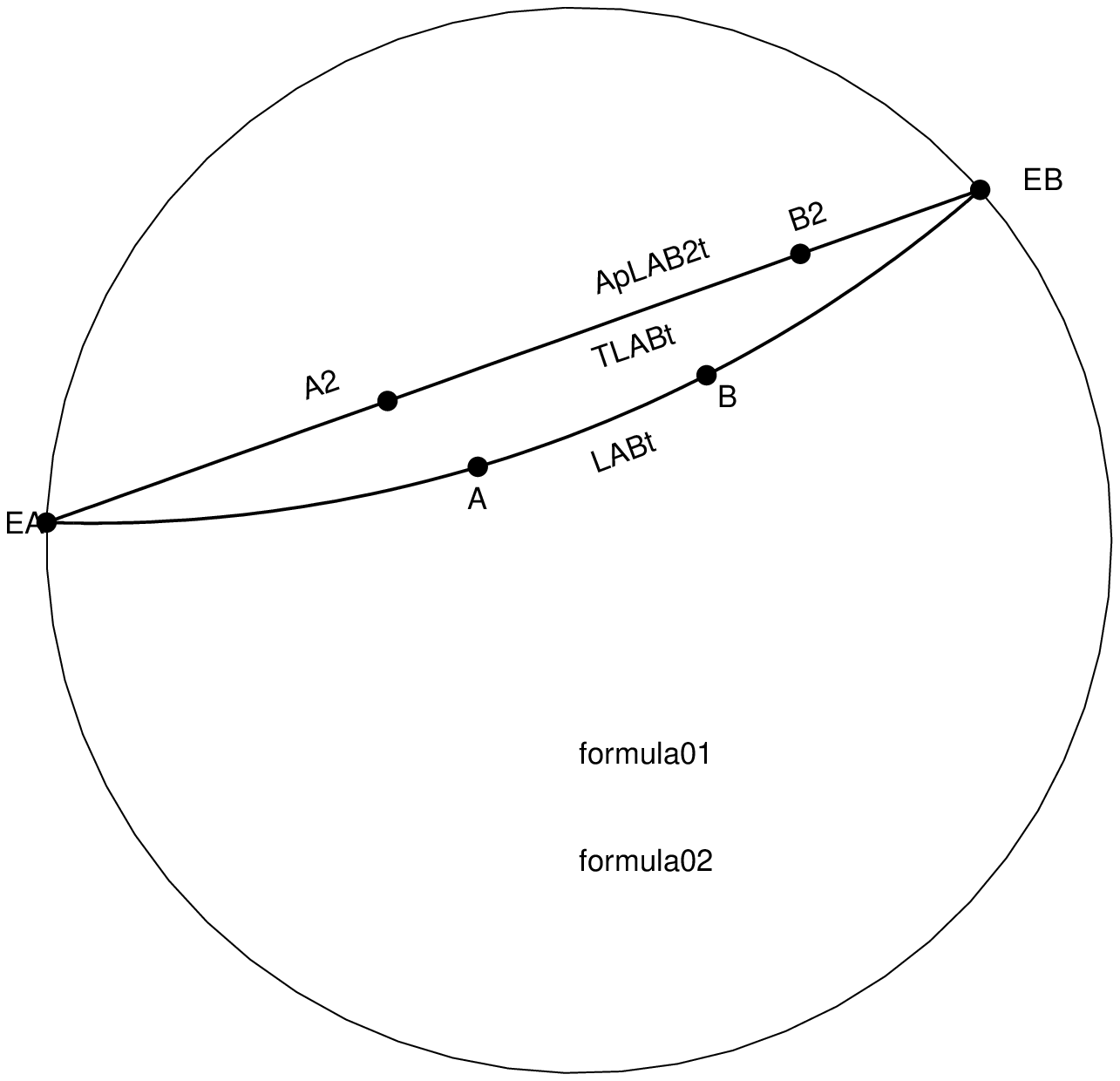}
\caption[]{
$A$ and $B$ are any two given distinct points of a M\"obius gyrovector space $(\Rcn,\op,\od)$.
The gyroline that passes through the points $A,B\in\Rcn$ is $\LAB(t)$, $-\infty<t<\infty$,
and its corresponding double-gyroline is $2\od\LAB(t)$, so that is passes through the points
$2\od A,2\od B\in\Rcn$. The latter turns out to be the
Euclidean straight line in the ball that passes through the points
$2\od A$ and $2\od B$. Furthermore, the double-gyroline $2\od\LAB(t)$, parametrized by $t$,
is identical with the cogyrotranslation by $A$,
$A\sqp\LAB(2t)$, of its gyroline, parametrized by $2t$,
as shown here for $n=2$.
\label{fig303m}}
\end{figure}


We may remark that in the Euclidean limit, when the radius $c$ of the
ball $\Rcn$ tends to $\infty$, the ball expands to the whole of its
Euclidean $n$-space $\Rn$, both M\"obius addition $\op$ and
coaddition $\sqp$ in the ball $\Rcn$ reduce to the common vector addition +
in the space $\Rn$, and Identity \eqref{kromd02} of Theorem \ref{mudyn}
in the ball $\Rcn$ reduces to the trivial identity in $\Rn$,
\begin{equation} \label{knomd}
2[A+(-A+B)t] = A+[A+(-A+B)2t]
\end{equation}
Thus, we see once again that in order to capture analogies with
classical results, both gyrogroup operation and cooperation
must be considered.

Theorem \ref{mudyn} suggests the following definition:
\begin{definition}\label{defdmbgyr}
{\bf (M\"obius double-gyroline).}
{\it
Let $A,B\in\Rcn$ be two distinct points of a
M\"obius gyrovector space $(\Rcn,\op,\od)$, and let
\begin{equation} \label{kromd01s}
\LAB(t) = A\op(\om A\op B)\od t
\end{equation}
$t\in\Rb$, be the gyroline that passes through these points.
The M\"obius double-gyroline $\PAB(t)$ of gyroline $\LAB(t)$ is
the curve given by the equation
\begin{equation} \label{kromd02s}
\PAB(t) = 2\od\LAB(t)
\end{equation}
$t\in\Rb$.
}
\end{definition}

Following Def.~\ref{defdmbgyr}, the gyrovector space identity
\eqref{kromd02} of Theorem \ref{mudyn} states that the
double-gyroline of a given gyroline that passes through a point $A$
in a M\"obius gyrovector space coincides with the
cogyrotranslation by $A$ of the gyroline.

Remarkably, the double-gyroline of a given gyroline in a
M\"obius gyrovector space turns out to be the supporting chord
of the gyroline, as shown in Fig.~\ref{fig303m}
and studied in Sec.~\ref{secg2b}.

Identity \eqref{kromd02} of Theorem \ref{mudyn} can be written, equivalently, as
\begin{equation} \label{gfne1}
\LAB(t) = \half\od(A\sqp\LAB(2t))
\end{equation}
Let $P(t)$ be a generic point on a gyroline $\LAB(t)$ for some value $t$
of the gyroline parameter $t$, so that $P(0)=A$ and $P(1)=A$.
Then, \eqref{gfne1} implies the equation
\begin{equation} \label{gfne2}
P(t) = \half\od P(0)\sqp P(2t)
\end{equation}
$t\in\Rb$. Equation \eqref{gfne2}, in turn, demonstrates that any point $P(t)$
of a gyroline $\LAB(t)$ is the midpoint of the points $P(0)=A$ and $P(2t)$ of the
gyroline, as explained in \eqref{eqgyromid}, p.~\pageref{eqgyromid}.

\section{Euclidean straight lines in M\"obius gyrovector spaces} \label{secg2b}

Euclidean straight lines (lines, in short)
appear naturally in Einstein gyrovector space balls
where they form gyrolines, as shown in Fig.~\ref{fig174b3enm}.
In this section we employ the isomorphism \eqref{eq3ejksa}
between Einstein and M\"obius gyrovector spaces for the task of expressing
lines in M\"obius gyrovector spaces.

Let $A,B\in(\Rcn,\opm,\od)$ be two distinct points of a
M\"obius gyrovector space. We know that the unique gyroline in an
Einstein gyrovector spaces $(\Rcn,\ope,\od)$ that passes through
the points $A,B\in\Rcn$ is the set of point $\PAB(t)$ given by
\begin{equation} \label{frut01}
\PAB(t) = A\ope(\ome A\ope B)\od t
\end{equation}
$t\in\Rb$. It is the intersection of a line and the ball $\Rcn$,
as shown in Fig.~\ref{fig174b3enm} for the disc $\Rctwo$.
This line passes through the point $A$ when $t=0$
and through the point $B$ when $t=1$.

Unlike Einstein gyrolines,
which are line segments, M\"obius gyrolines are Euclidean circular arcs
that intersect the boundary of the ball $\Rcn$ orthogonally,
as shown in Fig.~\ref{fig174a1m} for the disc $\Rctwo$.
In order to accomplish the task we face in this section, in
the following chain of equations \eqref{frut02} we express
\eqref{frut01} in terms of M\"obius addition $\opm$ rather than
Einstein addition $\ope$, noting that both
Einstein and M\"obius scalar multiplication $\od$ are identically the same,
as remarked in Sec.~\ref{secf2}.
Starting from \eqref{frut01}, we have the following
chain of equations, which are numbered for subsequent derivation:

\begin{equation} \label{frut02}
\begin{split}
\PAB(t) & {=\!\!=\!\!=} \hspace{0.2cm}
A\ope(\ome A\ope B)\od t
\\&
\overbrace{=\!\!=\!\!=}^{(1)} \hspace{0.2cm}
A\ope((-A)\ope B)\od t
\\&
\overbrace{=\!\!=\!\!=}^{(2)} \hspace{0.2cm}
2\od\{\half\od A \opm\half\od((-A)\ope B)\od t\}
 \\&
 \overbrace{=\!\!=\!\!=}^{(3)} \hspace{0.2cm}
\half\od A \opm\{((-A)\ope B)\od t\opm \half\od A\}
\\&
\overbrace{=\!\!=\!\!=}^{(4)} \hspace{0.2cm}
\half\od A \opm\{2\od((-\half\od A)\opm\half B)\od t\opm\half\od A\}
\\&
\overbrace{=\!\!=\!\!=}^{(5)} \hspace{0.2cm}
\half\od A \opm\{[(-\half\od A)\opm(B\opm(-\half\od A))]\od t \opm \half\od A\}
\\&
\overbrace{=\!\!=\!\!=}^{(6)} \hspace{0.2cm}
\half\od A \sqpm\{\half\od A\opm[(-\half\od A) \opm(B\opm(-\half\od A))]\od t\}
\\&
\overbrace{=\!\!=\!\!=}^{(7)} \hspace{0.2cm}
\half\od A \sqpm\{\half\od A\opm[\omm\half\od A \opm(B\omm\half\od A)]\od t\}
\end{split}
\end{equation}

Hence, by \eqref{frut02},
\begin{equation} \label{furna}
\PAB(t) =
\half\od A \sqpm\{\half\od A\opm[\omm\half\od A \opm(B\omm\half\od A)]\od t\}
\end{equation}
Derivation of the numbered equalities in \eqref{frut02} follows:
\begin{enumerate}
\item \label{uekif1}
Follows from the result that $\ome A=-A$ (as well as $\omm A=-A$; see
Item \ref{uekif7} below).
\item \label{uekif2}
Follows from isomorphism \eqref{eq3ejksa} between $\ope$ and $\opm$,
applying the isomorphism to the first $\ope$ in (1).
\item \label{uekif3}
Follows from the Two-Sum Identity, \eqref{hyets1}.
\item \label{uekif4}
Again, follows from isomorphism \eqref{eq3ejksa} between $\ope$ and $\opm$,
as in Item \ref{uekif2}, now
applying the isomorphism to the remaining $\ope$ in (3).
\item \label{uekif5}
Again, follows from the gyrogroup {\it Two-Sum Identity}, as in Item \ref{uekif3}.
\item \label{uekif6}
Follows from the gyrogroup identity \eqref{fromu1},
\begin{equation} \label{hyets2}
A\op(B\op A) = A\sqp (A\op B)
\end{equation}
\item \label{uekif7}
Follows from the result that $\omm A=-A$ (as well as $\ome A=-A$; see
Item \ref{uekif1} above).
\end{enumerate}

In both \eqref{frut01} and \eqref{furna}, the set of points $\PAB(t)$, $t\in\Rb$,
forms a line in the ball $\Rcn$ of the M\"obius gyrovector space
$(\Rcn,\opm,\od)$, where the points $A$ and $B$ lie.
In \eqref{frut01} this line is expressed in terms of operations of
Einstein gyrovector spaces
while in \eqref{furna} this line is expressed in terms of operations of
M\"obius gyrovector spaces, obtained by means of isomorphism \eqref{eq3ejksa} between
Einstein and M\"obius gyrovector spaces.
By \eqref{furna} we have the following theorem:

\begin{theorem}\label{khdkn}
Let $A$ and $B$ be two distinct points in a M\"obius gyrovector space
$(\Rcn,\opm,\od)$. The unique line that passes through these points,
Fig.~\ref{fig300m}, is given by the equation
\begin{equation} \label{frut03}
\PAB(t) =
\half\od A \sqpm\{\half\od A\opm[\omm\half\od A \opm(B\omm\half\od A)]\od t\}
\end{equation}
\end{theorem}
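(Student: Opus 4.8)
The plan is to reuse the chain of equalities \eqref{frut02} already displayed above, which transports the Einstein parametrization \eqref{frut01} of the line through $A$ and $B$ into the language of M\"obius operations, arriving at the formula \eqref{furna} that is the assertion \eqref{frut03}. The starting point is the fact, recalled around \eqref{eqrfgh01en} and \eqref{frut01}, that in the Einstein gyrovector space $(\Rcn,\ope,\od)$ the curve $t\mapsto A\ope(\ome A\ope B)\od t$ is the unique Einstein gyroline through the distinct points $A$ and $B$, and that Einstein gyrolines are precisely the Euclidean chords of $\Rcn$. Since a Euclidean line in $\Rcn$ is an object of the underlying vector space and is independent of which gyroaddition we use to coordinatize the ball, it suffices to re-express this same curve using $\opm$, $\sqpm$ and the common scalar multiplication $\od$. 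One should stress that the identities \eqref{eq3ejksa} are invoked here as algebraic rewriting rules for the fixed points $A$ and $B$, not as a relabeling of points, so the traced-out point set is unchanged.

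The rewriting is carried out in the seven labeled steps of \eqref{frut02}. First replace $\ome A$ by the ordinary negation $-A$ (the left inverse in an Einstein, or M\"obius, gyrovector space coincides with vector negation). Next apply \eqref{eq3ejksa} in the form $A\ope X = 2\od(\half\od A\opm\half\od X)$ to the outer Einstein addition. Because $\od$ is non-distributive over $\opm$, the factor $2\od$ cannot be pushed inside directly; instead invoke the Two-Sum Identity \eqref{hyets1}, $2\od(\ub\opm\vb)=\ub\opm(2\od\vb\opm\ub)$, together with the scalar associative law, to open it up. Apply \eqref{eq3ejksa} a second time to the now-exposed inner Einstein addition, then the Two-Sum Identity once more, reaching an expression of the form $\half\od A\opm(X\opm\half\od A)$ with $X=[(-\half\od A)\opm(B\opm(-\half\od A))]\od t$. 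Convert this to coaddition by the gyrocommutative gyrogroup identity \eqref{fromu1}, i.e. $A\op(B\op A)=A\sqp(A\op B)$, and finally rewrite the remaining negations using $\omm A=-A$. The outcome is exactly \eqref{frut03}.

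Since \eqref{frut02} is a pointwise identity valid for every $t\in\Rb$, the formula \eqref{frut03} passes through $A$ at $t=0$ and through $B$ at $t=1$ and has the same image as \eqref{frut01}, namely the Euclidean chord through $A$ and $B$, as in Fig.~\ref{fig174b3enm}. This establishes both that \eqref{frut03} describes a Euclidean line and that it is the unique one through the two given points.

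The only real difficulty is bookkeeping. Because $\od$ does not distribute over $\opm$, every move of a scalar factor across a gyroaddition must be justified by the Two-Sum Identity rather than by a distributive law, and the isomorphism identity \eqref{eq3ejksa} must be applied to the correct binary operation at each stage --- first the outer $\ope$, then the inner $\ope$ --- with the unary negations tracked consistently throughout. Once this is handled carefully, each of the seven steps of \eqref{frut02} is a direct instance of an identity already established in the paper, so no further idea is needed.
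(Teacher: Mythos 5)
Your proposal is correct and follows essentially the same route as the paper: the paper's own justification of \eqref{frut03} is precisely the seven-step chain \eqref{frut02} you rehearse (identifying $\ome$ and $\omm$ with ordinary negation, applying the isomorphism \eqref{eq3ejksa} to the outer and then the inner Einstein addition, opening the resulting $2\od$ factors with the Two-Sum Identity \eqref{hyets1} together with the scalar associative law, and converting to coaddition via \eqref{fromu1}). Your additional remarks---that \eqref{eq3ejksa} is used as a rewriting rule on fixed points so the traced point set is unchanged, and that the image is the unique Euclidean chord through $A$ and $B$---merely make explicit what the paper leaves implicit.
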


Let $A,B\in\Rcn$ be any two distinct points in a M\"obius gyrovector space
$(\Rcn,\opm,\od)$, and let
$L_{^{\half\od A,B\om\half\od A}}(t)$, $t\in\Rb$,
be the unique gyroline through the points
$\half\od A$ and $B\om \half\od A$.
Then, as shown in Fig.~\ref{fig174a1m}, the gyroline is given by the equation
\begin{equation} \label{fuly01}
L_{^{\half\od A,B\om\half\od A}}(t) =
\half\od A\op[\om\half\od A\op(B\om\half\od A)]\od t
\end{equation}
so that \eqref{frut03} can be written as
\begin{equation} \label{fuly02}
\PAB(t) = \half\od A \sqp
L_{^{\half\od A,B\om\half\od A}}(t)
\end{equation}

The line $\PAB(t)$ of Theorem \ref{khdkn} in \eqref{frut03}
is recognized by means of \eqref{fuly01}\,--\,\eqref{fuly02} as the
{\it cogyrotranslation} by $\half\od A$ of the
M\"obius gyroline \eqref{fuly01} that passes through the points
$\half\od A$ and $B\om\half\od A$.

As shown in Fig.~\ref{fig301m}, the line $\PAB(t)$ is the
{\it supporting chord} of the gyroline
$L_{^{\half\od A,B\om\half\od A}}(t)$.

Let
\begin{equation} \label{fuly03}
\begin{split}
C &= \half\od A \\
D &= B\om\half A
\end{split}
\end{equation}

Then, by the scalar associative law of gyrovector spaces and by the
right cancellation law \eqref{eq01bbb}, we have
\begin{equation} \label{fuly04}
\begin{split}
A &= 2\od C \\
B &= D\sqp\half\od A = D\sqp C = C\sqp D
\end{split}
\end{equation}
so that \eqref{fuly02} can be written as
\begin{equation} \label{fuly05}
P_{^{2\od C,C\sqp D}}^{\phantom{o}} =
C\sqp L_{^{CD}}^{\phantom{o}} (t)
\end{equation}
thus leading to the following theorem:

\begin{theorem}\label{khdkns}
Let $C,D\in\Rcn$ be two distinct points in a M\"obius gyrovector space
$(\Rcn,\opm,\od)$, and let
\begin{equation} \label{fuly06}
L_{^{CD}}^{\phantom{o}} (t) = C\op(\om C\op D)\od t
\end{equation}
$t\in\Rb$, be the gyroline that passes through the points $C$ and $D$.
Then, the supporting chord of gyroline $L_{^{CD}}^{\phantom{o}} (t)$
is the line given by the cogyrotranslation of the gyroline by $C$,
\begin{equation} \label{fuly07}
C\sqp L_{^{CD}}^{\phantom{o}} (t)
\end{equation}

Furthermore, the supporting chord passes through the points
$P_1,P_2,P_3$, Fig.~\ref{fig302m}, where
\begin{equation} \label{fuly08}
\begin{split}
P_1 &= C \sqp C = 2\od C \\
P_2 &= D \sqp D = 2\od D \\
P_3 &= C \sqp D
\end{split}
\end{equation}
\end{theorem}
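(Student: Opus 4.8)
The plan is to obtain the first assertion essentially for free from the chain of equations \eqref{fuly03}--\eqref{fuly05} that precedes the statement, and to obtain the three points $P_1,P_2,P_3$ by evaluating the parametrized supporting chord at $t=0$ and $t=1$ and then invoking a short symmetry (or collinearity) argument for the remaining one.

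First I would set up the dictionary already used in \eqref{fuly03}--\eqref{fuly04}: given the two points $C,D$, put $A=2\od C$ and $B=C\sqp D$. Then $C=\half\od A$ by the scalar associative law, $D=B\om\half\od A$ by the dual right cancellation law \eqref{eq01bbb}, and $B=D\sqp C=C\sqp D$ by commutativity of the cooperation in a gyrocommutative gyrogroup. Substituting this into \eqref{fuly02}/\eqref{fuly05} shows that the cogyrotranslation $C\sqp L_{CD}(t)$ is precisely the curve $\PAB(t)$ of Theorem \ref{khdkn} for these particular $A,B$; since that curve was identified (Theorem \ref{khdkn} together with the remark accompanying Fig.~\ref{fig301m}) as the supporting chord of the gyroline $L_{\half\od A,\,B\om\half\od A}(t)=L_{CD}(t)$, the first claim follows.

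For the three points I would evaluate $C\sqp L_{CD}(t)$ at $t=0$ and $t=1$. Since $L_{CD}(0)=C$, the $t=0$ point is $C\sqp C$, and \eqref{eqgyromid} applied to the coinciding endpoints $C,C$ (the gyromidpoint of $C$ with itself is $C=\half\od(C\sqp C)$) gives $C\sqp C=2\od C=P_1$. Since $L_{CD}(1)=D$ by the left cancellation law \eqref{eq01b}, the $t=1$ point is $C\sqp D=P_3$. For $P_2=D\sqp D=2\od D$ I would use that $L_{CD}$ and $L_{DC}$ are the same gyroline as a point set (uniqueness of the gyroline through two points), hence have the same supporting chord; applying the first claim with $C$ and $D$ interchanged, that chord is also the curve $D\sqp L_{DC}(t)$, whose value at $t=0$ is $D\sqp D=2\od D=P_2$. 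Alternatively, a self-contained collinearity argument works: from \eqref{kesin} and the closed forms $2\od C=\frac{2\gamma_{_C}^2}{2\gamma_{_C}^2-1}\,C$, $2\od D=\frac{2\gamma_{_D}^2}{2\gamma_{_D}^2-1}\,D$ (read off from \eqref{eqmlt03}, or from \eqref{hdkeb2}) one checks that $P_3$ equals the convex combination of $P_1$ and $P_2$ with the positive weights $2\gamma_{_C}^2-1$ and $2\gamma_{_D}^2-1$, so $P_1,P_2,P_3$ are collinear; as $P_1$ and $P_3$ already lie on the supporting chord and $P_1\ne P_3$ when $C\ne D$, the point $P_2$ does too.

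The step I expect to be the real obstacle is the handling of ``supporting chord'': one must make sure it is well defined as a function of the gyroline viewed as a point set, so that the $C\leftrightarrow D$ symmetry is legitimate and the parametrizations $C\sqp L_{CD}(t)$ and $D\sqp L_{DC}(t)$ sweep the same Euclidean chord. If one wishes to bypass this, the collinearity computation is fully rigorous on its own; its only mildly delicate ingredient is the $\tanh$ double-angle identity behind the closed form for $2\od C$, after which everything reduces to routine algebra with gamma factors.
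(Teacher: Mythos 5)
Your argument is correct, and for the main claim it is essentially the paper's own route run in reverse: the paper obtains Theorem \ref{khdkns} precisely by the substitution \eqref{fuly03}--\eqref{fuly04}, which turns \eqref{fuly02} into \eqref{fuly05}, so that $C\sqp L_{CD}(t)$ is the line $P_{2\od C,\,C\sqp D}(t)$ of Theorem \ref{khdkn}, passing through $P_1=2\od C$ at $t=0$ and $P_3=C\sqp D$ at $t=1$, with the ``supporting chord'' identification imported, there as in your write-up, from Theorem \ref{khdkn} and the observation accompanying Fig.~\ref{fig301m}. Where you genuinely add something is $P_2$: the paper gives no explicit argument for it (it is read off Fig.~\ref{fig302m}), and the shortest in-text derivation is via Theorem \ref{mudyn}, which gives $2\od D=2\od L_{CD}(1)=C\sqp L_{CD}(2)$, so $P_2$ even lies on the parametrized curve at $t=2$. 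Both of your substitutes are sound: the $C\leftrightarrow D$ symmetry is legitimate because the supporting chord depends only on the gyroline as a point set, and the collinearity computation checks out --- using the correct form of \eqref{kesin}, whose numerator should read $\gamma_{\ub}^{2}\ub+\gamma_{\vb}^{2}\vb$ (a typo in the paper), together with $2\od C=\tfrac{2\gamma_{_C}^{2}}{2\gamma_{_C}^{2}-1}\,C$, one indeed finds that $C\sqp D$ is the convex combination of $2\od C$ and $2\od D$ with positive weights $2\gamma_{_C}^{2}-1$ and $2\gamma_{_D}^{2}-1$. Two minor nits: the identity you need for $D=B\om\half\od A$, namely $(D\sqp C)\om C=D$, is the right cancellation law \eqref{eq01bb} with $\ab$ replaced by $\om\ab$ (or follows from \eqref{eq01bbb} via bijectivity of right translations), not literally \eqref{eq01bbb} as cited; and $C\sqp C=2\od C$ is immediate from Def.~\ref{defdual} since $\gyr[C,\om C]$ is trivial, so the detour through the gyromidpoint formula \eqref{eqgyromid} (stated there for Einstein gyrovector spaces) is unnecessary.
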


Let $Q=\omm A\opm B$ so that, by the gyrogroup left cancellation law, \eqref{eq01b},
$A\opm Q=B$. Restricting the line parameter $t\in\Rb$ to $t\ge0$, we obtain the
Euclidean ray (ray, in short) $\PAB(t), t\ge0$. It is the ray with edge $A$ that
contains the point $A\opm Q=B$, which is the right gyrotranslation of $A$
by $Q$. As such, it contains the sequence of all successive
right gyrotranslation of $A$ by $Q$, that is,
the sequence
$P_0=A$, $P_1=A\opm Q$, $P_2=(A\opm Q)\opm Q$, $P_3=((A\opm Q)\opm Q)\opm Q$, {\it etc,}
as shown in Fig.~\ref{fig300m}.

  
\begin{figure}[t]  
 \centering         
 \psfrag{C}{$C$}
 \psfrag{D}{$D$}
 \psfrag{P1}[]{$P_{1}$}
 \psfrag{P2}[]{$P_{2}$}
 \psfrag{P3}[]{$P_{3}$}
 \psfrag{EC}[]{$E_{^C}$}
 \psfrag{ED}[]{$E_{^D}$}
 \psfrag{formula01}[]{$P_1=2\od C=C\sqp C$}
 \psfrag{formula02}[]{$P_2=2\od D=D\sqp D$}
 \psfrag{formula03}[]{$P_3=C\sqp D$}
 \includegraphics[width=9cm]{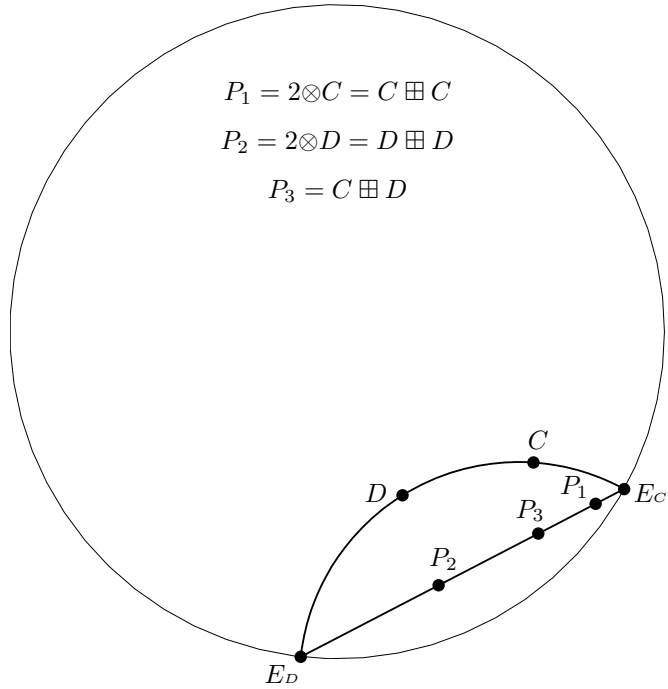}
\caption[]{
The Euclidean line $\PAB(t),~-\infty<t<\infty$, that passes through
the point $A$ and $B$
in a M\"obius gyrovector plane $(\Rstwo,\op,\od)$
is shown along with ...
TO BE COMPLETED
\label{fig302m}}
\end{figure}


  
\begin{figure}[t]  
 \centering         
\psfrag{A}{$A$}
\psfrag{B}{$B$}
\psfrag{90}{$P_0=\PAB(0)=A$}
\psfrag{91}{$P_1=\PAB(1)=B$}
\psfrag{92}{$P_2$}
\psfrag{93}{$P_3$}
\psfrag{94}{$P_4$}
\psfrag{95}{$P_5$}
 \includegraphics[width=9cm]{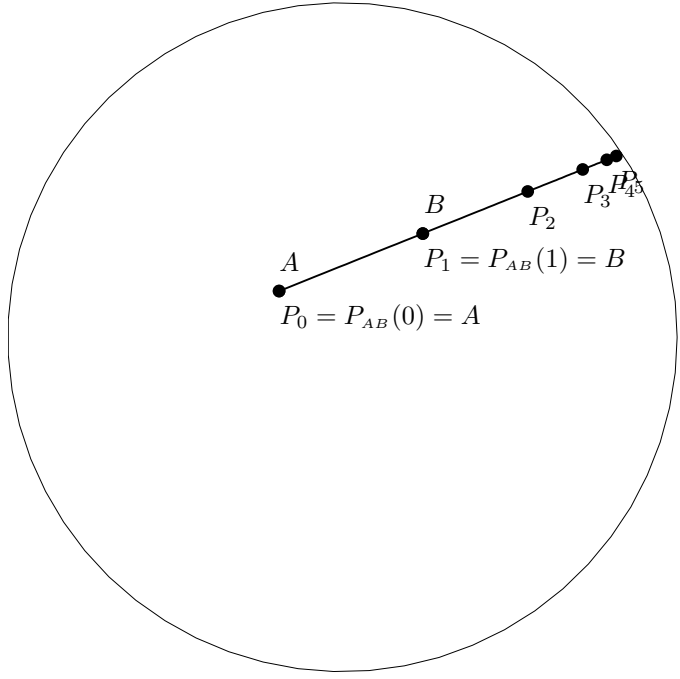}
 \caption[Mobius Gyroparallelogram Addition Law]{
The Euclidean ray $\PAB(t),~t\ge0$, with edge $A$ that passes through $B$
in a M\"obius gyrovector plane $(\Rstwo,\opm,\od)$
is shown along with several points of the sequence
of successive right gyrotranslations of $A$ by $Q=\om A\opm B$
that lies on the ray
(see also Fig.~6.4 in \cite[p.~168]{mybook01}).
\label{fig300m}}
\end{figure}


The points $P_k$, $k=1,2,3,\dots$ lie on the ray $\PAB(t)$, $t\ge0$,
as shown in Fig.~\ref{fig300m}, and as observed in
\cite[Figs.~6.3-6.5]{mybook01}.
Owing to the left loop property of gyrations in gyrogroup Axiom $(G5)$,
we have
\begin{equation} \label{frut04}
\gyr[\omm A \opm B, P_k] = \gyr[\omm A ,B]
\end{equation}
for all $k=0,1,2,3,\dots$. As an example,
the proof of \eqref{frut04} for $k=0$ and $k=1$ follows:

  
\begin{figure}[t]  
 \centering         
\psfrag{A}{$A$}
\psfrag{B}{$B$}
\psfrag{pah}[]{$\half\od A$}
\psfrag{pbh}[]{$B\om \half\od A$}
\psfrag{PAB}[]{$\PAB(t)$}
\psfrag{EA}[]{$E_{^A}$}
\psfrag{EB}[]{$E_{^B}$}
\psfrag{formula01}[]{$L_{^{\half\od A,B\om\half\od A}}(t)$}
\psfrag{formula02}[]{$-\infty<t<\infty$}
 \includegraphics[width=9cm]{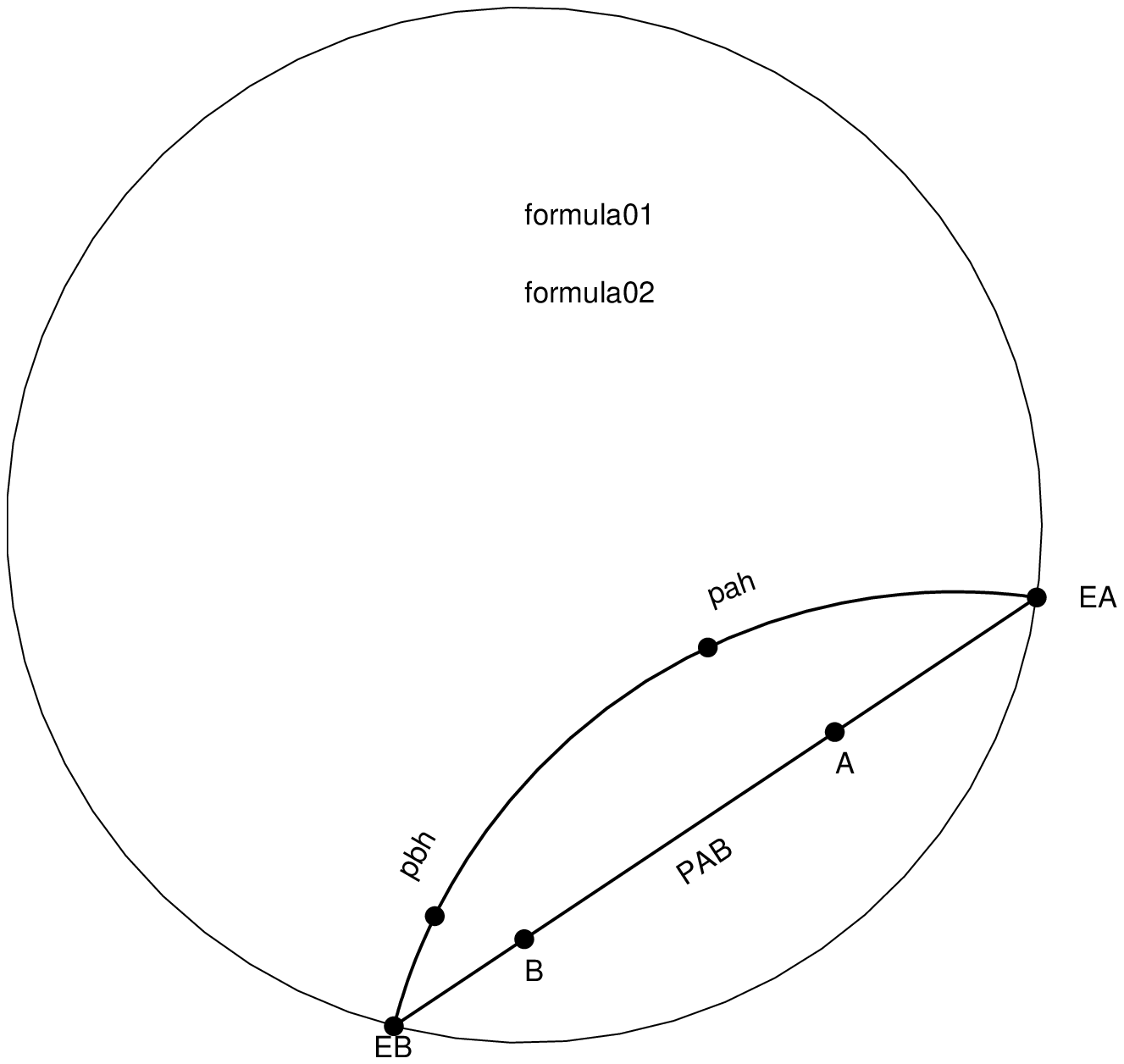}
\caption[Mobius Gyroparallelogram Addition Law]{
The Euclidean straight line (line, in short)
$\PAB(t),~-\infty<t<\infty$, \eqref{fuly02}, that passes through
the point $A$ and $B$
in a M\"obius gyrovector plane $(\Rstwo,\op,\od)$ is shown.
It is the supporting chord
of the gyroline that passes through the points $\half A$ and $B\om \half A$
in the M\"obius gyrovector plane.
The two endpoints of both the line and the gyroline,
corresponding to $t\rightarrow\pm\infty$, are $E_{^A}$ and $E_{^B}$.
\label{fig301m}}
\end{figure}


By the left loop property
of gyrations and by the gyrogroup left cancellation law \eqref{eq01b} we have
in any gyrogroup $(G,\op)$,
\begin{equation} \label{frrts}
\begin{split}
\gyr[\om A\op B, P_0] &= \gyr[\om A\op B, A] \\
&= \gyr[\om A\op B, A\op(\om A\op B)] \\
&= \gyr[\om A\op B, B] \\
&= \gyr[\om A, B]
\end{split}
\end{equation}
and
\begin{equation} \label{frrtt}
\begin{split}
\gyr[\om A\op B, P_1] &= \gyr[\om A\op B, A\op Q] \\
&= \gyr[\om A\op B, A\op (\om A\op B)] \\
&= \gyr[\om A\op B, B] \\
&= \gyr[\om A, B]
\end{split}
\end{equation}

The validity of \eqref{frut04} for all $k=0,1,2,3,\dots$
suggests the conjecture that \eqref{frut04} is valid not only for the
points of the sequence $\{P_0=A,P_1=B,P_2,P_3,\,\dots\}$
that lie on the ray $\PAB(t)$, $t\ge0$, as
shown in Fig.~\ref{fig300m}, but for all the points of the ray,
that is,
\begin{equation} \label{frut05}
\gyr[\omm A \opm B, \PAB(t)] = \gyr[\omm A ,B]
\end{equation}
for all $t\ge0$. Numerical experiments support the conjecture.

\section{Euclidean Barycentric Coordinates}
\label{dhyeuc}
\index{barycentric coordinates, Euclidean}

In order to set the stage for the introduction of hyperbolic  barycentric coordinates,
we present here the notion of Euclidean barycentric coordinates that
dates back to M\"obius' 1827 book titled {\it ``Der Barycentrische Calcul''}
(The Barycentric Calculus).
The word {\it barycenter} means
center of gravity, but the book is entirely geometrical and, hence, called
by Jeremy Gray \cite{gray93},
{\it M\"obius's Geometrical Mechanics}.
The 1827 M\"obius book is best remembered for introducing a new system
of coordinates, the {\it barycentric coordinates}.
The use of barycentric coordinates in Euclidean geometry is described in
\cite{mybook05,mybook06,yiu00}, and the
historical contribution of M\"obius' barycentric coordinates to vector analysis
is described in \cite[pp.~48--50]{crowe94}.

For any positive integer $N$,
let $m_k\in\Rb$ be $N$ given real numbers such that
\begin{equation} \label{gshmceucs}
\sum_{k=1}^{N} m_k ~\ne~0
\end{equation}
and let $A_k\in\Rn$ be $N$ given points
in the Euclidean $n$-space $\Rn$, $k=1,\ldots,N$.
Then, by obvious algebra, the equation
\begin{equation} \label{htkdbc01}
\sum_{k=1}^{N} m_k
\begin{pmatrix}  1 \\[6pt] A_k \end{pmatrix}
=
m_0
\begin{pmatrix}  1 \\[6pt] P \end{pmatrix}
\end{equation}
for the unknowns $m_0\in\Rb$ and $P\in\Rn$ possesses the unique solution
given by
\begin{equation} \label{htkdbc02}
m_0 = \sum_{k=1}^{N} m_k
\end{equation}
and
\begin{equation} \label{htkdbc03}
P = \frac{
\sum_{k=1}^{N} m_k A_k }{\sum_{k=1}^{N} m_k}
\end{equation}
satisfying for all $X\in\Rn$,
\begin{equation} \label{htkdbc04}
X+P = \frac{
\sum_{k=1}^{N} m_k (X+A_k)}{\sum_{k=1}^{N} m_k}
\end{equation}

Following M\"obius,
view \eqref{htkdbc03} as the representation of a point $P\in\Rn$
in terms of its {\it barycentric coordinates} $m_k$, $k=1,\ldots,N$,
with respect to the set of points
\begin{equation} \label{htkdbc04s}
S=\{A_1,\ldots,A_N\}
\end{equation}
Identity \eqref{htkdbc04}, then, insures that the
barycentric coordinate representation \eqref{htkdbc03} of $P$
with respect to the set $S$ is {\it covariant}\index{covariant}
(or, {\it invariant in form})
in the following sense.
The point $P$ and the points of the set $S$ of its
barycentric coordinate representation vary together under translations.
Indeed, a translation $X+A_k$ of $A_k$ by $X$, $k=1,\ldots,N$,
on the right-hand side of \eqref{htkdbc04} results in the translation
$X+P$ of $P$ by $X$ on the left-hand side of \eqref{htkdbc04}.

In order to insure that barycentric coordinate representations
with respect to a set $S$ are unique, we require the set $S$ to be pointwise independent.

\index{pointwise independence, Euclidean}
\begin{definition}\label{defptws}
{\bf (Euclidean Pointwise Independence).}
A set $S$ of $N$ points,
$S=\{A_1,\dots,A_N\}$, in $\Rn$, $n\ge2$, is {\it pointwise independent}
if the $N-1$ vectors
$-A_1+A_k$, $k=2,\dots,N$, are linearly independent in $\Rn$.
\end{definition}

We are now in the position to present
the formal definition of Euclidean barycentric coordinates, as motivated by
mass and center of momentum velocity of Newtonian particle systems.

\begin{definition}\label{defhkbde}
{\bf (Barycentric Coordinates).}
\index{barycentric coordinates, Euclidean}
Let
\begin{equation} \label{heknd}
S=\{A_1,\dots,A_N\}
\end{equation}
be a pointwise independent set of $N$ points in $\Rn$.
The real numbers $m_1,\dots,m_N$, satisfying
\begin{equation} \label{eq13ersw04}
\sum_{k=1}^{N} m_k \ne 0
\end{equation}
are barycentric coordinates of a point $P\in\Rn$ with respect to the set $S$ if
\begin{equation} \label{eq13ersw01}
P = \frac{
\sum_{k=1}^{N} m_k A_k
}{
\sum_{k=1}^{N} m_k
}
\end{equation}

Barycentric coordinates are homogeneous in the sense that
the barycentric coordinates $(m_1,\dots,m_N)$ of the point $P$
in \eqref{eq13ersw01} are equivalent to the barycentric coordinates
$(\lambda m_1,\dots,\lambda m_N)$
for any real nonzero number $\lambda\in\Rb$, $\lambda\ne0$.
Since in barycentric coordinates only ratios of coordinates are
relevant, the barycentric coordinates
$(m_1,\dots,m_N)$ are also written as $(m_1\!:\,\dots\,\!:\!m_N)$.

Barycentric coordinates that are normalized by the condition
\begin{equation} \label{eq13ersw03}
\sum_{k=1}^{N} m_k = 1
\end{equation}
are called {\it special barycentric coordinates}.\index{barycentric coordinates, special}

Equation \eqref{eq13ersw01} is said to be
the (unique) barycentric coordinate representation of $P$ with respect to the set $S$.
\end{definition}

\index{barycentric coordinates, covariance}
\begin{theorem}{~~}\label{thmfkvne}
{\bf (Covariance of Barycentric Coordinate Representations).}
Let
\begin{equation} \label{tndfkv1a}
P = \frac{
\sum_{k=1}^{N} m_k A_k
}{
\sum_{k=1}^{N} m_k
}
\end{equation}
be the barycentric coordinate representation of a point $P\in\Rn$
in a Euclidean $n$-space $\Rn$ with respect to a
pointwise independent set $S=\{ A_1,\ldots,A_N\}\subset\Rn$.
The barycentric coordinate representation \eqref{tndfkv1a}
is covariant, that is,

\begin{equation} \label{tndfkv2a}
  X + P = \frac{
\sum_{k=1}^{N} m_k (X + A_k)
}{
\sum_{k=1}^{N} m_k
}
\end{equation}
for all $X\in\Rn$, and
\begin{equation} \label{tndfkv3a}
RP = \frac{
\sum_{k=1}^{N} m_k R A_k
}{
\sum_{k=1}^{N} m_k
}
\end{equation}
for all $R\in\son$.
\end{theorem}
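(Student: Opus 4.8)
The plan is to derive both covariance identities directly from the defining representation \eqref{tndfkv1a} by exploiting the fact that translations and the maps in $\son$ act compatibly with the affine-combination structure of a barycentric representation. Since $\sum_{k=1}^{N} m_k \ne 0$ by hypothesis, every quotient written below is well defined, so no case analysis is needed.

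First I would prove the translational covariance \eqref{tndfkv2a}. Fix $X\in\Rn$. Expanding the numerator on the right-hand side of \eqref{tndfkv2a} by distributing the finite sum gives
\[
\sum_{k=1}^{N} m_k (X+A_k) = \Bigl(\sum_{k=1}^{N} m_k\Bigr) X + \sum_{k=1}^{N} m_k A_k .
\]
Dividing by $\sum_{k=1}^{N} m_k$ and using \eqref{tndfkv1a} to identify the second summand yields $X + P$, which is exactly \eqref{tndfkv2a}.

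Next I would prove the rotational covariance \eqref{tndfkv3a}. The key observation is that each $R\in\son$ is a \emph{linear} self-map of $\Rn$, hence it commutes with forming finite linear combinations. Applying $R$ to both sides of \eqref{tndfkv1a} and pulling $R$ through the scalar $1/\sum_{k=1}^{N} m_k$ and through the finite sum gives
\[
RP = R\!\left(\frac{\sum_{k=1}^{N} m_k A_k}{\sum_{k=1}^{N} m_k}\right) = \frac{\sum_{k=1}^{N} m_k\, R A_k}{\sum_{k=1}^{N} m_k},
\]
which is \eqref{tndfkv3a}.

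There is no substantive obstacle here: the theorem is an immediate consequence of the distributivity of vector addition over finite sums (for \eqref{tndfkv2a}) and of the linearity of the maps in $\son$ (for \eqref{tndfkv3a}). The only point worth stating explicitly is that the argument for \eqref{tndfkv3a} uses linearity of $R$ alone, not orthogonality; orthogonality is what additionally makes $R$ an isometry, but it is irrelevant to covariance. I would close by remarking that \eqref{tndfkv2a} and \eqref{tndfkv3a} together express that the homogeneous barycentric coordinates $(m_1\!:\dots\!:m_N)$ of $P$ with respect to $S$ are unchanged when $P$ and all the points of $S$ are subjected to a common Euclidean motion, so that the representation is covariant under the group of motions of $\Rn$.
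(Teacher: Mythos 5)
Your proposal is correct and follows essentially the same route as the paper, which simply notes that the result is immediate because rotations $R\in\son$ are linear maps of $\Rn$ (the translational case being the obvious distribution of the finite sum, just as you write). Your remark that only linearity of $R$, not orthogonality, is used is a fair observation but does not change the substance of the argument.
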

\begin{proof}
The proof is immediate, noting that rotations $R\in\son$ of $\Rn$ about
its origin are linear maps of $\Rn$.
\end{proof}

Following the vision of Felix Klein in his
{\it Erlangen Program} \cite{wright02}, it is owing to the covariance
with respect to translations and rotations that
barycentric coordinate representations possess geometric significance.
Indeed, translations and rotations in Euclidean geometry form the
{\it group of motions} of the geometry, and according to
Felix Klein's Erlangen Program, a geometric property is a property that
remains invariant in form under the motions of the geometry.

\section{Hyperbolic Barycentric, Gyrobarycentric, Coordinates}
\label{dhyein}
\index{barycentric coordinates, hyperbolic}
\index{gyrobarycentric coordinates}

Guided by analogies with Sec.~\ref{dhyeuc},
in this section we introduce barycentric coordinates into hyperbolic geometry
\cite{barycentric09,mybook06,mybook05}.

\index{pointwise independence, hyperbolic}
\begin{definition}\label{defptein}
{\bf (Hyperbolic Pointwise Independence).}
A set $S$ of $N$ points
$S=\{A_1,\dots,A_N\}$ in the ball $\Rsn$, $n\ge2$, is {\it pointwise independent}
if the $N-1$ gyrovectors in $\Rsn$,
$\om A_1 \op A_k$, $k=2,\dots,N$, considered as vectors in $\Rn\supset\Rsn$,
are linearly independent.
\end{definition}

We are now in the position to present
the formal definition of gyrobarycentric coordinates, that is,
hyperbolic barycentric coordinates, as motivated by
the notions of relativistic mass and
center of momentum velocity in Einstein's special relativity theory.
Gyrobarycentric coordinates, fully analogous to barycentric coordinates,
thus emerge whem Einstein's relativistic mass meets
the hyperbolic geometry of Bolyai and Lobachevsky \cite{ungarmass11}.

\begin{definition}\label{defhkbdeein}
{\bf (Gyrobarycentric Coordinates).}\index{barycentric coordinates, hyperbolic}
\index{gyrobarycentric coordinates}
Let
\begin{equation} \label{ryuhms}
S=\{A_1,\dots,A_N\}
\end{equation}
be a pointwise independent set of $N$ points in $\Rsn$.
The real numbers $m_1,\dots,m_N$, satisfying
\begin{equation} \label{eq13ersw04ein}
\sum_{k=1}^{N} m_k \gamma_{  A_k}^{\phantom{O}} > 0
\end{equation}
are gyrobarycentric coordinates of a point $P\in\Rsn$ with respect to the set $S$ if
\begin{equation} \label{eq13ersw01ein}
P = \frac{
\sum_{k=1}^{N} m_k \gamma_{  A_k}^{\phantom{O}}   A_k
}{
\sum_{k=1}^{N} m_k \gamma_{  A_k}^{\phantom{O}}
}
\end{equation}
Gyrobarycentric coordinates are homogeneous in the sense that
the gyrobarycentric coordinates $(m_1,\dots,m_N)$ of the point $P$
in \eqref{eq13ersw01ein} are equivalent to the gyrobarycentric coordinates
$(\lambda m_1,\dots,\lambda m_N)$
for any real nonzero number $\lambda\in\Rb$, $\lambda\ne0$.
Since in gyrobarycentric coordinates only ratios of coordinates are
relevant, the gyrobarycentric coordinates
$(m_1,\dots,m_N)$ are also written as $(m_1\!:\,\dots\,\!:\!m_N)$.

Gyrobarycentric coordinates that are normalized by the condition
\begin{equation} \label{eq13ersw03ein}
\sum_{k=1}^{N} m_k = 1
\end{equation}
are called {\it special gyrobarycentric coordinates}.\index{barycentric coordinates, special}

Equation \eqref{eq13ersw01ein} is said to be
the gyrobarycentric coordinate representation of $P$ with respect
to the set $S$.

Finally, the constant of the gyrobarycentric coordinate representation of $P$ in
\eqref{eq13ersw01ein} is $m_0>0$, given by
\begin{equation} \label{htkdbc02ein4}
m_0 \phantom{i} = \phantom{i} \sqrt{
\left( \sum_{k=1}^{N} m_k \right)^2 +
2\sum_{\substack{j,k=1\\j<k}}^N m_j  m_k
(\gamma_{\om  A_j \op   A_k}^{\phantom{O}} -1)
}
\end{equation}
\end{definition}

\index{gyrobarycentric coordinates, gyrocovariance}
\begin{theorem}{~~}\label{thmfkvme}
{\bf (Gyrocovariance of Gyrobarycentric Coordinate Representations).}
Let
\begin{subequations}
\begin{equation} \label{hkrfkv1a}
P = \frac{
\sum_{k=1}^{N} m_k \gamma_{  A_k}^{\phantom{O}}   A_k
}{
\sum_{k=1}^{N} m_k \gamma_{  A_k}^{\phantom{O}}
}
\end{equation}
be a gyrobarycentric coordinate representation of a point $P\in\Rsn$
in an Einstein gyrovector space $(\Rsn,\op,\od)$ with respect to a
pointwise independent set $S=\{ A_1,\ldots,A_N\}\subset\Rsn$.

Then
\begin{equation} \label{hkrfkv1b}
\gamma_{P}^{\phantom{O}} = \frac{
\sum_{k=1}^{N} m_k \gamma_{  A_k}^{\phantom{O}}
}{
m_0
}
\end{equation}
and
\begin{equation} \label{hkrfkv1c}
\gamma_{P}^{\phantom{O}} P = \frac{
\sum_{k=1}^{N} m_k \gamma_{  A_k}^{\phantom{O}} A_k
}{
m_0
}
\end{equation}
where $m_0>0$ is the constant of the
gyrobarycentric coordinate representation \eqref{hkrfkv1a} of $P$,
given by
\begin{equation} \label{hkrfkv1d}
m_0 \phantom{i} = \phantom{i} \sqrt{
\left( \sum_{k=1}^{N} m_k \right)^2 +
2\sum_{\substack{j,k=1\\j<k}}^N m_j  m_k
(\gamma_{\om  A_j \op   A_k}^{\phantom{O}} -1)
}
\end{equation}

Furthermore, the gyrobarycentric coordinate representation \eqref{hkrfkv1a}
and its associated identities in \eqref{hkrfkv1b}\,--\,\eqref{hkrfkv1d}
are gyrocovariant, that is,

\end{subequations}
\begin{subequations}
\begin{equation} \label{hkrfkv2a}
  X \op  P = \frac{
\sum_{k=1}^{N} m_k \gamma_{  X \op   A_k}^{\phantom{O}} (  X \op   A_k)
}{
\sum_{k=1}^{N} m_k \gamma_{  X \op   A_k}^{\phantom{O}}
}
\end{equation}
\begin{equation} \label{hkrfkv2b}
\gamma_{ X  \op   P}^{\phantom{O}} = \frac{
\sum_{k=1}^{N} m_k \gamma_{ X  \op   A_k}^{\phantom{O}}
}{
m_0
}
\end{equation}
\begin{equation} \label{hkrfkv2c}
\gamma_{ X  \op P}^{\phantom{O}} ( X  \op P)= \frac{
\sum_{k=1}^{N} m_k \gamma_{ X  \op   A_k}^{\phantom{O}}
( X  \op   A_k)
}{
m_0
}
\end{equation}
\begin{equation} \label{hkrfkv2d}
m_0 \phantom{i} = \phantom{i} \sqrt{
\left( \sum_{k=1}^{N} m_k \right)^2 +
2\sum_{\substack{j,k=1\\j<k}}^N m_j  m_k
(\gamma_{\om( X \op  A_j)\op( X \op  A_k)}^{\phantom{O}} -1)
}
\end{equation}
for all $X\in\Rsn$, and
\end{subequations}
\begin{subequations}
\begin{equation} \label{hkrfkv3a}
RP = \frac{
\sum_{k=1}^{N} m_k \gamma_{R A_k}^{\phantom{O}} R A_k
}{
\sum_{k=1}^{N} m_k \gamma_{R A_k}^{\phantom{O}}
}
\end{equation}
\begin{equation} \label{hkrfkv3b}
\gamma_{RP}^{\phantom{O}} = \frac{
\sum_{k=1}^{N} m_k \gamma_{R A_k}^{\phantom{O}}
}{
m_0
}
\end{equation}
\begin{equation} \label{hkrfkv3c}
\gamma_{RP}^{\phantom{O}} (RP)= \frac{
\sum_{k=1}^{N} m_k \gamma_{R A_k}^{\phantom{O}}
(R A_k)
}{
m_0
}
\end{equation}
\begin{equation} \label{hkrfkv3d}
m_0 \phantom{i} = \phantom{i} \sqrt{
\left( \sum_{k=1}^{N} m_k \right)^2 +
2\sum_{\substack{j,k=1\\j<k}}^N m_j  m_k
(\gamma_{\om(R A_j)\op(R A_k)}^{\phantom{O}} -1)
}
\end{equation}
for all $R\in\son$.
\end{subequations}
\end{theorem}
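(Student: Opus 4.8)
The plan is to route everything through the ``mass--shell lift'' $\phi\colon\Rsn\to\Rb\times\Rn$, $\phi(\vb)=(\gamma_\vb,\gamma_\vb\vb)$, equipped with the symmetric bilinear form $\langle(t,\ab),(u,\bb)\rangle:=tu-(\ab\ccdot\bb)/s^2$ on $\Rb\times\Rn$, where $s$ is the radius of the ball. From $\gamma_\vb=(1-\|\vb\|^2/s^2)^{-1/2}$ and the Einstein gamma identity \eqref{eqgupv00} one has, for all $\ub,\vb\in\Rsn$, the two facts $\langle\phi(\vb),\phi(\vb)\rangle=1$ and $\langle\phi(\ub),\phi(\vb)\rangle=\gamma_\ub\gamma_\vb(1-\ub\ccdot\vb/s^2)=\gamma_{\om\ub\op\vb}$, the last expression being symmetric in $\ub$ and $\vb$. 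The key point is that the whole theorem follows from the single ``master identity'' $\sum_{k=1}^N m_k\phi(A_k)=m_0\phi(P)$ once one knows how $\phi$ intertwines with left gyrotranslations and rotations.

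First I would establish the $P$-level statements \eqref{hkrfkv1b}--\eqref{hkrfkv1d} and the master identity itself. By the gyrobarycentric representation \eqref{hkrfkv1a}, $\sum_k m_k\gamma_{A_k}A_k=\bigl(\sum_k m_k\gamma_{A_k}\bigr)P$, so $\sum_k m_k\phi(A_k)=\bigl(\sum_k m_k\gamma_{A_k}\bigr)(1,P)=\bigl(\sum_k m_k\gamma_{A_k}\bigr)\gamma_P^{-1}\phi(P)$; putting $m_0:=\bigl(\sum_k m_k\gamma_{A_k}\bigr)/\gamma_P$ gives the master identity, and $m_0>0$ since $\sum_k m_k\gamma_{A_k}>0$ by \eqref{eq13ersw04ein} and $\gamma_P>0$ because $P\in\Rsn$. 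Pairing the master identity with itself under $\langle\cdot,\cdot\rangle$ and using $\langle\phi(P),\phi(P)\rangle=1$ gives $m_0^2=\sum_{j,k}m_jm_k\langle\phi(A_j),\phi(A_k)\rangle=\sum_{j,k}m_jm_k\gamma_{\om A_j\op A_k}$; separating the diagonal, where $\gamma_{\om A_j\op A_j}=\gamma_{\zerb}=1$, from the off-diagonal terms, which pair up by the symmetry just noted, rewrites the right-hand side as $\bigl(\sum_k m_k\bigr)^2+2\sum_{j<k}m_jm_k(\gamma_{\om A_j\op A_k}-1)$, which is \eqref{hkrfkv1d}; reading the time and space components of the master identity then gives \eqref{hkrfkv1b} and \eqref{hkrfkv1c}.

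Next comes the covariance. The one input that needs verification rather than mere manipulation is that the pair $(\gamma_\vb,\gamma_\vb\vb)$ transforms linearly under a fixed left gyrotranslation: multiplying \eqref{eq01} (with $X$ in the role of $\ub$) through by $\gamma_{X\op\vb}$ and invoking \eqref{eqgupv00} gives $\gamma_{X\op\vb}=\gamma_X\gamma_\vb+(\gamma_X/s^2)\,X\ccdot(\gamma_\vb\vb)$ and $\gamma_{X\op\vb}(X\op\vb)=\gamma_\vb\vb+(\gamma_X X)\gamma_\vb+\frac{\gamma_X^2}{s^2(\gamma_X+1)}\bigl(X\ccdot(\gamma_\vb\vb)\bigr)X$, that is $\phi(X\op\vb)=L_X\phi(\vb)$ for a fixed linear map $L_X$ of $\Rb\times\Rn$ (the Lorentz boost generated by $X$). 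Applying $L_X$ to the master identity yields $\sum_k m_k\phi(X\op A_k)=m_0\phi(X\op P)$; its time component is \eqref{hkrfkv2b}, its space component is \eqref{hkrfkv2c}, and since $\sum_k m_k\gamma_{X\op A_k}=m_0\gamma_{X\op P}>0$ one may divide to obtain \eqref{hkrfkv2a}. For \eqref{hkrfkv2d} one notes that $\om(X\op A_j)\op(X\op A_k)=\gyr[X,A_j](\om A_j\op A_k)$ (which follows from left gyroassociativity together with the left cancellation law \eqref{leftcanc}), so that, gyrations being isometries by \eqref{eq005a}, $\gamma_{\om(X\op A_j)\op(X\op A_k)}=\gamma_{\om A_j\op A_k}$, whence \eqref{hkrfkv2d} coincides with \eqref{hkrfkv1d}; alternatively one repeats the norm computation of the previous paragraph with $X\op A_k$ in place of $A_k$, starting from $m_0^2=m_0^2\gamma_{X\op P}^2(1-\|X\op P\|^2/s^2)$ and using \eqref{hkrfkv2b}--\eqref{hkrfkv2c}. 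The rotation statements \eqref{hkrfkv3a}--\eqref{hkrfkv3d} are the same argument with $L_X$ replaced by the block map $\widetilde R(t,\ab):=(t,R\ab)$, which is linear and satisfies $\phi(R\vb)=\widetilde R\phi(\vb)$ because $\gamma_{R\vb}=\gamma_\vb$; here \eqref{hkrfkv3d} equals \eqref{hkrfkv1d} outright, since $\gamma_{\om(RA_j)\op(RA_k)}=\gamma_{RA_j}\gamma_{RA_k}\bigl(1-(RA_j)\ccdot(RA_k)/s^2\bigr)=\gamma_{\om A_j\op A_k}$ by the gamma identity and the invariance of $\ccdot$ under $R\in\son$.

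Thus the proof is in the main organized bookkeeping around the master identity, and the only genuinely substantive ingredient is the linear transformation law $\phi(X\op\vb)=L_X\phi(\vb)$ --- equivalently the classical fact that Einstein velocity addition is the composition law of Lorentz boosts --- whose proof reduces to the algebraic rearrangement of \eqref{eq01} and \eqref{eqgupv00} indicated above; everything else is component extraction and the diagonal/off-diagonal splitting of a quadratic form in the $m_k$. I would also remark, though the theorem does not require it, that the translated set $\{X\op A_k\}$ and the rotated set $\{RA_k\}$ are again pointwise independent in the sense of Def.~\ref{defptein}, because $\om(X\op A_1)\op(X\op A_k)=\gyr[X,A_1](\om A_1\op A_k)$ while $\gyr[X,A_1]$ and $R$ are invertible linear maps of $\Rn$.
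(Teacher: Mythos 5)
Your proposal is correct, but note that the paper itself contains no proof of Theorem \ref{thmfkvme}: it only defers to \cite[Theorem 4.4]{mybook06} and \cite[Theorem 4.6]{mybook05}, so there is no in-text argument to compare against. Your self-contained argument is, in substance, the same relativistic mechanism those references exploit: the point $\vb\mapsto(\gamma_\vb,\gamma_\vb\vb)$ transforms linearly under left gyrotranslations (Lorentz boosts) and rotations, so the single linear relation $\sum_k m_k\phi(A_k)=m_0\phi(P)$ carries all of \eqref{hkrfkv1b}--\eqref{hkrfkv3d}. The details check out: the bilinear form computation with $\langle\phi(\ub),\phi(\vb)\rangle=\gamma_{\om\ub\op\vb}$ follows from \eqref{eqgupv00} applied to $\om\ub\op\vb$, and the diagonal/off-diagonal split correctly yields \eqref{hkrfkv1d} (with $m_0>0$ guaranteed by \eqref{eq13ersw04ein} and $\gamma_P>0$); the boost law $\phi(X\op\vb)=L_X\phi(\vb)$ is exactly what multiplying \eqref{eq01} by $\gamma_{X\op\vb}$ and invoking \eqref{eqgupv00} gives; and the step handling \eqref{hkrfkv2d} via $\om(X\op A_j)\op(X\op A_k)=\gyr[X,A_j](\om A_j\op A_k)$ (left gyroassociativity plus left cancellation \eqref{leftcanc}) together with the isometry property \eqref{eq005a} is the standard gyrotranslation theorem and is valid; the rotation case is immediate from $\gamma_{R\vb}=\gamma_\vb$. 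The only cosmetic caveat is notational: this section of the paper writes the ball as $\Rsn$ while \eqref{eq01} and \eqref{eqgupv00} are stated with the constant $c$; your uniform use of $s$ is harmless. What your route buys, compared with the paper's bare citation, is an explicit and short verification in which the only substantive ingredient is isolated cleanly (the linearity of the boost action on $(\gamma_\vb,\gamma_\vb\vb)$), the rest being component bookkeeping.
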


The proof of Theorem \ref{thmfkvme} is presented in
\cite[Theorem 4.4]{mybook06} and \cite[Theorem 4.6]{mybook05}.

\begin{remark}\label{remfhk}
It is assumed in Theorem \ref{thmfkvme} that the point
$P$ in \eqref{eq13ersw01ein}
lies inside the ball $\Rsn$, implying that $m_0^2>0$ and that the gamma factor
$\gamma_P$ of $P$ is a real number.
The constant $m_0$ of a gyrobarycentric coordinate representation
\eqref{eq13ersw01ein} of a point $P$ determines whether $P$ lies
inside the ball $\Rsn$.

If the coefficients $m_k$, $k=1,\ldots,N$, in the
gyrobarycentric coordinate representation \eqref{eq13ersw01ein} of $P$
are all positive or all negative,
then the point $P$ lies in the convex span\index{convex span}
of the points of the set $S$, that is, $P$ lies inside the $(N-1)$-gyrosimplex
$A_1\ldots A_N$. This gyrosimplex, in turn, lies inside the ball $\Rsn$.
\begin{itemize}
\item[$(1)$]
The point $P$ lies inside the $(N-1)$-gyrosimplex $A_1\ldots A_N$ if and only if
the coefficients $m_k$, $k=1,\ldots,N$, of its gyrobarycentric coordinate representation
\eqref{hkrfkv1a} are all positive or all negative. Clearly, in this case $m_0^2>0$.
\end{itemize}

Otherwise, when all the coefficients $m_k$ are nonzero but do not have
the same sign, the location of $P$ has the following three possibilities
that correspond to whether the gamma factor \eqref{hkrfkv1b} of $P$
is real, infinity, or imaginary:
\begin{itemize}
\item[$(2)$]
The point $P$ does not lie inside the $(N-1)$-gyrosimplex $A_1\ldots A_N$, but it
lies inside the ball $\Rsn$. In this case the gamma factor
$\gamma_P$ of $P$ is a real number and, hence, $m_0^2>0$.
\item[$(3)$]
The point $P$ lies on the boundary of the ball $\Rsn$ if and only if the gamma factor
$\gamma_P$ of $P$ is undefined, $\gamma_P=\infty$, so that $m_0^2=0$.
\item[$(4)$]
The point $P\inn\Rn$ does not lie in the ball $\Rsn$ or on its boundary
if and only if the gamma factor $\gamma_P$ of $P$ is purely imaginary,
so that $m_0^2<0$.
\end{itemize}
\end{remark}

Examples for the use of gyrobarycentric coordinates for the
determination of several hyperbolic triangle centers are found
in \cite{mybook06,mybook05}.

Employing the technique of gyrobarycentric coordinate representations,
we will now determine the end points
$E_{^A}$ and $E_{^B}$ of a gyroline $\LAB(t)$ in an
Einstein gyrovector space $(\Rsn,\op,\od)$,
shown in Fig.~\ref{fig174b3enm}, p.~\pageref{fig174b3enm}.

Let $A_1,A_2\in\Rsn$ be two distinct points of
an Einstein gyrovector space $(\Rcn,\op,\od)$,
and let $P$ be a generic point on the gyroline, \eqref{frut01},
\begin{equation} \label{grut00}
P_{12}(t) = A_1\op(\ome A_1\op A_2)\od t
\end{equation}
$t\in\Rb$, that passes through these two points. Furthermore, let
\begin{equation} \label{grut01}
P = \frac{
m_1 \gamma_{  A_1}^{\phantom{O}} A_1 + m_2 \gamma_{  A_2}^{\phantom{O}} A_2
}{
m_1 \gamma_{  A_1}^{\phantom{O}} + m_2 \gamma_{  A_2}^{\phantom{O}}
}
\end{equation}
be the gyrobarycentric coordinate representation of $P$ with respect
to the pointwise independent set $S=\{A_1,A_2\}$, where the
gyrobarycentric coordinates $m_1$ and $m_2$ are to be determined.
Owing to the homogeneity of gyrobarycentric coordinates, we can select $m_2=-1$,
obtaining from \eqref{grut01} the gyrobarycentric coordinate representation
\begin{equation} \label{grut02}
P = \frac{
m   \gamma_{  A_1}^{\phantom{O}} A_1 -\gamma_{  A_2}^{\phantom{O}} A_2
}{
m   \gamma_{  A_1}^{\phantom{O}} -\gamma_{  A_2}^{\phantom{O}}
}
\end{equation}

According to Def.~\ref{defptein} of the gyrobarycentric coordinate representation
of $P$ in \eqref{eq13ersw01ein} and its constant $m_0$ in \eqref{htkdbc02ein4},
the constant $m_0$ of the gyrobarycentric coordinate representation of $P$
satisfies the equation
\begin{equation} \label{grut03}
\begin{split}
m_0^2 &= m_1^2 + m_2^2 + 2m_1m_2 \gamma_{\om  A_1 \op A_2}^{\phantom{O}}
\\&= m^2+1+2m\gammaab
\end{split}
\end{equation}
where we use the convenient notation
\begin{equation} \label{htrid}
\begin{split}
\ab_{ij} &= \om A_i\op A_j \\
\gamma_{ij}^{\phantom{O}} &= \gamma_{\ab_{ij}}^{\phantom{O}}
\end{split}
\end{equation}
$i,j\in\NN$.

As remarked in Item (3) of Remark \ref{remfhk}, the point $P$ lies on the
boundary of the ball $\Rsn$ if and only if $m_0=0$, that is by \eqref{grut03},
if and only if
\begin{equation} \label{grut04}
m^2-2m\gammaab+1=0
\end{equation}
The two solutions of \eqref{grut04} are
\begin{equation} \label{grut05}
\begin{split}
m &= \gammaab + \sqrt{\gamma_{12}^2-1}
\\[8pt]
m &= \gammaab - \sqrt{\gamma_{12}^2-1}
\end{split}
\end{equation}

The substitution into \eqref{grut02} of each of the two solutions \eqref{grut05}
gives the two endpoints $E_{^{A_1}}$ and $E_{^{A_2}}$ of the
gyroline $P_{12}(t)$ in \eqref{grut00},
\begin{equation} \label{grut06}
\begin{split}
E_{^{A_1}} &= \frac{
(\gammaab+\sqrt{\gamma_{12}^2-1}) \gamma_{A_1}^{\phantom{O}}A_1 - \gamma_{A_2}^{\phantom{O}}A_2
}{
(\gammaab+\sqrt{\gamma_{12}^2-1}) \gamma_{A_1}^{\phantom{O}}    - \gamma_{A_2}^{\phantom{O}}
}
\\[8pt]
E_{^{A_2}} &= \frac{
(\gammaab-\sqrt{\gamma_{12}^2-1}) \gamma_{A_1}^{\phantom{O}}A_1 - \gamma_{A_2}^{\phantom{O}}A_2
}{
(\gammaab-\sqrt{\gamma_{12}^2-1}) \gamma_{A_1}^{\phantom{O}}    - \gamma_{A_2}^{\phantom{O}}
}
\end{split}
\end{equation}
which are shown in Fig.~\ref{fig174b3enm}, p.~\pageref{fig174b3enm}, for
$A_1 = A$ and $A_2 = B$.

The expressions for $\om A_1\op E_{^{A_1}}$ and $\om A_1\op E_{^{A_2}}$
that follow from \eqref{grut06} by means of the
gyrocovariance identity \eqref{hkrfkv2a} in Theorem \ref{thmfkvme} are particularly
elegant.
Indeed, by the gyrocovariance identity \eqref{hkrfkv2a} with $X=\om A_1$, applied to
each of the two equations in \eqref{grut06}, we have
\begin{equation} \label{gurimd}
\begin{split}
\om A_1 \op E_{^{A_1}} &= \frac{
(\gammaab + \sqrt{\gamma_{12}^2-1})
\gamma_{\om A_1 \op A_1}^{\phantom{O}} (\om A_1 \op A_1)
-
\gamma_{\om A_1 \op A_2}^{\phantom{O}} (\om A_1 \op A_2)
}{
(\gammaab + \sqrt{\gamma_{12}^2-1})
\gamma_{\om A_1 \op A_1}^{\phantom{O}}
-
\gamma_{\om A_1 \op A_2}^{\phantom{O}}
}
\\[4pt]
&= \frac{-\gammaab\ab_{12}}{(\gammaab+\sqrt{\gamma_{12}^2-1})-\gammaab}
\\[4pt]
&= \om\frac{\gammaab\ab_{12}}{\sqrt{\gamma_{12}^2-1}}
\\[8pt]
\om A_1 \op E_{^{A_2}}
&= \phantom{\om}\frac{\gammaab\ab_{12}}{\sqrt{\gamma_{12}^2-1}}
\end{split}
\end{equation}
where we use the notation \eqref{htrid}, noting that
$\om A_1 \op A_1=\zerb$ and
$\gamma_{\om A_1 \op A_1}^{\phantom{O}}=\gamma_{\zerb}^{\phantom{O}}=1$.

The equations in \eqref{gurimd} imply, by means of the
left cancellation law \eqref{eq01b},
\begin{equation} \label{gurime}
\begin{split}
 E_{^{A_1}} &= A_1 \om\, \frac{\gammaab\ab_{12}}{\sqrt{\gamma_{12}^2-1}}
\\
 E_{^{A_2}} &= A_1 \op\, \frac{\gammaab\ab_{12}}{\sqrt{\gamma_{12}^2-1}}
\end{split}
\end{equation}

Interestingly, \eqref{gurime} remains invariant in form under
the isomorphism \eqref{eq3ejksa}, as seen from \eqref{gtres}.
Accordingly, the equations in \eqref{gurime} with $\op=\ope$ being Einstein addition are
used in calculating the endpoints of an Einstein gyroline
in Fig.~\ref{fig174b3enm}, p.~\pageref{fig174b3enm}, and
the same equations \eqref{gurime}, but with $\op=\opm$ being M\"obius addition are
used in calculating the endpoints of a M\"obius gyroline
in Fig.~\ref{fig174a1m}, p.~\pageref{fig174a1m}.

In Fig.~\ref{fig303m}, p.~\pageref{fig303m}, the points
$A,B\in(\Rctwo,\opm,\od)$ of a M\"obius gyrovector plane
are shown along with their respective isomorphic images
$2\od A,2\od B\in(\Rctwo,\ope,\od)$ of an Einstein gyrovector plane,
under the isomorphism \eqref{hdkeb1}.
Indeed, as expected, Fig.~\ref{fig303m} indicates that the
endpoints $E_{^A}$ and $E_{^B}$ of
\begin{enumerate}
\item \label{fvlb1}
the M\"obius gyroline (a circular arc) through the points $A$ and $B$, and of
\item \label{fvlb2}
the Einstein gyroline (a chord) through the points $2\od A$ and $2\od B$,
\end{enumerate}
are coincident.

This Chapter appears in \cite{ungar12s}.

\end{document}